\spnewtheorem{prop}{Property}{\bfseries}{\itshape} % to make property stand out, like lemmata
\newcommand{\remove}[1]{}
\renewenvironment{proof}
{{\em Proof.\ }}{\hspace*{\fill}$\Box$\par\vspace{2mm}}
\renewcommand{\epsilon}{\varepsilon}
\begin{document}
\title{LR-Drawings of Ordered Rooted Binary Trees and\\ Near-Linear Area Drawings of Outerplanar Graphs 
\thanks{Research partially supported by MIUR Project MODE.}
}
\author{Fabrizio Frati, Maurizio Patrignani, Vincenzo Roselli}
\institute{Dipartimento di Ingegneria, University Roma Tre, Italy\\
\email{\{frati,patrigna,roselli\}@dia.uniroma3.it}}
\maketitle

\begin{abstract}
We study a family of algorithms, introduced by Chan~{\em [SODA 1999]}, for drawing ordered rooted binary trees. Any algorithm in this family (which we name an {\em LR-algorithm}) takes in input an ordered rooted binary tree $T$ with a root $r_T$, and recursively constructs drawings $\Gamma_L$ of the left subtree $L$ of $r_T$ and $\Gamma_R$ of the right subtree $R$ of $r_T$; then either it applies the {\em left rule}, i.e., it places $\Gamma_L$ one unit below and to the left of $r_T$, and $\Gamma_R$ one unit below $\Gamma_L$ with the root of $R$ vertically aligned with $r_T$, or it applies the {\em right rule}, i.e., it places $\Gamma_R$ one unit below and to the right of $r_T$, and $\Gamma_L$ one unit below $\Gamma_R$ with the root of $L$ vertically aligned with $r_T$. In both cases, the edges between $r_T$ and its children are represented by straight-line segments. Different LR-algorithms result from different choices on whether the left or the right rule is applied at any non-leaf node of $T$. We are interested in constructing {\em LR-drawings} (that are drawings obtained via LR-algorithms) with small width. Chan showed three different LR-algorithms that achieve, for an ordered rooted binary tree with $n$ nodes, width $O(n^{0.695})$, width $O(n^{0.5})$, and width $O(n^{0.48})$. 

We prove that, for every $n$-node ordered rooted binary tree, an LR-drawing with minimum width can be constructed in $O(n^{1.48})$ time. Further, we show an infinite family of $n$-node ordered rooted binary trees requiring $\Omega(n^{0.418})$ width in any LR-drawing; no lower bound better than $\Omega(\log n)$ was previously known. Finally, we present the results of an experimental evaluation that allowed us to determine the minimum width of all the ordered rooted binary trees with up to $455$ nodes.

Our interest in LR-drawings is mainly motivated by a result of Di Battista and Frati {\em [Algorithmica 2009]}, who proved that $n$-vertex outerplanar graphs have outerplanar straight-line drawings in $O(n^{1.48})$ area by means of a drawing algorithm which resembles an LR-algorithm. 

We deepen the connection between LR-drawings and outerplanar straight-line drawings by proving that, if $n$-node ordered rooted binary trees have LR-drawings with $f(n)$ width, for any function $f(n)$, then $n$-vertex outerplanar graphs have outerplanar straight-line drawings in $O(f(n))$ area. 

Finally, we exploit a structural decomposition for ordered rooted binary trees introduced by Chan in order to prove that every $n$-vertex outerplanar graph has an outerplanar straight-line drawing in $O\left(n\cdot 2^{\sqrt{2 \log_2 n}} \sqrt{\log n}\right)$ area. 
\end{abstract}

\section{Introduction} \label{se:introduction}

In this paper we study algorithms for constructing geometric representations of ordered rooted binary trees. This research topic has been investigated for a long time, because of the importance and the ubiquitousness of ordered rooted binary trees in computer science. Geometric models for representing ordered rooted binary trees were already discussed almost 50 years ago in Knuth's foundational book ``The Art of Computer Programming''~\cite{k-acp-68}. We explicitly mention here the notorious Reingold and Tilford's algorithm~\cite{rt-tdt-81} (counting more than 570 citations, according to Google Scholar) and invite the reader to consult the survey by Rusu~\cite{r-tda-13} as a reference point for a plethora of other tree drawing algorithms.

We introduce some definitions. A {\em rooted tree} $T$ is a tree with one distinguished node called {\em root}, which we denote by $r_T$. For any node $s\neq r_T$ in $T$, the {\em parent} of $s$ is the neighbor of $s$ in the path between $s$ and $r_T$ in $T$; also, for any node $s$ in $T$, the {\em children} of $s$ are the neighbors of $s$ different from its parent. For any node $s\neq r_T$ in $T$, the {\em subtree} of $T$ rooted at $s$ is defined as follows: remove from $T$ the edge between $s$ and its parent, thus separating $T$ in two trees; the one containing $s$ is the subtree of $T$ rooted at $s$. A {\em rooted binary tree} is a rooted tree such that every node has at most two children. An {\em ordered rooted binary tree} $T$ is a rooted binary tree in which any node $s\neq r_T$ is either designated as the {\em left child} or as the {\em right child} of its parent, so that a node with two children has a left and a right child. The subtree of $T$ rooted at the left (right) child of a node $s$ is the {\em left} ({\em right}) {\em subtree} of $s$; we also call {\em left} ({\em right}) {\em subtree} of a path $P$ in $T$ any left (right) subtree of a node in $P$ whose root is not in $P$.

At the Tenth Symposium on Discrete Algorithms held in 1999, Chan~\cite{c-anlabdbt-99,c-nlabdbt-02} introduced a simple family of algorithms to draw ordered rooted binary trees; we name the algorithms in this family {\em LR-algorithms}. An LR-algorithm is defined as follows. Consider an ordered rooted binary tree $T$. If $T$ has one node, then represent it as a point in the plane. Otherwise, recursively construct drawings $\Gamma_L$ of the left subtree $L$ of $r_T$ and $\Gamma_R$ of the right subtree $R$ of $r_T$. Denote by $B(\Gamma)$ the {\em bounding box} of a drawing $\Gamma$, i.e., the smallest axis-parallel rectangle containing $\Gamma$ in the closure of its interior. Then apply either:

  \begin{figure}[htb]
    \centering
    \hfill
	\subfloat[]{
		\includegraphics[width=.15\textwidth]{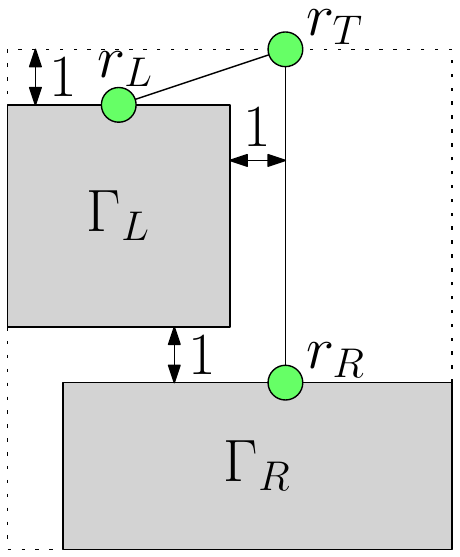}} 
    \hfill
	\subfloat[]{
		\includegraphics[width=.15\textwidth]{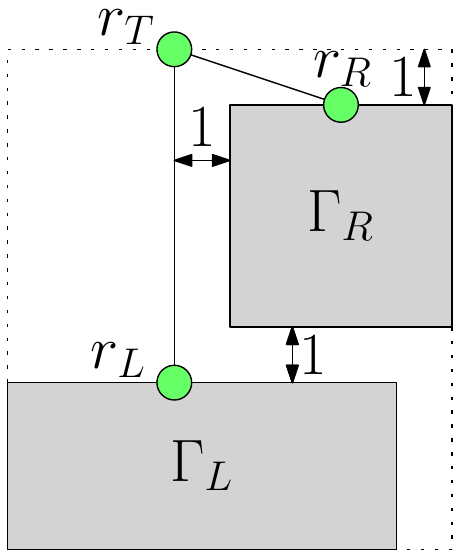}}
    \hfill \ 
	\caption{(a) Illustration for the left rule. (b) Illustration for the right rule.}
    \label{fig:lr-rules}
  \end{figure}

\begin{itemize}
\item the {\em left rule} (see Fig.~\ref{fig:lr-rules}(a)), i.e., place $\Gamma_L$ so that the top side of $B(\Gamma_L)$ is one unit below $r_T$ and so that the right side of $B(\Gamma_L)$ is one unit to the left of $r_T$, and place $\Gamma_R$ so that the top side of $B(\Gamma_R)$ is one unit below the bottom side of $B(\Gamma_L)$ and so that $r_R$ is vertically aligned with $r_T$; or
\item the {\em right rule} (see Fig.~\ref{fig:lr-rules}(b)), i.e., place $\Gamma_R$ so that the top side of $B(\Gamma_R)$ is one unit below $r_T$ and so that the left side of $B(\Gamma_R)$ is one unit to the right of $r_T$, and place $\Gamma_L$ so that the top side of $B(\Gamma_L)$ is one unit below the bottom side of $B(\Gamma_R)$ and so that $r_L$ is vertically aligned with $r_T$. 
\end{itemize}

By fixing different criteria for choosing whether to apply the left or the right rule at each internal node of $T$, one obtains different LR-algorithms. We call {\em LR-drawing} the output of an LR-algorithm. 

LR-drawings are a special class of {\em ideal drawings}, which constitute the main topic of investigation in Chan's paper~\cite{c-anlabdbt-99,c-nlabdbt-02} and are a very natural drawing standard for ordered rooted binary trees. They require the drawing to be: (i) {\em planar}, i.e., no two curves representing edges should cross -- this property helps to distinguish distinct edges; (ii) {\em straight-line}, i.e., each curve representing an edge is a straight-line segment -- this property helps to track an edge in the drawing; (iii) {\em strictly upward}, i.e., each node is below its parent -- this property helps to visualize the parent-child relationship between nodes; and (iv) {\em strongly order-preserving}, i.e., the left (right) child of a node is to the left (resp.\ right) or on the same vertical line of its parent -- this property allows to easily distinguish the left and right child of a node. 

As well-established in the graph drawing literature (see, e.g.,~\cite{BattistaETT99,kw-dgmm-01,NishizekiR04}), an optimization objective of primary importance for a drawing algorithm is to construct drawings with a {\em small area}. This is usually formalized by requiring the vertices to lie {\em in a grid}, that is, at points with integer coordinates, by  defining the {\em width} and {\em height} of $\Gamma$ as the number of grid columns and rows intersecting $\Gamma$, respectively\footnote{According to this definition, the width of $\Gamma$ is the geometric width of $B(\Gamma)$ plus one, and similar for the height.}, and by then defining the {\em area} of $\Gamma$ as its width times its height. 

Ideal drawings of $n$-node ordered rooted binary trees can be easily constructed in $O(n^2)$ area. For example, the width and the height of {\em any} LR-drawing are at most $n$ and exactly $n$, respectively. Because of the strictly-upward property, any ideal drawing of an $n$-node ordered rooted binary tree requires $\Omega(n)$ height if the tree contains a path with $\Omega(n)$ nodes from the root to a leaf. Thus, in order to construct ideal drawings with small area, the main goal is to minimize the width of the drawing. Chan exhibited several algorithms to construct ideal drawings. Three of them are in fact LR-algorithms that construct LR-drawings with $O(n^{0.695})$, $O(n^{0.5})$, and $O(n^{0.48})$ width, respectively. Better bounds than those resulting from LR-algorithms are however known for the width of ideal drawings. Namely, Garg and Rusu proved that every $n$-node ordered rooted binary tree has an ideal drawing with $O(\log n)$ width and $O(n\log n)$ area~\cite{gr-aeop-03}, which are the best possible bounds~\cite{cdp-noad-92}. Nevertheless, there are several reasons to study LR-drawings with small width and area.

\begin{wrapfigure}[23]{r}{0.1\textwidth}
	\centering
	\vspace{-8mm}
      \includegraphics[height=0.55\textwidth]{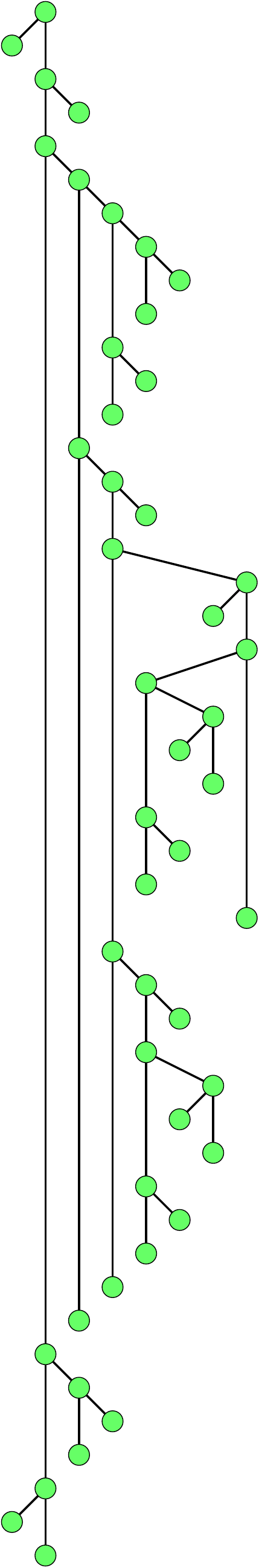}
    \caption{}
    \label{fi:ideal-t4}
\end{wrapfigure}

First, while one might design complicated schema to decide whether to apply the left or the right rule at any internal node of an ordered rooted binary tree, the geometric construction underlying an LR-algorithm is very easy to understand and implement. Second, as noted by Chan~\cite{c-anlabdbt-99,c-nlabdbt-02} an LR-drawing satisfies a number of additional geometric properties with respect to a general ideal drawing. For example, in an LR-drawing any two disjoint subtrees are separable by a horizontal line and any angle formed by the two edges between a node and its children is at least $\pi/4$. Third, let $w^*_T$ denote the minimum width of any LR-drawing of an ordered rooted binary tree $T$; also, let $w^*_n$ be the maximum value of $w^*_T$ among all the ordered rooted binary trees $T$ with $n$ nodes. In this paper we are interested in computing $w^*_T$ efficiently and in determining the asymptotic behavior of $w^*_n$. The value of $w^*_T$ obeys a natural recursive formula; namely $w^*_T=\min_{P} \{1+ \max_{L}\{w^*_{L}\} + \max_{R}\{w^*_{R}\}\}$, where the minimum is among all the paths $P$ starting at $r_T$, and the first and second maxima are among all the left and right subtrees of $P$, respectively\footnote{The intuition for this formula is that in any LR-drawing $\Gamma$ of $T$ a path $P$ starting at $r_T$ lies on a grid column $\ell$; thus the width of $\Gamma$ is the number of grid columns that intersect $\Gamma$ to the left of $\ell$ -- which is the maximum, among all the left subtrees $L$ of $P$, of the minimum width of an LR-drawing of $L$ -- plus the number of grid columns that intersect $\Gamma$ to the right of $\ell$ -- which is the maximum, among all the right subtrees $R$ of $P$, of the minimum width of an LR-drawing of $R$ -- plus one -- which corresponds to $\ell$.}. Our study of LR-drawings with small width might hence find application in problems (not necessarily related to graph drawing) in which a similar recurrence appears. Fourth and most importantly for this paper, LR-drawings with small width have a strong connection with outerplanar straight-line drawings of outerplanar graphs with small area, as will be described later.

In Section~\ref{se:trees} we prove that, for every $n$-node ordered rooted binary tree $T$, an LR-drawing of $T$ with minimum width $w^*_T$ (and with minimum area) can be constructed in $O(n \cdot w^*_T)\in O(n^{1.48})$ time. Chan~\cite{c-anlabdbt-99,c-nlabdbt-02} noted that ``By dynamic programming, one can compute in polynomial time the exact minimum area of'' any LR-drawing of $T$. Our sub-quadratic time bound is obtained by investigating the {\em representation sequence} of $T$, which is a sequence of $O(w^*_T)$ integers that conveys all the relevant information about the width of the LR-drawings of $T$. Further, we show that, for infinitely many values of $n$, there exists an $n$-node ordered rooted binary tree $T_h$ requiring $\Omega\left(n^{\frac{1}{\log_2 (3+\sqrt 5)}}\right )\in \Omega(n^{0.418})$ width in any LR-drawing; no lower bound better than $\Omega(\log n)$ was previously known~\cite{cdp-noad-92}. Since the height of any LR-drawing of an $n$-node tree is $n$, $T_h$ requires $\Omega(n^{1.418})$ area in any LR-drawing; hence near-linear area bounds cannot be achieved for LR-drawings, differently from general ideal drawings. Note that the exponents in these lower bounds are only $0.062$ apart from the corresponding upper bounds. Finally, we exploited again the concept of representation sequence in order to devise an experimental evaluation that determined the minimum width of all the ordered rooted binary trees with up to $455$ nodes. The most interesting outcome of this part of our research is perhaps the similarity of the trees that we have experimentally observed to require the largest width with the trees $T_h$ we defined for the lower bound. Fig.~\ref{fi:ideal-t4} shows a minimum-width LR-drawing of a smallest tree requiring width $8$ in any LR-drawing; this tree is also shown in Fig.~\ref{fig:lower-trees}(a). 

Section~\ref{se:outerplanar} deals with small-area drawings of outerplanar graphs. An {\em outerplanar graph} is a graph that excludes $K_4$ and $K_{2,3}$ as minors or, equivalently, a graph that admits an {\em outerplanar drawing}, that is a planar drawing in which all the vertices are incident to the outer face. Small-area outerplanar drawings have long been investigated. Biedl proved that every $n$-vertex outerplanar graph admits an outerplanar polyline drawing in $O(n\log n)$ area~\cite{b-smog-11}, where a {\em polyline} drawing represents each edge as a piece-wise linear curve. Garg and Rusu proved that every $n$-vertex outerplanar graph with maximum degree $d$ admits an outerplanar straight-line drawing in $O(d\cdot n^{1.48})$ area~\cite{gr-aepsdog-07}. The first sub-quadratic area upper bound for outerplanar straight-line drawings of $n$-vertex outerplanar graphs was established by Di Battista and Frati~\cite{bf-sadog-09}; the bound is $O(n^{1.48})$. Frati also proved an $O(d\cdot n\log n)$ area upper bound for outerplanar straight-line drawings of $n$-vertex outerplanar graphs with maximum degree $d$~\cite{f-sdog-12}. 

%By looking at the $O(d\cdot n^{1.48})$ and $O(n^{1.48})$ area bounds above~\cite{gr-aepsdog-07,bf-sadog-09}, it should come with no surprise that outerplanar straight-line drawings are related to LR-drawings of ordered rooted binary trees, for which the best known area upper bound is $O(n^{1.48})$~\cite{c-anlabdbt-99,c-nlabdbt-02}. We briefly describe the relationship between outerplanar straight-line drawings and LR-drawings established in~\cite{bf-sadog-09}. Let $G$ be a maximal outerplanar graph and $T$ be the {\em dual tree} of $G$ ($T$ has a node for each internal face of $G$ and has an edge between two nodes if the corresponding faces of $G$ are adjacent); $T$ is regarded as an ordered rooted binary tree by rooting it at any node. Di Battista and Frati~\cite{bf-sadog-09} proved that, if $T$ admits a {\em star-shaped} drawing (which will be defined later) in a certain area, then $G$ admits an outerplanar straight-line drawing in roughly the same area; further, they show how to construct a star-shaped drawing of $T$ in small area; this algorithm is a slight modification of an LR-algorithm, which is the reason for the $O(n^{1.48})$ area bound.   

By looking at the $O(d\cdot n^{1.48})$ and $O(n^{1.48})$ area bounds above, it should come with no surprise that outerplanar straight-line drawings are related to LR-drawings of ordered rooted binary trees, for which the best known area upper bound is $O(n^{1.48})$~\cite{c-anlabdbt-99,c-nlabdbt-02}. We briefly describe the way this relationship was established in~\cite{bf-sadog-09}. Let $G$ be a maximal outerplanar graph with $n$ vertices and let $T$ be its {\em dual tree} ($T$ has a node for each internal face of $G$ and has an edge between two nodes if the corresponding faces of $G$ are adjacent). Di Battista and Frati~\cite{bf-sadog-09} proved that, if $T$ has a {\em star-shaped} drawing (which will be defined later) in a certain area, then $G$ has an outerplanar straight-line drawing in roughly the same area; they also showed how to construct a star-shaped drawing of $T$ in $O(n^{1.48})$ area; this algorithm is similar to an LR-algorithm, which is the reason why the $O(n^{1.48})$ bound arises.   

We prove that if an $n$-node ordered rooted binary tree $T$ has an LR-drawing with width $\omega$, then $T$ has a star-shaped drawing with width $O(\omega)$ (and area $O(n\cdot \omega))$. Our geometric construction is very similar to the one presented in~\cite{bf-sadog-09}, however it is enhanced so that no property other than the width bound\footnote{On the contrary, in order to prove the area bound for star-shaped drawings, \cite{bf-sadog-09} exploits a lemma from~\cite{c-anlabdbt-99,c-nlabdbt-02}, stating that, given any ordered rooted binary tree $T$, there exists a root-to-leaf path $P$ in $T$ such that, for any left subtree $\alpha$ and right subtree $\beta$ of $P$, $|\alpha|^{0.48} + |\beta|^{0.48} \leq (1-\delta)|T|^{0.48}$, for some constant $\delta > 0$.} is required to be satisfied by the LR-drawing of $T$ in order to ensure the existence of a star-shaped drawing  of $T$ with area $O(n\cdot \omega)$. 
Due to this result and to the relationship between the area requirements of star-shaped drawings and outerplanar straight-line drawings established in~\cite{bf-sadog-09}, any improvement on the $O(n^{0.48})$ width bound for LR-drawings of ordered rooted binary trees would imply an improvement on the $O(n^{1.48})$ area bound for outerplanar straight-line drawings of $n$-vertex outerplanar graphs. However, because of the lower bound for the width of LR-drawings proved in the first part of the paper, this approach cannot lead to the construction of outerplanar straight-line drawings of $n$-vertex outerplanar graphs in $o(n^{1.418})$ area. 

We prove that, for any constant $\varepsilon>0$, the $n$-vertex outerplanar graphs admit outerplanar straight-line drawings in $O(n^{1+\varepsilon})$ area. More precisely, our drawings have $O(n)$ height and $O(2^{\sqrt {2\log n}} \sqrt {\log n})$ width; the latter bound is smaller than any polynomial function of $n$. Hence, this establishes a near-linear area bound for outerplanar straight-line drawings of outerplanar graphs, improving upon the previously best known $O(n^{1.48})$ area bound~\cite{bf-sadog-09}. In order to achieve our result we exploit a structural decomposition for ordered rooted binary trees introduced by Chan~\cite{c-nlabdbt-02}, together with a quite complex geometric construction for star-shaped drawings of ordered rooted binary trees. 

%%%%%%%%%%%%%%%%%%%%%
%%%%%%%%%%%%%%%%%%%%%
%%%%%%%%%%%%%%%%%%%%%
%%%%%%%%%%%%%%%%%%%%%
%%%%%%%%%%%%%%%%%%%%%
%%%%%%%%%%%%%%%%%%%%%
%%%%%%%%%%%%%%%%%%%%%
%%%%%%%%%%%%%%%%%%%%%
%%%%%%%%%%%%%%%%%%%%%
%%%%%%%%%%%%%%%%%%%%%

\section{LR-Drawings of Ordered Rooted Binary Trees} \label{se:trees}

In this section we study LR-drawings of ordered rooted binary trees. 

\subsection{Representation sequences} \label{se:representations}

Our investigation starts by defining a combinatorial structure, called {\em representation sequence}, which can be associated to any ordered rooted binary tree $T$ and which conveys all the relevant information about the width of the LR-drawings of $T$. We first establish some preliminary properties and lemmata.

Consider an LR-drawing $\Gamma$ of an ordered rooted binary tree $T$. The {\em left width} of $\Gamma$ is the number of grid columns intersecting $\Gamma$ to the left of the grid column on which $r_T$ lies. The {\em right width} of $\Gamma$ is defined analogously. By definition of width, we have the following.

\begin{property} \label{pr:width}
The width of an LR-drawing $\Gamma$ is equal to its left width, plus its right width, plus one.	
\end{property}

For any $\alpha,\beta \in \mathbb{N}_0$, we say that a pair $(\alpha,\beta)$ is {\em feasible} for $T$ if $T$ admits an LR-drawing whose left width is at most $\alpha$ and whose right width is at most $\beta$. This definition implies the following.

\begin{property} \label{pr:feasible-dominance}
Consider an ordered rooted binary tree $T$. If a pair $(\alpha,\beta)$ is feasible for $T$, then every pair $(\alpha',\beta')$ with $\alpha',\beta' \in \mathbb{N}_0$, $\alpha'\geq \alpha$, and $\beta'\geq \beta$ is also feasible for $T$.
\end{property}

The next lemma will be used several times in the following.

\begin{lemma} \label{le:corner}
The pairs $(0,w^*_T)$ and $(w^*_T,0)$ are feasible for an ordered rooted binary tree $T$. 
\end{lemma}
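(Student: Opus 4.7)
\begin{proofsketch}
My plan is to establish the feasibility of $(0, w^*_T)$ by induction on the number of nodes of $T$; the feasibility of $(w^*_T, 0)$ will then follow from the mirror-image argument (swapping left and right rules and left and right children throughout).

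The base of the induction is a single-node tree, whose only drawing is a point, realizing the pair $(0, 0)$; by Property~\ref{pr:feasible-dominance}, every pair dominating $(0, 0)$ is feasible, so in particular $(0, w^*_T)$ is. For the inductive step, letting $L$ and $R$ be the left and right subtrees of $r_T$ (either possibly empty), I will apply the \emph{right rule} at $r_T$, using an inductively constructed drawing $\Gamma_L$ of $L$ that realizes $(0, w^*_L)$ together with any minimum-width LR-drawing $\Gamma_R$ of $R$. Since $r_L$ lies on the leftmost column of $\Gamma_L$ and the right rule aligns $r_L$ with the column of $r_T$, no grid column strictly to the left of $r_T$ will be used, so the resulting drawing of $T$ has left width $0$.

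It then remains to bound the right width, which will be the maximum between the right width of $\Gamma_L$ (at most $w^*_L$) and the number of columns that $\Gamma_R$ occupies to the right of $r_T$ (at most the width $w^*_R$ of $\Gamma_R$, since the right rule places the left side of $\Gamma_R$ one column to the right of $r_T$). To make this at most $w^*_T$ I will invoke the monotonicity fact that any LR-drawing of $T$ contains an LR-drawing of each of $L$ and $R$ as a sub-drawing, and hence $w^*_T \geq \max(w^*_L, w^*_R)$. The case where $r_T$ has only one child is handled in exactly the same way, simply omitting the missing subtree from the placement. I do not anticipate a serious obstacle: the one delicate point is the width accounting under the right rule, and the inductive hypothesis $(0, w^*_L)$ is tailored precisely to prevent $\Gamma_L$ from extending leftward of $r_T$.
\end{proofsketch}
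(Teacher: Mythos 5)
Your proposal is correct and follows essentially the same route as the paper's proof: induction on the number of nodes, applying the right rule at $r_T$ with an inductively obtained $(0,w^*_L)$-drawing of $L$, and bounding the right width via $w^*_L, w^*_R \leq w^*_T$ (the paper obtains this last bound by extracting $\Gamma_R$ from an optimal drawing of $T$ rather than taking a fresh minimum-width drawing of $R$, but the two are interchangeable). No gaps.
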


\begin{proof}
We prove that the pair $(0,w^*_T)$ is feasible for $T$; the proof for the pair $(w^*_T,0)$ is symmetric.

The proof is by induction on the number $n$ of nodes of $T$. If $n=1$, then in any LR-drawing $\Gamma$ of $T$ both the left and the right width of $\Gamma$ are $0$, hence the pair $(0,0)$ is feasible for $T$. By Property~\ref{pr:feasible-dominance}, the pair $(0,1)$ is also feasible for $T$. This, together with $w^*_T=1$, implies the statement for $n=1$.

\begin{figure}[htb]
	\centering
	\includegraphics[height=.25\textwidth]{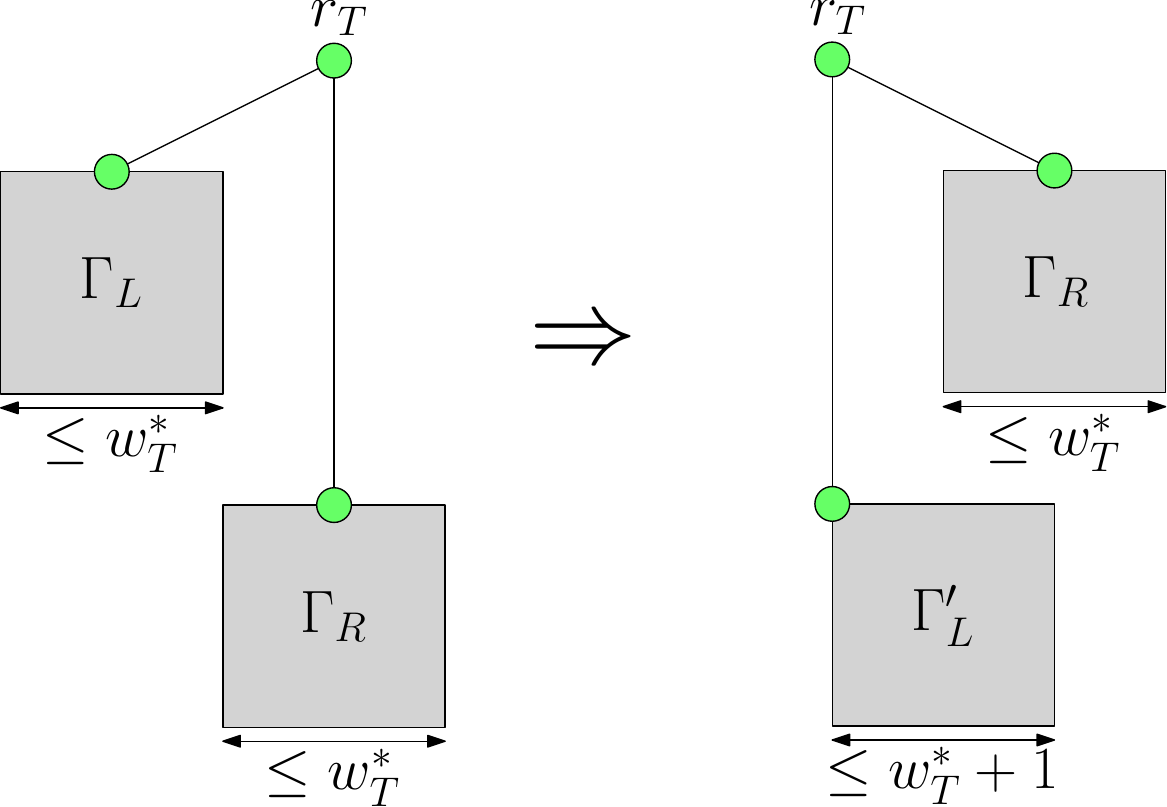}
	\caption{Illustration for the proof of Lemma.~\ref{le:corner}.}
	\label{fig:corner}
\end{figure}

If $n>1$, then assume that neither the left subtree $L$ nor the right subtree $R$ of $r_T$ is empty. The case in which $L$ or $R$ is empty is easier to handle. Refer to Fig.~\ref{fig:corner}. Consider any LR-drawing $\Gamma_T$ of $T$ with width $w^*_T$. Denote by $\Gamma_L$ and $\Gamma_R$ the LR-drawings of $L$ and $R$ in $\Gamma_T$, respectively. The width of each of $\Gamma_L$ and $\Gamma_R$ is at most $w^*_T$, given that the width of $\Gamma_T$ is $w^*_T$. Apply induction on $L$ to construct an LR-drawing $\Gamma'_L$ of $L$ with left width $0$ and right width at most $w^*_T$. Construct an LR-drawing $\Gamma'_T$ of $T$ by applying the right rule at $r_T$, while using $\Gamma_R$ as the LR-drawing of $R$ and $\Gamma'_L$ as the LR-drawing of $L$. Then the left width of $\Gamma'_T$ is equal to the left width of $\Gamma'_L$, hence it is $0$. Further, the right width of $\Gamma'_T$ is equal to the maximum between the width of $\Gamma_R$ and the right width of $\Gamma'_L$, which are both at most $w^*_T$; hence the pair $(0,w^*_T)$ is feasible for $T$.
\end{proof}

Property~\ref{pr:feasible-dominance} implies that there exists an infinite number of feasible pairs for $T$. Despite that, the set of feasible pairs for $T$ can be succinctly described by its {\em Pareto frontier}, which is the set of the feasible pairs $(\alpha,\beta)$ for $T$ such that no feasible pair $(\alpha',\beta')$ for $T$ exists with (i) $\alpha'< \alpha$ and $\beta'\leq \beta$ or (ii) $\alpha'\leq \alpha$ and $\beta'< \beta$.

More formally, the {\em representation sequence} of an ordered rooted binary tree $T$, which we denote by  ${\cal S}_T$, is an ordered list of integers (indexed by the numbers $0,1,2,\dots$) satisfying the following properties:

\begin{itemize} 
\item[(a)] the value ${\cal S}_T(i)$ of the element of ${\cal S}_T$ with index $i$ is the smallest integer $j$ such that $T$ admits an LR-drawing with left width at most $i$ and right width $j$; and
\item[(b)] the value of the second to last element of ${\cal S}_T$ is greater than $0$ and the value of the last element of ${\cal S}_T$ is equal to $0$.
\end{itemize} 

We let $k_T$ denote the number of elements in ${\cal S}_T$. Note that the values ${\cal S}_T(0),\dots,{\cal S}_T(k_T-1)$ in a representation sequence ${\cal S}_T$ are non-increasing, given that if a pair $(i,{\cal S}_T(i))$ is feasible for $T$, then the pair $(i+1,{\cal S}_T(i))$ is also feasible for $T$, by Property~\ref{pr:feasible-dominance}. For example, the tree $T_3$ shown in Fig.~\ref{fig:theory-lower-bound}(b) (which we use for the lower bound on the width of LR-drawings) has ${\cal S}_{T_3}=[6,5,5,3,3,1,0]$.

Note that, if $T$ is a root-to-leaf path, then ${\cal S}_T=[0]$, since $T$ has an LR-drawing in which all the nodes are on the same vertical line. Also, any complete binary tree $T$ with height $h+1$ (i.e., with $h+1$ nodes on any root-to-leaf path) has ${\cal S}_T=[h,\dots,h,0]$, where $h$ elements are equal to $h$. This is can be proved by induction and by the following lemma.

\begin{lemma} \label{le:exploding-structure}
Consider any ordered rooted binary tree $T$. Let $T'$ be the tree such that the left subtree $L$ and the right subtree $R$ of $r_{T'}$ are two copies of $T$. Then ${\cal S}_{T'}=[\underbrace{w^*_T}_{\textrm{index } 0},\dots,\underbrace{w^*_T}_{\textrm{index } w^*_T-1},\underbrace{0}_{\textrm{index } w^*_T}]$.
\end{lemma}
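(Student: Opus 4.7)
The plan is to verify the two conditions defining the representation sequence of $T'$: namely, that the pair $(i, w^*_T)$ is feasible for $T'$ for each $0 \le i \le w^*_T - 1$, that the pair $(w^*_T, 0)$ is feasible for $T'$, and that no pair $(i, j)$ with both entries strictly less than $w^*_T$ is feasible for $T'$. I would handle this as a feasibility upper bound and a matching infeasibility lower bound.

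For the upper bound at any index $0 \le i \le w^*_T - 1$, I would apply the right rule at $r_{T'}$ using a minimum-width LR-drawing $\Gamma_R$ of the right subtree (a copy of $T$), whose width is exactly $w^*_T$, together with an LR-drawing $\Gamma_L$ of the left subtree (also a copy of $T$) whose left width is at most $i$ and whose right width is at most $w^*_T$; such a $\Gamma_L$ exists by Lemma~\ref{le:corner}, which supplies $(0, w^*_T)$, combined with Property~\ref{pr:feasible-dominance}. Unwinding the right-rule placement, the left width of the resulting drawing of $T'$ coincides with that of $\Gamma_L$ (hence is at most $i$), while the right width equals the maximum of the right width of $\Gamma_L$ and the full width of $\Gamma_R$ (hence equals $w^*_T$). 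For the remaining index, I would apply the left rule symmetrically, using a minimum-width drawing of $L$ and the drawing of $R$ witnessing $(w^*_T, 0)$ from Lemma~\ref{le:corner}, which yields $(w^*_T, 0)$ for $T'$.

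For the lower bound I would consider any LR-drawing $\Gamma_{T'}$ of $T'$ and exploit the fact that, under either rule, exactly one of the two subtree drawings sits entirely in the columns on one side of (and shifted one unit away from) the column hosting $r_{T'}$. Concretely, if the right rule is used at $r_{T'}$ then $\Gamma_R$ lies wholly in columns strictly to the right of $r_{T'}$, so the right width of $\Gamma_{T'}$ is at least the full width of $\Gamma_R$, which in turn is at least $w^*_T$ since $R$ is a copy of $T$; if the left rule is used, a mirror argument forces the left width of $\Gamma_{T'}$ to be at least $w^*_T$. Hence no pair with both entries smaller than $w^*_T$ can be feasible, which gives ${\cal S}_{T'}(i) \ge w^*_T$ for every $i < w^*_T$. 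Combining the two bounds yields ${\cal S}_{T'}(i) = w^*_T$ for $0 \le i \le w^*_T - 1$ and ${\cal S}_{T'}(w^*_T) = 0$, and condition (b) of the definition holds because $w^*_T \ge 1$. The only delicate point I anticipate is keeping the bounding-box ``$+1$'' offsets and the left-width/right-width conventions straight when unpacking each rule; the calculations are routine but easy to mis-sign.
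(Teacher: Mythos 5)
Your proposal is correct and follows essentially the same route as the paper: the upper bounds come from applying the right rule with a corner drawing of $L$ from Lemma~\ref{le:corner} (and the left rule symmetrically for index $w^*_T$), and the lower bound comes from observing that whichever rule is used at $r_{T'}$ places one full copy of $T$ entirely on one side of the root's column, forcing that side's width to be at least $w^*_T$. The paper phrases the lower bound as ``a drawing with left width at most $i\leq w^*_T-1$ cannot use the left rule, hence uses the right rule and has right width at least $w^*_T$,'' which is logically the same dichotomy you state.
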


\begin{proof}
First, we prove that ${\cal S}_{T'}(i)= w^*_T$, for $i=0,\dots,w^*_T-1$. 

We prove that ${\cal S}_{T'}(i)\geq w^*_T$. Consider any LR-drawing $\Gamma_{T'}$ of $T'$ with left width $i\leq w^*_T-1$. If $\Gamma_{T'}$ used the left rule at $r_{T'}$, then the LR-drawing of $L$ in $\Gamma_{T'}$ would be entirely to the left of $r_{T'}$; hence, the left width of $\Gamma_{T'}$ would be at least $w^*_T$, while it is at most $i$, by assumption. It follows that $\Gamma_{T'}$ uses the right rule at $r_{T'}$ and the LR-drawing of $R$ in $\Gamma_{T'}$ is entirely to the right of $r_{T'}$; hence, ${\cal S}_{T'}(i)\geq w^*_T$. 

We prove that ${\cal S}_{T'}(i)\leq w^*_T$. Consider an LR-drawing $\Gamma_{R}$ of $R$ with width $w^*_T$, and an LR-drawing $\Gamma_{L}$ of $L$ with left width at most $i$ and right width $w^*_T$; $\Gamma_{L}$ exists since pair $(0,w^*_T)$ is feasible for $L$, by Lemma~\ref{le:corner}. Construct an LR-drawing $\Gamma_{T'}$ of $T'$ by applying the right rule at $r_{T'}$, while using $\Gamma_L$ and $\Gamma_R$ as LR-drawings for $L$ and $R$, respectively. Since $r_{T'}$ and $r_L$ are on the same vertical line, the left width of $\Gamma_{T'}$ is equal to the left width of $\Gamma_L$, which is at most $i$, and the right width of $\Gamma_{T'}$ is the maximum between the right width of $\Gamma_L$ and the width of $\Gamma_{R}$, which are both equal to $w^*_T$. Hence, ${\cal S}_{T'}(i)\leq w^*_T$. 

Finally, we prove that ${\cal S}_{T'}(w^*_T)= 0$. Consider an LR-drawing $\Gamma_{L}$ of $L$ with width at most $w^*_T$, and an LR-drawing $\Gamma_{R}$ of $R$ with left width at most $w^*_T$ and right width $0$; the latter drawing exists by Lemma~\ref{le:corner}. Construct an LR-drawing $\Gamma_{T'}$ of $T'$ by applying the left rule at $r_{T'}$, while using $\Gamma_L$ and $\Gamma_R$ as LR-drawings for $L$ and $R$, respectively. Since $r_{T'}$ and $r_R$ are on the same vertical line, the right width of $\Gamma_{T'}$ is equal to the right width of $\Gamma_R$, which is $0$, and the left width of $\Gamma_{T'}$ is the maximum between the left width of $\Gamma_R$ and the width of $\Gamma_{L}$, which are both at most $w^*_T$. Hence, ${\cal S}_{T'}(w^*_T)= 0$. 
\end{proof}

As a final lemma of this section we bound the number of elements in a representation sequence. 

\begin{lemma} \label{le:length-bound}
Consider any ordered rooted binary tree $T$. Then the length $k_T$ of ${\cal S}_T$ is either $w^*_T$ or $w^*_T+1$. 
\end{lemma}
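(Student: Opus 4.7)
My plan is to establish the bound by squeezing $k_T$ between $w^*_T$ and $w^*_T+1$ using the two constraints of the representation sequence: its monotonicity and the fact that its last (and only last) entry is $0$.

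For the upper bound $k_T \leq w^*_T + 1$, I would invoke Lemma~\ref{le:corner}, which guarantees that the pair $(w^*_T, 0)$ is feasible for $T$. By definition of $\mathcal{S}_T$, this forces $\mathcal{S}_T(w^*_T) = 0$. Since the sequence is non-increasing in its first $k_T-1$ entries and must terminate the first time it hits $0$ (by condition (b) in the definition, the second-to-last entry is strictly positive), the index $k_T - 1$ of the last entry can be at most $w^*_T$, giving $k_T \leq w^*_T + 1$.

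For the lower bound $k_T \geq w^*_T$, I would argue contrapositively from the definition of $w^*_T$ as the minimum width. Suppose $\mathcal{S}_T(i) = 0$ for some index $i$. Then $T$ admits an LR-drawing with left width $\ell \leq i$ and right width $0$, whose total width is $\ell + 0 + 1 \leq i+1$ by Property~\ref{pr:width}. Since every LR-drawing has width at least $w^*_T$, we obtain $i+1 \geq w^*_T$, i.e., $i \geq w^*_T - 1$. The earliest index at which $\mathcal{S}_T$ can vanish is therefore $w^*_T - 1$, so $k_T - 1 \geq w^*_T - 1$, that is $k_T \geq w^*_T$.

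Combining the two bounds yields $k_T \in \{w^*_T, w^*_T+1\}$. There is no real obstacle here: both halves are short and rely only on Lemma~\ref{le:corner}, Property~\ref{pr:width}, and the defining properties (a)--(b) of the representation sequence. The only subtlety is handling the degenerate case in which $T$ is a single root-to-leaf path, where $\mathcal{S}_T = [0]$ and condition (b) is vacuous for the ``second to last element''; this case sits comfortably at $k_T = w^*_T = 1$ and requires no special treatment beyond noting that the preceding argument still delivers $w^*_T \leq k_T \leq w^*_T + 1$.
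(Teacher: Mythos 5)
Your proof is correct and follows essentially the same route as the paper: the lower bound comes from Property~\ref{pr:width} applied to the terminal zero entry (the paper phrases this as a direct contradiction with $k_T\leq w^*_T-1$, you phrase it contrapositively, but it is the same computation), and the upper bound comes from the feasibility of $(w^*_T,0)$ via Lemma~\ref{le:corner}. No gap; your remark about the degenerate case $\mathcal{S}_T=[0]$ is a harmless extra observation.
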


\begin{proof}
First, $k_T\leq w^*_T-1$ would imply that the last element of ${\cal S}_T$ has index less than or equal to $w^*_T-2$ and value $0$. By Property~\ref{pr:width}, there would exist an LR-drawing of $T$ with width at most $w^*_T-2 +0 + 1<w^*_T$, which is not possible by definition of $w^*_T$. It follows that $k_T\geq w^*_T$.

Second, Lemma~\ref{le:corner} implies that the pair $(w^*_T,0)$ is feasible for $T$, hence $k_T=w^*_T$ or $k_T=w^*_T+1$, depending on whether the pair $(w^*_T-1,0)$ is feasible for $T$ or not.
\end{proof}

\subsection{Algorithms for Optimal LR-drawings} \label{se:algorithms-LR}

There are two main reasons to study the representation sequence ${\cal S}_T$ of an ordered rooted binary tree $T$. The first one is that the minimum width among all the LR-drawings of $T$ can be easily retrieved from ${\cal S}_T$; the second one is that ${\cal S}_T$ can be easily constructed starting from the representation sequences of the subtrees of $r_T$. The next lemmata formalize these claims.

\begin{lemma} \label{le:sequence-width}
For any ordered rooted binary tree $T$, the minimum width among all the LR-drawings of $T$ is equal to $\min_{i=0}^{k_T-1} \{i+{\cal S}_T(i)+1\}$.
\end{lemma}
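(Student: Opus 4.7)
The plan is to prove the equality by establishing both inequalities independently, using Property~\ref{pr:width} (which decomposes width as left width + right width + $1$) as the main bridge between the recursive quantity $\mathcal{S}_T(i)$ and the width of an LR-drawing.

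For the direction $\min_{i=0}^{k_T-1}\{i+\mathcal{S}_T(i)+1\} \geq \min_\Gamma \text{width}(\Gamma)$, I would fix any index $i \in \{0,\dots,k_T-1\}$ and invoke the very definition of $\mathcal{S}_T(i)$: it asserts the existence of an LR-drawing $\Gamma_i$ of $T$ with left width at most $i$ and right width $\mathcal{S}_T(i)$. By Property~\ref{pr:width}, the width of $\Gamma_i$ is at most $i+\mathcal{S}_T(i)+1$. Taking the minimum over $i$ produces an LR-drawing of $T$ with width at most $\min_{i=0}^{k_T-1}\{i+\mathcal{S}_T(i)+1\}$, giving one inequality.

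For the reverse direction, I would take an LR-drawing $\Gamma^*$ of $T$ achieving the minimum width $w^*_T$, and let $\alpha$ and $\beta$ denote its left width and right width, respectively. Property~\ref{pr:width} gives $w^*_T = \alpha+\beta+1$, and in particular $\alpha \leq w^*_T - 1$. Lemma~\ref{le:length-bound} guarantees $k_T \geq w^*_T$, so $\alpha \leq k_T - 1$, i.e., $\alpha$ is a legitimate index into $\mathcal{S}_T$. Since the pair $(\alpha,\beta)$ is feasible for $T$ by construction, the definition of $\mathcal{S}_T(\alpha)$ as the \emph{smallest} such right width yields $\mathcal{S}_T(\alpha) \leq \beta$. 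Consequently, $w^*_T = \alpha + \beta + 1 \geq \alpha + \mathcal{S}_T(\alpha) + 1 \geq \min_{i=0}^{k_T-1}\{i+\mathcal{S}_T(i)+1\}$, completing the other inequality.

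The only substantive point — the one I would be careful about — is ensuring that the left width $\alpha$ of an optimal drawing is actually an index in the range $\{0,\dots,k_T-1\}$ over which the minimum is taken. This is where Lemma~\ref{le:length-bound} is essential: without knowing $k_T \geq w^*_T$, one could not rule out the edge case in which $\alpha$ exceeds all indices of $\mathcal{S}_T$ and the argument would fail to land inside the defined range. Beyond this bookkeeping step the proof is a direct consequence of the definitions of $\mathcal{S}_T$ and of feasibility, together with Property~\ref{pr:width}.
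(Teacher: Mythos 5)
Your proof is correct and follows essentially the same route as the paper's: take an optimal drawing, decompose its width via Property~\ref{pr:width}, and compare $\beta$ with $\mathcal{S}_T(\alpha)$. You are in fact slightly more careful than the paper, which leaves the direction ``each term $i+\mathcal{S}_T(i)+1$ is realized by an actual drawing'' implicit and does not explicitly verify via Lemma~\ref{le:length-bound} that $\alpha$ lies in the index range $\{0,\dots,k_T-1\}$.
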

	
\begin{proof}
Consider any LR-drawing $\Gamma$ of $T$ with minimum width $w^*_T$, and let $\alpha$ and $\beta$ be the left and right width of $\Gamma$, respectively. By Property~\ref{pr:width}, we have that $w^*_T=\alpha+\beta+1$. By definition of ${\cal S}_T$, we have that ${\cal S}_T(\alpha)\leq \beta$. Finally, by the minimality of $w^*_T$ we have ${\cal S}_T(\alpha)= \beta$, which proves the statement.
\end{proof}

\begin{lemma} \label{le:sequence-construction}
Let $T$ be an ordered rooted binary tree. Let $L$ and $R$ be the (possibly empty) left and right subtrees of $r_T$, respectively. The following statements hold true.

\begin{itemize} 
\item If $L$ and $R$ are both empty, then ${\cal S}_T=[0]$. 
\item If $L$ is empty and $R$ is not, then ${\cal S}_T={\cal S}_R$. 
\item If $R$ is empty and $L$ is not, then ${\cal S}_T={\cal S}_L$. 
\item Finally, if neither $L$ nor $R$ is empty, then 
$$
{\cal S}_T= [\underbrace{\max \{ {\cal S}_L(0),w^*_R\}}_{ \textrm{index } 0}, \dots, \underbrace{\max \{ {\cal S}_L(w^*_L-1),w^*_R\}}_{ \textrm{index } w^*_L-1},\underbrace{{\cal S}_R(w^*_L)}_{ \textrm{index } w^*_L},\dots,\underbrace{{\cal S}_R(k_R -1)}_{ \textrm{index } k_R -1}].
$$
% $$
% {\cal S}_T(i)= \begin{cases}
% \max \{ {\cal S}_L(i),w^*_R\}, \textrm{ for } i=0,\dots,w^*_L-1\\
% {\cal S}_R(i), \textrm{ for }  i= w^*_L,\dots,k_R -1
% \end{cases}
% $$ 
\end{itemize} 
\end{lemma}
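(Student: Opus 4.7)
The first three cases follow by direct analysis. If both subtrees are empty, $T$ is a single node whose unique LR-drawing is a point with left and right widths both $0$, so ${\cal S}_T=[0]$. If exactly one of $L$, $R$ is empty (say $L$), one can construct an LR-drawing of $T$ by placing $r_T$ one unit above the root of the non-empty subtree on the same vertical column, inheriting the left and right widths; this, together with a short lower-bound argument on the rule applied at $r_T$ (using ${\cal S}_R(i)\leq w^*_R$), gives ${\cal S}_T={\cal S}_R$.

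For the main case, in which both $L$ and $R$ are non-empty, the plan is to establish the formula index by index via matching upper and lower bounds on ${\cal S}_T(i)$. The key observation is the explicit effect of the two rules at $r_T$: if the right rule is applied, then $r_L$ is vertically aligned with $r_T$ and $\Gamma_R$ lies entirely to the right of the column of $r_T$, so the left width of $\Gamma_T$ equals the left width $\ell_L$ of $\Gamma_L$ and its right width equals $\max\{w_R,\rho_L\}$, where $w_R$ is the width of $\Gamma_R$ and $\rho_L$ is the right width of $\Gamma_L$; the symmetric identities hold if the left rule is applied.

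For the upper bound I would exhibit explicit drawings: for $i\in\{0,\dots,w^*_L-1\}$, apply the right rule with a drawing of $L$ witnessing ${\cal S}_L(i)$ and a minimum-width drawing of $R$, yielding right width $\max\{{\cal S}_L(i),w^*_R\}$; for $i\in\{w^*_L,\dots,k_R-1\}$, apply the left rule with a minimum-width drawing of $L$ (which fits to the left of $r_T$ since $i\geq w^*_L$) and a drawing of $R$ witnessing ${\cal S}_R(i)$, yielding right width ${\cal S}_R(i)$. For the lower bound I would argue: if $i<w^*_L$, the left rule is infeasible (it would force the width of $\Gamma_L$ to be at most $i<w^*_L$), and the right rule gives right width $\max\{w_R,\rho_L\}\geq\max\{w^*_R,{\cal S}_L(i)\}$ by the definitions of $w^*_R$ and ${\cal S}_L$ together with monotonicity of ${\cal S}_L$ (Property~\ref{pr:feasible-dominance}); if $i\geq w^*_L$, the left rule gives right width at least ${\cal S}_R(i)$ and the right rule gives at least $w^*_R\geq{\cal S}_R(i)$, since ${\cal S}_R(i)\leq{\cal S}_R(0)\leq w^*_R$ by Lemma~\ref{le:corner} and monotonicity.

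The delicate point I expect is verifying the length of the resulting sequence: it should terminate at index $k_R-1$ with value $0$ and have a positive second-to-last entry. The final zero follows from ${\cal S}_R(k_R-1)=0$ (Lemma~\ref{le:length-bound} applied to $R$); positivity of earlier entries comes from $w^*_R\geq 1$ in the ``left block'' of the formula and from the analogous property of ${\cal S}_R$ in the ``right block''. Keeping track of whether $k_L=w^*_L$ or $k_L=w^*_L+1$, and of the degenerate situation where $w^*_L\geq k_R$, will require a bit of casework but no new ideas.
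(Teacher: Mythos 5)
Your proposal is correct and follows essentially the same route as the paper's proof: the same case split on the emptiness of $L$ and $R$, the same division of the main case at $i<w^*_L$ versus $i\geq w^*_L$, the same explicit constructions (right rule with a witness for ${\cal S}_L(i)$, left rule with a minimum-width drawing of $L$) for the upper bounds, and the same rule-based case analysis invoking Lemma~\ref{le:corner} and monotonicity for the lower bounds. Your remark about the degenerate situation $w^*_L\geq k_R$ identifies a detail the paper's displayed formula glosses over (there the sequence simply terminates with a single $0$ at index $w^*_L$), and it is resolved exactly by the routine bookkeeping you anticipate.
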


\begin{proof}
We distinguish four cases, based on whether $L$ and $R$ are empty or not.

\begin{itemize}
\item If both $L$ and $R$ are empty, then $T$ consists of a single node, hence there is only one LR-drawing $\Gamma$ of $T$; both the left and the right width of $\Gamma$ are $0$, hence ${\cal S}_T=[0]$. 

\item If $L$ is empty and $R$ is not, we prove that ${\cal S}_T(i)={\cal S}_R(i)$, for any $i=0,\dots,k_R-1$. 

First, we prove that ${\cal S}_T(i)\leq {\cal S}_R(i)$. Consider an LR-drawing $\Gamma_R$ of $R$ with left width at most $i$ and right width ${\cal S}_R(i)$. Construct an LR-drawing $\Gamma_T$ of $T$ by applying the left rule at $r_T$, while using $\Gamma_R$ as the LR-drawing of $R$. Since $r_T$ and $r_R$ are on the same vertical line, the left (right) width of $\Gamma_T$ is equal to the left (resp.\ right) width of $\Gamma_R$, which is at most $i$ (resp.\ which is ${\cal S}_R(i)$). Hence, ${\cal S}_T(i)\leq {\cal S}_R(i)$.

Second, we prove that ${\cal S}_R(i)\leq {\cal S}_T(i)$. Consider an LR-drawing $\Gamma_T$ of $T$ with left width at most $i$ and right width ${\cal S}_T(i)$; denote by $\Gamma_R$ the LR-drawing of $R$ in $\Gamma_T$. If $\Gamma_T$ uses the left rule at $r_T$, then $r_T$ and $r_R$ are on the same vertical line; then the left (right) width of $\Gamma_R$ is equal to the left (resp.\ right) width of $\Gamma_T$, which is at most $i$ (resp.\ which is ${\cal S}_T(i)$). Hence, ${\cal S}_R(i)\leq {\cal S}_T(i)$. If $\Gamma_T$ uses the right rule at $r_T$, then $\Gamma_R$ is entirely to the right of $r_T$, hence ${\cal S}_T(i)= w^*_R$. By Lemma~\ref{le:corner}, the pair $(0,w^*_R)$ is feasible for $R$, hence ${\cal S}_R(i)\leq w^*_R$. Hence, ${\cal S}_R(i)\leq {\cal S}_T(i)$. 

\item If $R$ is empty and $L$ is not, the discussion is symmetric to the one for the previous case.
 
\item Finally, assume that neither $L$ nor $R$ is empty. In order to compute the value of ${\cal S}_T(i)$, we distinguish the case in which $i\leq w^*_L-1$ from the one in which $i\geq w^*_L$.

\begin{itemize}
\item Suppose first that $i\leq  w^*_L-1$; we prove that ${\cal S}_T(i) = \max \{ {\cal S}_L(i),w^*_R\}$.

First, we prove that ${\cal S}_T(i)\leq \max \{ {\cal S}_L(i),w^*_R\}$. Consider an LR-drawing $\Gamma_L$ of $L$ with left width at most $i$ and right width ${\cal S}_L(i)$. Also, consider an LR-drawing $\Gamma_R$ of $R$ with width $w^*_R$. Construct an LR-drawing $\Gamma_T$ of $T$ by applying the right rule at $r_T$, while using $\Gamma_L$ and $\Gamma_R$ as LR-drawings for $L$ and $R$, respectively. Since $r_T$ and $r_L$ are on the same vertical line, the left width of $\Gamma_T$ is equal to the left width of $\Gamma_L$, which is at most $i$, and the right width of $\Gamma_T$ is equal to the maximum between the right width of $\Gamma_L$ and $w^*_R$. Hence, ${\cal S}_T(i)\leq \max \{ {\cal S}_L(i),w^*_R\}$.

Second, we prove that ${\cal S}_T(i)\geq \max \{ {\cal S}_L(i),w^*_R\}$. Consider any LR-drawing $\Gamma_T$ of $T$ with left width at most $i$ and right width ${\cal S}_T(i)$. We have that $\Gamma_T$ uses the right rule at $r_T$. Indeed, if $\Gamma_T$ used the left rule at $r_T$, then the LR-drawing of $L$ in $\Gamma_T$ would be entirely to the left of $r_T$; hence, the left width of $\Gamma_T$ would be at least $w^*_L$, while it is at most $i$, by assumption. Since $\Gamma_T$ uses the right rule at $r_T$, the LR-drawing of $R$ in $\Gamma_T$ is entirely to the right of $r_T$, hence ${\cal S}_T(i)\geq w^*_R$. Further, $r_T$ and $r_L$ are on the same vertical line, thus the LR-drawing of $L$ in $\Gamma_T$ has left width at most $i$, and hence right width at least ${\cal S}_L(i)$; this implies that ${\cal S}_T(i)\geq {\cal S}_L(i)$. 

\item Suppose next that $i\geq  w^*_L$; we prove that ${\cal S}_T(i) = {\cal S}_R(i)$.

First, we prove that ${\cal S}_T(i)\leq {\cal S}_R(i)$. Consider an LR-drawing $\Gamma_L$ of $L$ with width $w^*_L$. Also, consider an LR-drawing $\Gamma_R$ of $R$ with left width at most $i$ and right width ${\cal S}_R(i)$. Construct an LR-drawing $\Gamma_T$ of $T$ by applying the left rule at $r_T$, while using $\Gamma_L$ and $\Gamma_R$ as LR-drawings for $L$ and $R$, respectively. Since $r_T$ and $r_R$ are on the same vertical line, the right width of $\Gamma_T$ is equal to the right width of $\Gamma_R$, which is ${\cal S}_R(i)$, and the left width of $\Gamma_T$ is equal to the maximum between $w^*_L$ and the left width of $\Gamma_R$; since $w^*_L$ and the left width of $\Gamma_R$ are both at most $i$, we have ${\cal S}_T(i)\leq {\cal S}_R(i)$.

Second, we prove that ${\cal S}_R(i)\leq {\cal S}_T(i)$. Consider any LR-drawing $\Gamma_T$ of $T$ with left width at most $i$. If $\Gamma_T$ uses the left rule at $r_T$, then $r_T$ and $r_R$ are on the same vertical line, thus the LR-drawing of $R$ in $\Gamma_T$ has left width at most $i$ and right width at most ${\cal S}_T(i)$. It follows that ${\cal S}_R(i)\leq {\cal S}_T(i)$. If $\Gamma_T$ uses the right rule at $r_T$, then the LR-drawing of $R$ in $\Gamma_T$ is entirely to the right of $r_T$, hence ${\cal S}_T(i)\geq w^*_R$. By Lemma~\ref{le:corner}, the pair $(0,w^*_R)$ is feasible for $R$, hence, ${\cal S}_R(i)\leq w^*_R$. It follows that ${\cal S}_R(i)\leq {\cal S}_T(i)$.
\end{itemize}
\end{itemize}

This concludes the proof. 
\end{proof}

We are now ready to show that the representation sequence of an ordered rooted binary tree $T$, and consequently the minimum width and area of any LR-drawing of $T$, can be computed efficiently.

\begin{theorem} \label{thm:compute-matrix}
The representation sequence of an $n$-node ordered rooted binary tree $T$ can be computed in $O(n\cdot w^*_T)\in O(n\cdot w^*_n)\in  O(n^{1.48})$ time. Further, an LR-drawing with minimum width can be constructed in the same time.
\end{theorem}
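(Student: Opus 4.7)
The plan is to compute the representation sequence ${\cal S}_{T_v}$ of every subtree $T_v$ of $T$ in a bottom-up traversal, obtaining $w^*_{T_v}$ as a by-product, and then to derive an optimal LR-drawing by a top-down pass. At a leaf the representation sequence is $[0]$. At an internal node $v$ with child subtrees $L_v$ and $R_v$, I would merge the already-computed ${\cal S}_{L_v}$ and ${\cal S}_{R_v}$ using the case analysis of Lemma~\ref{le:sequence-construction} (handling in the obvious way the degenerate situations where one of $L_v, R_v$ is empty or where $w^*_{L_v}$ exceeds $k_{R_v}-1$); each entry of ${\cal S}_{T_v}$ is then produced in $O(1)$ time from a constant number of entries of ${\cal S}_{L_v}$ and ${\cal S}_{R_v}$ together with the scalars $w^*_{L_v}$ and $w^*_{R_v}$. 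Immediately afterwards I extract $w^*_{T_v}$ by minimising $i + {\cal S}_{T_v}(i)+1$ over the sequence, as in Lemma~\ref{le:sequence-width}. Both steps take $O(k_{T_v})$ time.

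The running-time analysis rests on two observations. First, by Lemma~\ref{le:length-bound}, $k_{T_v} \leq w^*_{T_v} + 1$, so the total work performed at node $v$ is $O(w^*_{T_v} + 1)$. Second, $w^*_{T_v} \leq w^*_T$ for every subtree $T_v$ of $T$: a minimum-width LR-drawing of $T$ contains, as a sub-drawing, an LR-drawing of $T_v$ whose width cannot exceed that of the host drawing. Summing over the $n$ nodes of $T$ gives
\[
\sum_{v \in V(T)} O(w^*_{T_v} + 1) \;\in\; O(n \cdot w^*_T) \;\subseteq\; O(n \cdot w^*_n) \;\subseteq\; O(n^{1.48}),
\]
where the last inclusion invokes Chan's bound $w^*_n \in O(n^{0.48})$.

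To obtain an actual drawing, I would augment the bottom-up pass so that, for each entry of ${\cal S}_{T_v}$, we record which rule (left or right) realises it and which indices of ${\cal S}_{L_v}$ and ${\cal S}_{R_v}$ the recurrence of Lemma~\ref{le:sequence-construction} consults. This annotation costs only $O(k_{T_v})$ extra time and space per node. Let $i^*$ be an index attaining the minimum of Lemma~\ref{le:sequence-width} at the root; a top-down traversal starting from $r_T$ with target pair $(i^*, {\cal S}_T(i^*))$ then uses the stored annotations to determine, at every node, which rule to apply and which target pairs to pass to the left and right children, while the left/right rule itself prescribes the grid coordinates of each subdrawing. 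This second pass runs in $O(n)$ time, so the overall bound is unchanged.

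The main obstacle is the complexity analysis: it is essential that the per-node work scale with $w^*_{T_v}$ rather than with $|T_v|$. This depends on the length bound $k_{T_v} \leq w^*_{T_v}+1$ of Lemma~\ref{le:length-bound}, on the uniform inequality $w^*_{T_v}\leq w^*_T$, and on evaluating Lemma~\ref{le:sequence-construction} index-by-index in $O(1)$ amortised work per index, after which the final $O(n \cdot w^*_T)$ bound follows by the telescoping sum above.
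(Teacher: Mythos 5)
Your proposal is correct and follows essentially the same route as the paper's proof: a bottom-up pass that builds each ${\cal S}_{T_v}$ via Lemma~\ref{le:sequence-construction} in $O(w^*_{T_v})$ time per node (using the length bound of Lemma~\ref{le:length-bound}, the extraction of $w^*_{T_v}$ via Lemma~\ref{le:sequence-width}, and the observation $w^*_{T_v}\leq w^*_T$), followed by a top-down pass that recovers an optimal drawing. The only difference is cosmetic: you store back-pointers so the second pass runs in $O(n)$ time, whereas the paper re-derives the rule choice at each node in $O(w^*_T)$ time via the test $w^*_L\leq \alpha_{T'}$; both variants stay within the claimed $O(n\cdot w^*_T)$ bound.
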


\begin{proof}
We compute the representation sequence associated to each subtree $T'$ of $T$ (and the value $w^*_{T'}$) by means of a bottom-up traversal of $T$. If $T'$ is a single node, then ${\cal S}_{T'}=[0]$ and $w^*_{T'}=1$. If $T'$ is not a single node, then assume that the representation sequences associated to the subtrees of $r_{T'}$ have already been computed. By Lemma~\ref{le:sequence-construction}, the value ${\cal S}_{T'}(i)$ can be computed in $O(1)$ time by the formula $\max \{ {\cal S}_L(i),w^*_R\}$ if $0\leq i\leq w^*_L-1$, or by the formula ${\cal S}_R(i)$ if $w^*_L\leq i \leq k_R -1$. Further, by Lemma~\ref{le:length-bound} the representation sequence ${\cal S}_{T'}$ has $O(w^*_{T'})\in  O(w^*_{T})$ entries, hence it can be computed in $O(w^*_{T'})\in  O(w^*_{T})$ time; the value $w^*_{T'}$ can also be computed in $O(w^*_{T'})$ time from ${\cal S}_{T'}$ as in Lemma~\ref{le:sequence-width}. Summing the $O(w^*_{T})$ bound up over the $n$ nodes of $T$ gives the $O(n\cdot w^*_T)$ bound. The bounds $O(n\cdot w^*_n)$ and $O(n^{1.48})$ respectively follow from the fact that $w^*_T \leq w^*_n$, by definition, and $w^*_n \in O(n^{0.48})$, by the results of Chan~\cite{c-nlabdbt-02}. 

Once the representation sequence for each subtree of $T$ has been computed, an LR-drawing $\Gamma_T$ of $T$ with width $w^*_T$ can be constructed in $O(n\cdot w^*_T)$ time by means of a top-down traversal of $T$. First, find a pair $(\alpha_T,\beta_T)$ such that $\alpha_T+\beta_T+1=w^*_T$ and such that ${\cal S}_{T}(\alpha_T)=\beta_T$. This pair exists and can be found in $O(w^*_T)$ time by Lemma~\ref{le:sequence-width}. Further, let $x(r_T)=0$ and $y(r_T)=0$. 

Now assume that, for some subtree $T'$ of $T$ (initially $T'=T$), a quadruple $( \alpha_{T'},\beta_{T'},x(r_{T'}),y(r_{T'}))$ has been associated to $T'$, where $\alpha_{T'}$ and $\beta_{T'}$ represent the left and right width of an LR-drawing $\Gamma_{T'}$ of $T'$ we aim to construct, respectively, and $x(r_{T'})$ and $y(r_{T'})$ are the coordinates of $r_{T'}$ in $\Gamma_{T'}$. Let $L$ and $R$ be the left and right subtrees of $r_{T'}$, respectively. 

\begin{itemize}
	\item If $w^*_{L}\leq \alpha_{T'}$, then the left rule is used at $r_{T'}$ to construct $\Gamma_{T'}$. Find a pair $(\alpha_{L},\beta_{L})$ satisfying $\alpha_{L}+\beta_{L}+1=w^*_{L}$ and ${\cal S}_{L}(\alpha_{L})=\beta_{L}$. This pair exists and can be found in $O(w^*_{L})\in O(w^*_T)$ time by Lemma~\ref{le:sequence-width}. Let $x(r_{L})=x(r_{T'})-\beta_{L}-1$ and $y(r_{L})=y(r_{T'})-1$. Visit $L$ with quadruple $(\alpha_{L},\beta_{L},x(r_{L}),y(r_{L}))$ associated to it; also, let $\alpha_{R}=\alpha_{T'}$, $\beta_{R}=\beta_{T'}$, $x(r_{R})=x(r_{T'})$, and $y(r_{R})=y(r_{T'})-|L|-1$. Visit $R$ with quadruple $(\alpha_{R},\beta_{R},x(r_{R}),y(r_{R}))$ associated to it.	
	\item If $w^*_{L}>\alpha_{T'}$, then the right rule is used at $r_{T'}$ to construct $\Gamma_{T'}$. Find a pair $(\alpha_{R},\beta_{R})$ satisfying $\alpha_{R}+\beta_{R}+1=w^*_{R}$ and ${\cal S}_{R}(\alpha_{R})=\beta_{R}$. This pair exists and can be found in $O(w^*_{R})\in O(w^*_T)$ time by Lemma~\ref{le:sequence-width}. Let $x(r_{R})=x(r_{T'})+\alpha_{R}+1$ and $y(r_{R})=y(r_{T'})-1$. Visit $R$ with quadruple $(\alpha_{R},\beta_{R},x(r_{R}),y(r_{R}))$ associated to it; also, let $\alpha_{L}=\alpha_{T'}$, $\beta_{L}=\beta_{T'}$, $x(r_{L})=x(r_{T'})$, and $y(r_{L})=y(r_{T'})-|R|-1$. Visit $L$ with quadruple $(\alpha_{L},\beta_{L},x(r_{L}),y(r_{L}))$ associated to it.
\end{itemize}

The correctness of the algorithm comes from Lemma~\ref{le:sequence-construction} (and its proof). The $O(n\cdot w^*_T)$ running time comes from the fact that the algorithm uses $O(w^*_T)$ time at each node of $T$.
\end{proof}

\begin{corollary} \label{cor:area-lr}
A minimum-area LR-drawing of an $n$-node ordered rooted binary tree $T$ can be constructed in $O(n\cdot w^*_T)\in O(n\cdot w^*_n)\in  O(n^{1.48})$ time.
\end{corollary}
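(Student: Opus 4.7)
The plan is to reduce the corollary to Theorem~\ref{thm:compute-matrix} by showing that every LR-drawing of an $n$-node ordered rooted binary tree has height exactly $n$. Once this is established, the area of every LR-drawing of $T$ is precisely $n$ times its width, so minimizing area coincides with minimizing width; the algorithm of Theorem~\ref{thm:compute-matrix} then constructs a minimum-width, and hence minimum-area, LR-drawing in the stated $O(n\cdot w^*_T)\in O(n\cdot w^*_n)\in O(n^{1.48})$ time.

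To prove the height claim I would argue by induction on $n$. For $n=1$ the LR-drawing is a single point spanning one grid row. For $n>1$, the left or right rule is applied at $r_T$: in either case $r_T$ occupies its own row, and the recursively constructed drawings $\Gamma_L$ and $\Gamma_R$ are stacked strictly below $r_T$, with the top of the upper one exactly one unit below $r_T$ and the top of the lower one exactly one unit below the bottom of the upper one. By the inductive hypothesis $\Gamma_L$ spans $|L|$ consecutive rows and $\Gamma_R$ spans $|R|$ consecutive rows, so the overall vertical extent is $1+|L|+|R|=n$ grid rows. The subcase in which one of $L$, $R$ is empty is analogous (only one subtree is placed below $r_T$, with one unit of vertical gap). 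Note that this height is independent of the choice between the left and right rule at each internal node, so \emph{every} LR-drawing of $T$, irrespective of the underlying LR-algorithm, has height exactly $n$.

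I do not anticipate any real obstacle: indeed the introduction already records that the height of any LR-drawing of an $n$-node tree equals $n$, which is the only ingredient needed beyond Theorem~\ref{thm:compute-matrix}. The corollary then follows at once, since the LR-drawing of width $w^*_T$ produced by that theorem has area $n\cdot w^*_T$, and no LR-drawing can have smaller area because its width is at least $w^*_T$ and its height is exactly $n$.
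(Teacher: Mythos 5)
Your proposal is correct and follows exactly the paper's route: the paper's proof of this corollary is the one-line observation that every LR-drawing has height exactly $n$, so minimizing area reduces to minimizing width via Theorem~\ref{thm:compute-matrix}. Your inductive verification of the height claim is a fine (if optional) elaboration of a fact the paper states without proof.
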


\begin{proof}
Since any LR-drawing has height exactly $n$, the statement follows from Theorem~\ref{thm:compute-matrix}.
\end{proof}

%%%%%%%%%%%%%%%%%%%%%%%%%%
%%%%%%%%%%%%%%%%%%%%%%%%%%
%%%%%%%%%%%%%%%%%%%%%%%%%%
%%%%%%%%%%%%%%%%%%%%%%%%%%
%%%%%%%%%%%%%%%%%%%%%%%%%%
%%%%%%%%%%%%%%%%%%%%%%%%%%

\subsection{A Polynomial Lower Bound for the Width of LR-drawings} \label{se:poly-lower}

We describe an infinite family of ordered rooted binary trees $T_h$ that require large width in any LR-drawing. In order to do that, we first define an infinite family of sequences of integers. Sequence $\sigma_1$ consists of the integer $1$ only; for any $\ell>1$, sequence $\sigma_{\ell}$ is composed of two copies of $\sigma_{{\ell}-1}$ separated by the integer ${\ell}$, that is, $\sigma_{\ell} = \sigma_{{\ell}-1},{\ell},\sigma_{{\ell}-1}$. Thus, for example, $\sigma_4=1,2,1,3,1,2,1,4,1,2,1,3,1,2,1$. For $i=1,\dots,2^{\ell}-1$, we denote by $\sigma_{\ell}(i)$ the $i$-th term of $\sigma_{\ell}$. While here we defined $\sigma_{\ell}$ as a finite sequence with length $2^{\ell}-1$, the infinite sequence $\sigma_{\ell}$ with ${\ell}\rightarrow \infty$ is well-known and called {\em ruler function}: The $i$-th term of the sequence is the exponent of the largest power of $2$ which divides $2i$. See entry A001511 in the Encyclopedia of Integer Sequences~\cite{s-is-74}.

\begin{figure}[htb]
	\captionsetup[subfigure]{labelformat=empty}
\centering
\subfloat[]{
\includegraphics[height=.36\textwidth]{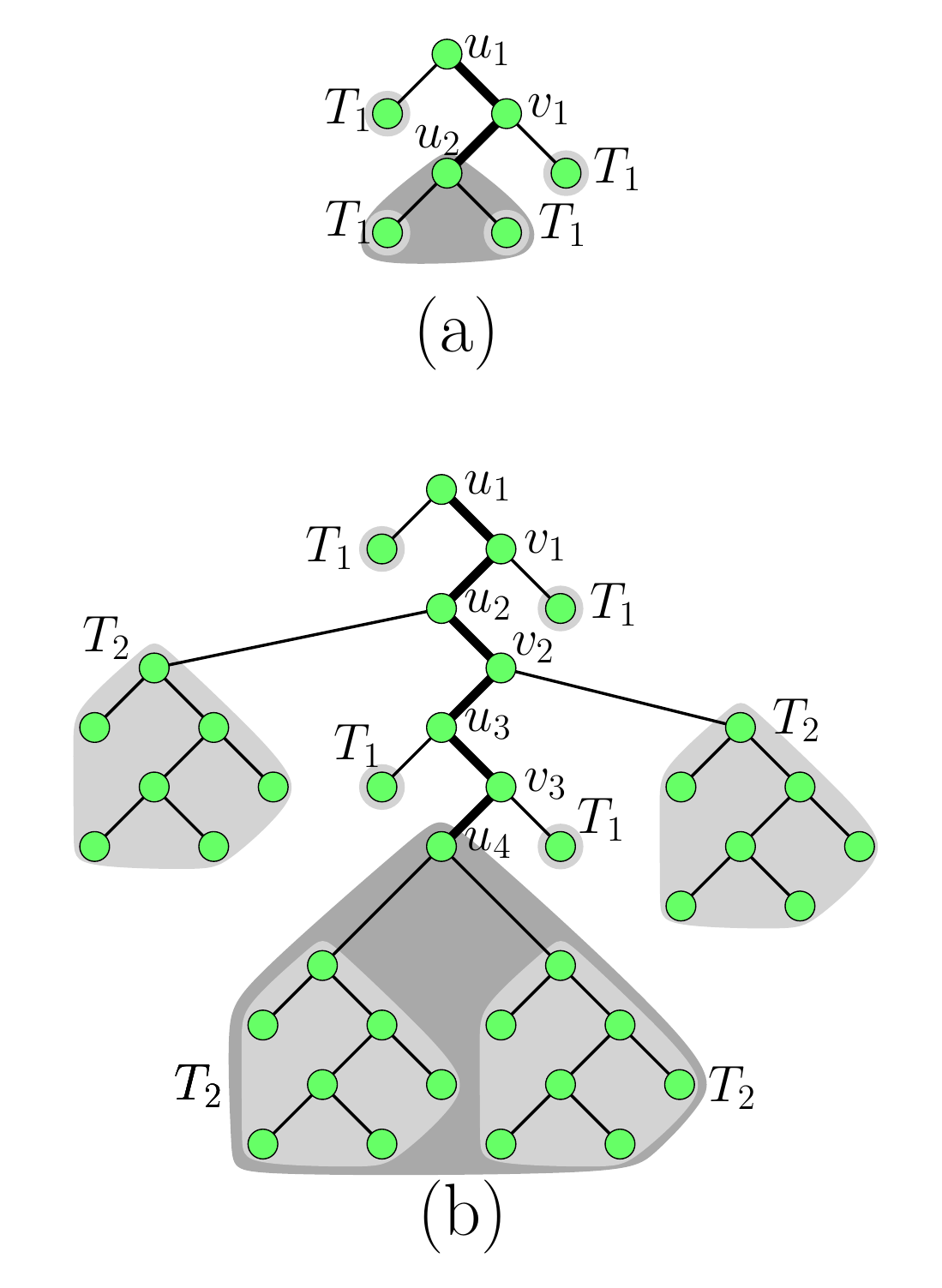}}
\subfloat[]{
\includegraphics[height=.36\textwidth]{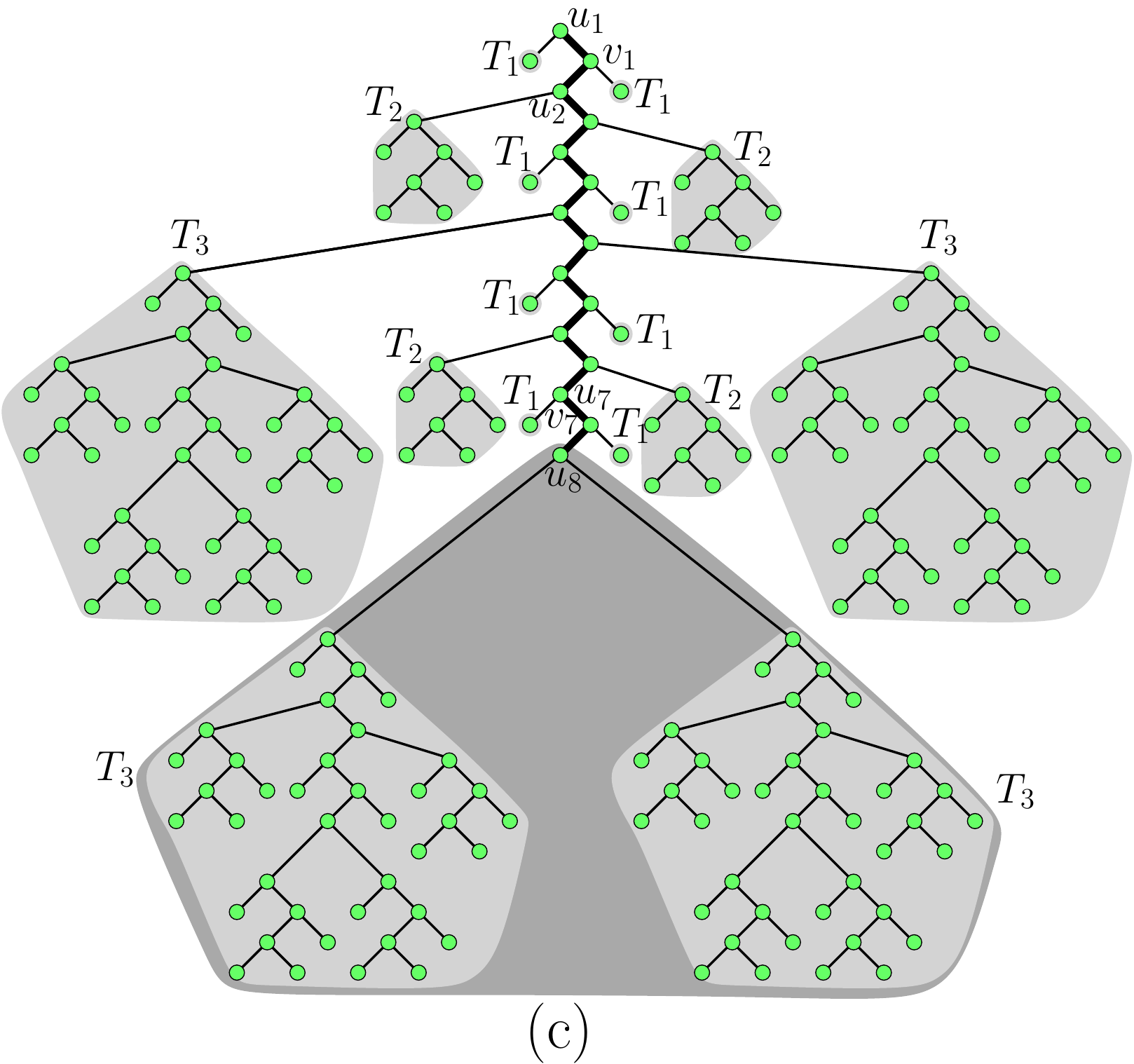}}
\subfloat[]{
\includegraphics[height=.36\textwidth]{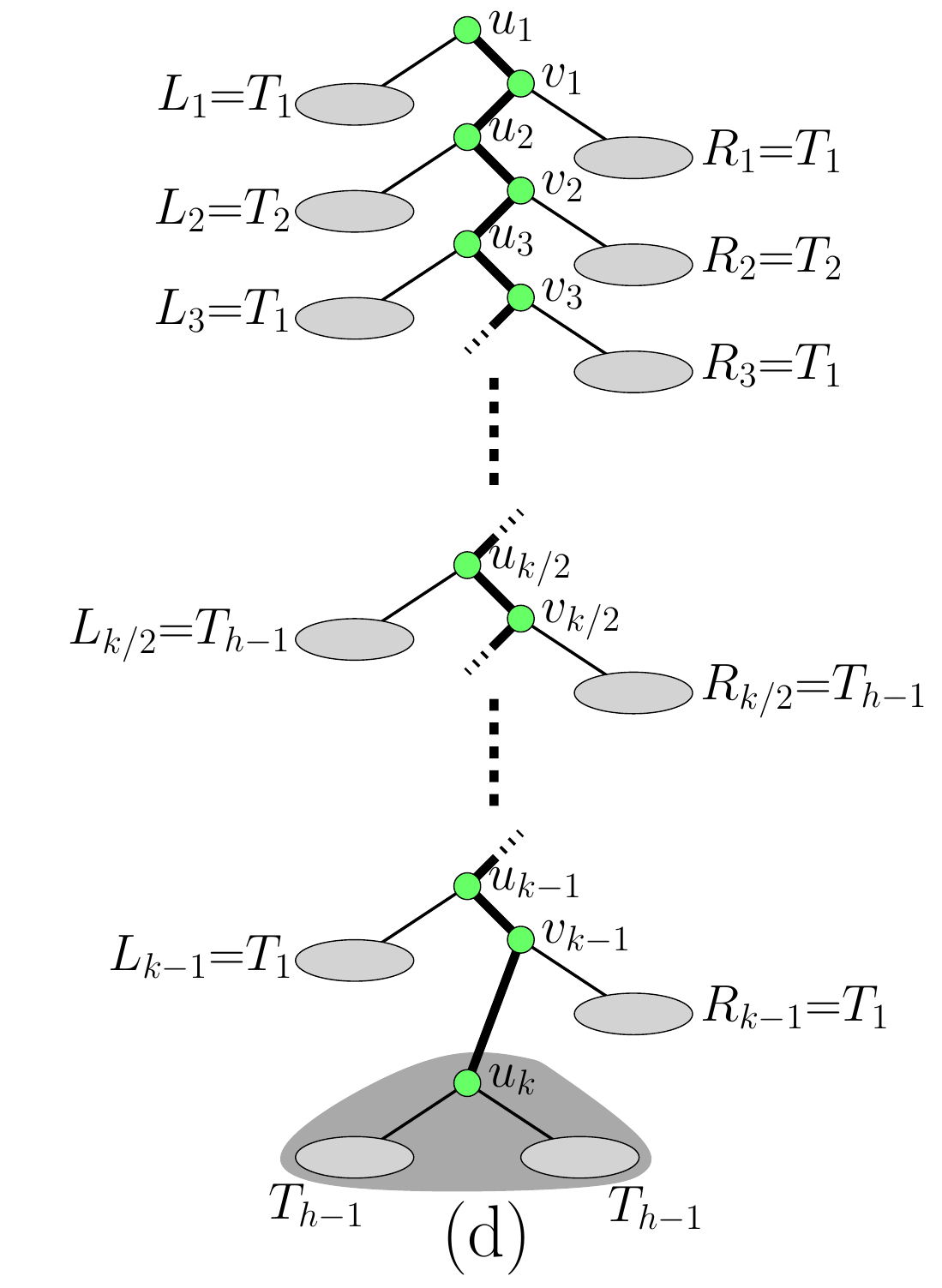}}
\caption{Illustration for Theorem~\ref{th:LR}. (a) $T_2$. (b) $T_3$. (c) $T_4$. (d) $T_h$.}
\label{fig:theory-lower-bound}
\end{figure}

We now describe the recursive construction of $T_h$. Tree $T_1$ consists of a single node. If $h>1$, tree $T_h$ is defined as follows (refer to Fig.~\ref{fig:theory-lower-bound}). First, $T_h$ contains a path $(u_1,v_1,u_2,v_2,\dots,u_{k-1},v_{k-1},u_{k})$ with $2^{h}-1$ nodes (note that $k=2^{h-1}$), where $u_1$ is the root of $T_h$; for $i=1,\dots,k-1$, node $v_i$ is the right child of $u_i$ and node $u_{i+1}$ is the left child of $v_i$. Further, take two copies of $T_{h-1}$ and let them be the left and right subtrees of $u_k$, respectively. Finally, for $i=1,\dots,k-1$, take two copies of $T_{\sigma_{h-1}(i)}$ and let them be the left subtree $L_i$ of $u_i$ and the right subtree $R_i$ of $v_i$, respectively. In the next two lemmata, we prove that tree $T_h$ requires a ``large width'' in any LR-drawing and that it has ``few'' nodes.

%%%%%%%%%
%%%%%%%%%
%%%%%%%%%
%%%%%%%%%
%%%%%%%%%
%%%%%%%%%
%%%%%%%%%
%%%%%%%%%
%%%%%%%%%
%%%%%%%%%
%%%%%%%%%
%%%%%%%%%

\begin{lemma} \label{le:LR-width}
The width of any LR-drawing of $T_h$ is at least $2^h-1$.
\end{lemma}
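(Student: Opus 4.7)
My plan is to prove the lemma by induction on $h$. The base case $h=1$ is immediate, since $T_1$ is a single node and every LR-drawing of it has width $1=2^1-1$. For the inductive step, assume the claim for all $\ell<h$ and apply the recursive characterization of $w^*_T$ stated in the introduction, namely $w^*_T=\min_P\{1+\max_L w^*_L+\max_R w^*_R\}$ where $P$ ranges over paths in $T$ starting at $r_T$ and $L,R$ range over left and right subtrees of $P$. It therefore suffices to prove that, for every path $P$ in $T_h$ starting at $u_1$, one has $\max_L w^*_L+\max_R w^*_R\ge 2^h-2$.

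The crucial structural observation is that $\sigma_{h-1}(2^{h-2})=h-1$, so the subtree $L_{2^{h-2}}$ hanging off the left child of $u_{2^{h-2}}$ and the subtree $R_{2^{h-2}}$ hanging off the right child of $v_{2^{h-2}}$ are both copies of $T_{h-1}$; by the inductive hypothesis each has $w^*\ge 2^{h-1}-1$. I would split into cases according to the relationship between $P$ and the ``middle'' spine nodes $u_{2^{h-2}}$ and $v_{2^{h-2}}$. If $P$ contains both of them, then $L_{2^{h-2}}$ is a left subtree of $P$ and $R_{2^{h-2}}$ is a right subtree of $P$, so $\max_L+\max_R\ge 2(2^{h-1}-1)=2^h-2$. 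If $P$ terminates exactly at $u_{2^{h-2}}$, the right subtree $V_{2^{h-2}}$ of $u_{2^{h-2}}$ is off $P$ and satisfies $w^*_{V_{2^{h-2}}}\ge w^*_{R_{2^{h-2}}}=2^{h-1}-1$ (any LR-drawing of $V_{2^{h-2}}$ restricted to $R_{2^{h-2}}$ is an LR-drawing of $T_{h-1}$), giving the same bound; the case where $P$ terminates at $v_{2^{h-2}}$ is symmetric.

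The remaining cases are those in which $P$ terminates strictly before $u_{2^{h-2}}$ or detours off the spine at some $u_j$ or $v_j$ with $j\le 2^{h-2}$. In each of these, the ``opposite-side'' subtree off $P$ (namely $V_j$ or $U_{j+1}$) still contains the middle portion of the spine together with $L_{2^{h-2}}$, $R_{2^{h-2}}$, and the two copies of $T_{h-1}$ attached at $u_k$, so a straightforward containment argument gives a lower bound on its $w^*$ of order at least $2^{h-1}-1$. The hardest subcase is the one in which $P$ passes through $u_{2^{h-2}}$ into $L_{2^{h-2}}=T_{h-1}$: there $L_{2^{h-2}}$ itself lies on $P$ and no longer contributes as an off-$P$ subtree, so I would combine the inductive hypothesis applied to $L_{2^{h-2}}$ (which bounds from below the sum of the maxima of left/right subtrees of $P$ within $L_{2^{h-2}}$ by $2^{h-1}-2$) with a strengthened lower bound on $w^*_{V_{2^{h-2}}}$ of order roughly $3\cdot 2^{h-2}-1$, together with the contribution $\ge 2^{h-2}-1$ coming from the subtree $L_{2^{h-3}}=T_{h-2}$ hanging off the spine before the middle. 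I expect the main obstacle of the proof to be establishing this strengthened lower bound on $w^*_{V_{2^{h-2}}}$, which I would prove by an analogous recursive case analysis applied to $V_{2^{h-2}}$, whose structure mirrors that of $T_h$ with a shortened first half of the spine.
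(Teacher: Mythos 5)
Your overall strategy (induction on $h$, the recursive formula for $w^*$ over root-starting paths, and the observation that the two middle off-spine subtrees $L_{2^{h-2}}$ and $R_{2^{h-2}}$ are copies of $T_{h-1}$) is in the spirit of the paper's proof, and it does dispose of the cases in which the path $P$ reaches the middle of the spine, as well as of detours occurring after the middle. The genuine gap is in the cases where $P$ leaves the spine, or terminates, strictly before $u_{2^{h-2}}$. There the ``straightforward containment argument'' you invoke only yields $w^*\geq 2^{h-1}-1$ for the opposite-side subtree (it contains a copy of $T_{h-1}$), while the same-side subtrees $L_1,\dots,L_{j-1}$ available up to that point have width at most $2^{\max_{i<j}\sigma_{h-1}(i)}-1$, which for $j\leq 2^{h-3}$ is at most $2^{h-3}-1$; the two together give only about $2^{h-1}+2^{h-3}-2$, far short of the required $2^h-2$. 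For a concrete failure, take $P=(u_1,r_{L_1})$ with $L_1=T_1$: then $\max_L w^*_L=0$ and the only right subtree of $P$ is the one rooted at $v_1$, so your argument needs $w^*$ of that subtree to be at least $2^h-2$ --- a full factor of $2$ beyond what containment of a copy of $T_{h-1}$ gives.

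What is actually needed --- and what the paper proves --- is the uniform strengthened claim that the subtree $T(u_j)$ rooted at the $j$-th spine node requires width $2^{h-1}+k-j$ for \emph{every} $j$, so that the bound degrades by exactly one unit per spine step and the loss incurred by detouring at position $m$ is recouped by the off-spine subtrees $L_j,\dots,L_{m-1}$ (or $R_j,\dots,R_{m-1}$) passed along the way. Making that recoupment work at an \emph{arbitrary} detour position, not just at the dyadic positions $2^{h-2}$ and $2^{h-3}$ you consider, is exactly the role of Property~\ref{pr:sequence-max}: among any $x$ consecutive elements of $\pi_{h-1}$ there is one of value at least $x$. Your proposal gestures at the strengthening only for the single subtree $V_{2^{h-2}}$ and defers its proof to ``an analogous recursive case analysis''; but that analysis (which the paper carries out by taking a maximal counterexample index $j$ and combining the $2^{h-1}+k-(m+1)$ bound for $T(u_{m+1})$ with the ruler-sequence property) is the heart of the proof, and without it your case analysis does not close.
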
 

\begin{proof}
The proof is by induction on $h$. The base case $h=1$ is trivial. 

In order to discuss the inductive case, we define another infinite family of sequences of integers, which we denote by $\pi_\ell$. Sequence $\pi_1$ consists of the integer $1$ only; for any $\ell>1$, we have $\pi_\ell = \pi_{\ell-1},2^\ell-1,\pi_{\ell-1}$. Thus, for example, $\pi_4=1,3,1,7,1,3,1,15,1,3,1,7,1,3,1$. For $i=1,\dots,2^{\ell}-1$, we denote by $\pi_\ell(i)$ the $i$-th element of $\pi_\ell$. The infinite sequence $\pi_\ell$ with $\ell\rightarrow \infty$ is well-known: The $i$-th term of the sequence is equal to $2^{x+1}-1$, where $x$ is the exponent of the largest power of $2$ which divides $i$. See entry A038712 in the Encyclopedia of Integer Sequences~\cite{s-is-74}.     

While sequence $\sigma_{h-1}$ was used for the construction of $T_h$ (recall that $L_i$ and $R_i$ are two copies of $T_{\sigma_{h-1}(i)}$), sequence $\pi_{h-1}$ is useful for the study of the minimum width of an LR-drawing of $T_h$. Indeed, by induction any LR-drawing of $L_i$ requires width $2^{\sigma_{h-1}(i)}-1$, which is equal to $\pi_{h-1}(i)$. Hence, the widths required by $L_1,\dots,L_{k-1}$ are $\pi_{h-1}(1),\dots,\pi_{h-1}(k-1)$, respectively; that is, they form the sequence $\pi_{h-1}$. A similar statement holds true for $R_1,\dots,R_{k-1}$. We are going to exploit the following.

\begin{property} \label{pr:sequence-max}
Let $\ell$ and $x$ be integers such that $\ell \geq 1$ and $1\leq x\leq 2^{\ell}-1$. For any $x$ consecutive elements in $\pi_{\ell}$, there exists one whose value is at least $x$.  
\end{property} 

\begin{proof}
We prove the statement by induction on ${\ell}$. If ${\ell}=1$, then $x=1$ and the statement follows since $\pi_1(1)=1$. Now assume that ${\ell}>1$ and consider any $x$ consecutive elements in $\pi_{\ell}$. Recall that $\pi_{\ell} = \pi_{\ell-1},2^\ell-1,\pi_{\ell-1}$. If all the $x$ elements belong to the first repetition of $\pi_{\ell-1}$ in $\pi_{\ell}$, or if all the $x$ elements belong to the second repetition of $\pi_{\ell-1}$ in $\pi_{\ell}$, then $x\leq 2^{\ell-1}-1$ and the statement follows by induction. Otherwise, since the $x$ elements are consecutive, the ``central'' element whose value is $2^{\ell}-1$ is among them. Then the statement follows since $x\leq 2^{\ell}-1$.
\end{proof}

We are now ready to discuss the inductive case of the lemma. Consider the subtrees $T(u_1),\dots,T(u_k)$ of $T_h$ rooted at $u_1,\dots,u_k$, respectively (note that $T(u_1)=T_h$). We claim that $T(u_j)$ requires width $2^{h-1}+k-j$ in any LR-drawing, for $j=1,\dots,k$. The lemma follows from the claim, as the latter (with $j=1$) implies that $T_h$ requires width $2^{h-1}+k-1= 2^h-1$ in any LR-drawing. 

%we need to argue about the representation sequences of the subtrees of $T_h$ rooted at $u_1,\dots,u_k$, which we respectively denote by $T(u_1),\dots,T(u_k)$, where $T(u_1)=T_h$.  

%We introduce some terminology to simplify the upcoming discussion. For a tree $T$, we say that an element ${\cal S}_{T}(i)$ of ${\cal S}_{T}$ is {\em interesting} if $i=0$, or if $i>0$ and ${\cal S}_{T}(i-1)>{\cal S}_{T}(i)$. By Property~\ref{pr:feasible-dominance}, we have ${\cal S}_{T}(i-1)\geq {\cal S}_{T}(i)$. Hence, if an element ${\cal S}_{T}(i)$ of ${\cal S}_{T}$ is not interesting, then ${\cal S}_{T}(i-1)= {\cal S}_{T}(i)$, and thus the value of ${\cal S}_{T}(i)$ is equal to the value of the interesting element ${\cal S}_{T}(j)$ such that $j<i$ is maximum. Sequence ${\cal S}_{T}$ is hence fully determined by its interesting elements. For example, the representation sequence of $T_3$, which is $[6,5,5,3,3,1,0]$, can be deferred from its interesting elements which are ${\cal S}_{T_3}(0)=6$, ${\cal S}_{T_3}(1)=5$, ${\cal S}_{T_3}(3)=3$, ${\cal S}_{T_3}(5)=1$, and ${\cal S}_{T_3}(6)=0$.   

Assume, for a contradiction, that the claim is not true, and let $j\in \{1,\dots,k\}$ be the maximum index such that there exists an LR-drawing $\Gamma$ of $T(u_j)$ whose width is less than $2^{h-1}+k-j$. First, since the subtrees of $u_k$ are two copies of $T_{h-1}$ and since by the inductive hypothesis $T_{h-1}$ requires width $2^{h-1}-1$ in any LR-drawing, by Lemma~\ref{le:exploding-structure} the representation sequence of $T(u_k)$ is 

$${\cal S}_{T(u_k)}=[\underbrace{2^{h-1}-1}_{\textrm{index } 0},\dots,\underbrace{2^{h-1}-1}_{\textrm{index } 2^{h-1}-2},\underbrace{0}_{\textrm{index } 2^{h-1}-1}].$$

Hence, $T(u_k)$ requires width $2^{h-1}$ in any LR-drawing, which implies that $j<k$. Let $\alpha$ and $\beta$ be the left and right width of $\Gamma$, respectively. In order to derive a contradiction, we prove that $\alpha+\beta+1\geq 2^{h-1}+k-j$. 

Suppose first (refer to Fig.~\ref{fig:proof-lower}(a)) that $\Gamma$ is constructed by using the left rule at $u_j,\dots,u_{k-1}$ and the right rule at $v_j,\dots,v_{k-1}$, hence nodes $u_j,\dots,u_{k-1},u_k$ and $v_j,\dots,v_{k-1}$ are all aligned on the same vertical line. Then $\alpha$ ($\beta$) is larger than or equal to the widths of $L_j,\dots,L_{k-1}$ (resp.\ of $R_j,\dots,R_{k-1}$) in $\Gamma$. We prove that $\alpha\geq 2^{h-1}-1$ or $\beta\geq 2^{h-1}-1$. If $\Gamma$ has left width $\alpha\leq 2^{h-1}-2$, then the LR-drawing of $T(u_k)$ in $\Gamma$ also has left width at most $2^{h-1}-2$, given that $u_j$ and $u_k$ are vertically aligned; since ${\cal S}_{T(u_k)}(2^{h-1}-2)= 2^{h-1}-1$, it follows that the right width of the LR-drawing of $T(u_k)$ in $\Gamma$ is at least $2^{h-1}-1$, and $\Gamma$ has right width $\beta\geq 2^{h-1}-1$. This proves that $\alpha\geq 2^{h-1}-1$ or $\beta\geq 2^{h-1}-1$. Assume that $\alpha\geq 2^{h-1}-1$, as the case $\beta\geq 2^{h-1}-1$ is symmetric. By induction, the width of the drawing of $R_i$ in $\Gamma$ is at least $\pi_{h-1}(i)$. Hence, the widths of the subtrees $R_j,\dots,R_{k-1}$ form a sequence of $k-j\geq 1$ consecutive elements of $\pi_{h-1}$. By Property~\ref{pr:sequence-max}, there exists an element $\pi_{h-1}(i)$ whose value is at least $k-j$. Then $\beta\geq k-j$ and $\alpha+\beta+1\geq (2^{h-1}-1) + (k-j) +1 = 2^{h-1}+k-j$, a contradiction.

\begin{figure}[htb]
	\centering
	\hfill
	\subfloat[]{
		\includegraphics[height=.45\textwidth,page=1]{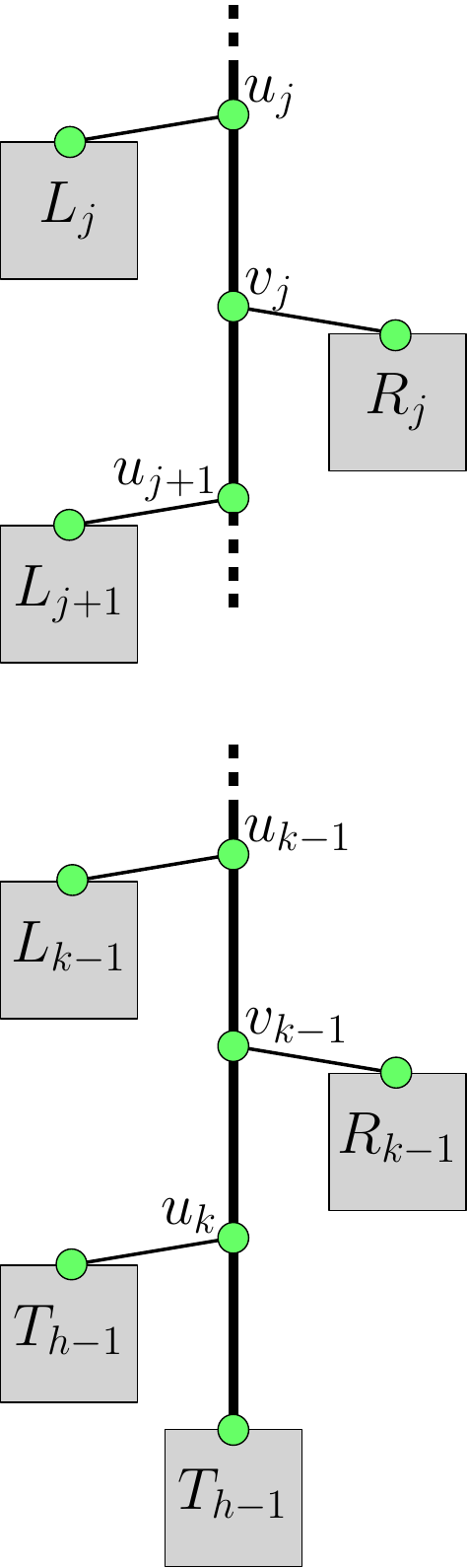}}
	\hfill
	\subfloat[]{
		\includegraphics[height=.45\textwidth,page=2]{Figures/lemma6.pdf}}
	\hfill \  
	\caption{Illustration for the proof of Lemma~\ref{le:LR-width}. (a) $\Gamma$ uses the left rule at $u_j,\dots,u_{k-1}$ and the right rule at $v_j,\dots,v_{k-1}$. (b)  $\Gamma$ uses the left rule at $u_j,\dots,u_{m-1}$ and the right rule at $v_j,\dots,v_{m-1},u_m$.}
	\label{fig:proof-lower}
\end{figure}

Suppose next (refer to Fig.~\ref{fig:proof-lower}(b)) that, for some integer $m$ with $j\leq m\leq k-1$, drawing $\Gamma$ is constructed by using the left rule at $u_j,\dots,u_{m-1}$, the right rule at $v_j,\dots,v_{m-1}$, and the right rule at $u_m$. Hence, nodes $u_j,\dots,u_m$ and $v_j,\dots,v_{m-1}$ are all aligned on the same vertical line $d$, however $v_m$ is to the right of $d$. Since $L_j,\dots,L_{m-1}$ lie to the left of $d$ in $\Gamma$, we have that $\alpha$ is larger than or equal to the widths of $L_j,\dots,L_{m-1}$. By the maximality of $j$, we have that $T(u_{m+1})$ requires width $2^{h-1}+k-(m+1)$ in any LR-drawing. Since the drawing of the subtree of $T_h$ rooted at $v_m$ is to the right of $d$ in $\Gamma$, it follows that the drawing of $T(u_{m+1})$ is also to the right of $d$ in $\Gamma$, hence $\beta\geq 2^{h-1}+k-(m+1)$. Now, if $m=j$, we have that $\alpha+\beta+1\geq (2^{h-1}+k-j-1) + 1 = 2^{h-1}+k-j$, a contradiction. Hence, we can assume that $m>j$. By induction, the width of the drawing of $L_i$ in $\Gamma$ is at least $\pi_{h-1}(i)$. Hence, the widths of the subtrees $L_j,\dots,L_{m-1}$ form a sequence of $m-j\geq 1$ consecutive elements of $\pi_{h-1}$. By Property~\ref{pr:sequence-max}, there exists an element $\pi_{h-1}(i)$ whose value is at least $m-j$. Then $\alpha\geq m-j$ and $\alpha+\beta+1\geq (m-j) +(2^{h-1}+k-m-1) + 1 = 2^{h-1}+k-j$, a contradiction.

Finally, the case in which, for some integer $m$ with $j\leq m\leq k-1$, drawing $\Gamma$ is constructed by using the left rule at $u_j,\dots,u_m$, the right rule at $v_j,\dots,v_{m-1}$, and the left rule at $v_m$ is symmetric to the previous one. This concludes the proof of the lemma. 
\end{proof}

%%%%%%%%%
%%%%%%%%%
%%%%%%%%%
%%%%%%%%%
%%%%%%%%%
%%%%%%%%%
%%%%%%%%%
%%%%%%%%%

\begin{lemma} \label{le:LR-nodes}
The number of nodes of $T_h$ is at most $(3+\sqrt 5)^h$.
\end{lemma}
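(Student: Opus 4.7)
My plan is to set up a linear recurrence for $n_h := |T_h|$ and then solve it in closed form. From the recursive construction of $T_h$ I read off: the path $(u_1,v_1,\dots,u_k)$ contributes $2^h-1$ nodes, $u_k$ has two copies of $T_{h-1}$ beneath it, and for $i=1,\dots,2^{h-1}-1$ there are two subtrees (one at $u_i$, one at $v_i$) which are copies of $T_{\sigma_{h-1}(i)}$. By the definition $\sigma_\ell=\sigma_{\ell-1},\ell,\sigma_{\ell-1}$, an easy induction shows that the value $j$ occurs exactly $2^{\ell-j}$ times in $\sigma_\ell$ for $1\le j\le \ell$. Collecting contributions by value yields
\begin{equation*}
n_h \;=\; (2^h-1) \;+\; 2\,n_{h-1} \;+\; \sum_{j=1}^{h-1} 2^{h-j}\,n_j, \qquad n_1=1,\ n_2=7.
\end{equation*}

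Next I would eliminate the sum. Writing the analogous identity for $n_{h-1}$, multiplying by $2$, and subtracting from the identity for $n_h$, all the summation terms telescope and I obtain the second-order linear recurrence
\begin{equation*}
n_h \;=\; 6\,n_{h-1} \;-\; 4\,n_{h-2} \;+\; 1,
\end{equation*}
which I would verify on the base values ($n_3=6\cdot 7-4\cdot 1+1=39$, matching a direct count). The characteristic polynomial $x^2-6x+4$ has roots $3\pm\sqrt 5$, and a constant particular solution is $-1$, so the general solution is
\begin{equation*}
n_h \;=\; A\,(3+\sqrt 5)^h \;+\; B\,(3-\sqrt 5)^h \;-\; 1.
\end{equation*}
Plugging in $n_1=1$ and $n_2=7$ gives the (rational in $\sqrt 5$) constants $A=(5-\sqrt 5)/10$ and $B=(5+\sqrt 5)/10$; the key point is that $A<1$ and $B>0$.

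Finally, since $0<3-\sqrt 5<1$, the term $B(3-\sqrt 5)^h$ is less than $B<1$ for every $h\ge 1$, so
\begin{equation*}
n_h \;=\; A(3+\sqrt 5)^h \;+\; \bigl(B(3-\sqrt 5)^h - 1\bigr) \;<\; A(3+\sqrt 5)^h \;\le\; (3+\sqrt 5)^h,
\end{equation*}
which is the desired bound. I expect the only nontrivial step to be the bookkeeping that produces the recurrence: correctly identifying which subtrees are attached to the path (two copies of $T_{h-1}$ at $u_k$ versus pairs of copies of $T_{\sigma_{h-1}(i)}$ for $i<k$) and using the multiplicities of values in $\sigma_{h-1}$. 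After that, the recurrence-to-closed-form calculation is routine and the final estimate is immediate because the second root lies in $(0,1)$.
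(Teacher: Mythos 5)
Your proof is correct, and it takes a genuinely different route from the paper's. Both start from the same counting identity $n_h = (2^h-1) + 2n_{h-1} + \sum_{j=1}^{h-1} 2^{h-j} n_j$ (the paper phrases the multiplicities as ``$\sigma_{h-1}$ contains $2^i$ integers equal to $h-i-1$'', which is your $2^{\ell-j}$ count), but from there the paper \emph{guesses} the ansatz $n_h \le c^h$, substitutes it into the inequality, collapses the geometric sum via the factorization of $c^{h+1}-2^{h+1}$, and finds that the induction closes precisely when $c^2-6c+4\ge 0$, i.e.\ $c\ge 3+\sqrt5$. You instead eliminate the sum exactly by the telescoping $n_h - 2n_{h-1}$, which I have checked does yield $n_h = 6n_{h-1}-4n_{h-2}+1$ (consistent with $n_1=1$, $n_2=7$, $n_3=39$), and then solve this linear recurrence in closed form; the same polynomial $x^2-6x+4$ reappears as the characteristic polynomial, and your constants $A=(5-\sqrt5)/10$, $B=(5+\sqrt5)/10$ are right, so the final estimate $n_h < A(3+\sqrt5)^h \le (3+\sqrt5)^h$ goes through since $0<3-\sqrt5<1$ and $A<1$. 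What your approach buys is an exact formula: it shows $n_h = \Theta\bigl((3+\sqrt5)^h\bigr)$ with leading constant $A\approx 0.276$, so the paper's upper bound on $|T_h|$ is tight up to that constant, and hence no better lower-bound exponent can be extracted from this tree family by sharpening the node count alone. What the paper's approach buys is brevity: it never needs the exact base values or the telescoping step, only a one-shot induction against $c^h$.
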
 

\begin{proof}
Denote by $n_h$ the number of nodes of tree $T_h$. By the way $T_h$ is recursively defined and since, for $i=0,\dots,h-2$, sequence $\sigma_{h-1}$ contains $2^{i}$ integers equal to $h-i-1$ (i.e., it contains one integer equal to $h-1$, two integers equal to $h-2$, $\dots$, $2^{h-2}$ integers equal to $1$), we have: 

\begin{eqnarray*}
n_h&=&\underbrace{(2n_{h-1}+1)}_{\textrm{subtree rooted at } u_k}+
	  \underbrace{(2(2^{h-1}-1))}_{\textrm{nodes } u_1,v_1,\dots,u_{k-1},v_{k-1}}
	+\underbrace{2(n_{h-1}+2n_{h-2}+\dots+2^{h-2}n_{1})}_{\textrm{subtrees of } u_1,v_1,\dots,u_{k-1},v_{k-1}}\\
      &=& 2n_{h-1}+ 2^{h}-1 + \sum_{i=1}^{h-1} 2^i n_{h-i}<2n_{h-1}+ 2^{h}+ \sum_{i=1}^{h-1} 2^i n_{h-i}.
\end{eqnarray*}

We now prove that $n_h\leq c^{h}$, for some constant $c$ to be determined later, by induction on $h$. The statement trivially holds for $h=1$, as long as $c\geq 1$, given that $n_1=1$. Now assume that $n_j\leq c^{j}$, for every $j\leq h-1$. Substituting $n_j\leq c^{j}$ into the upper bound for $n_h$ we get

\begin{eqnarray*}
n_h &\leq& 2c^{h-1}+2^{h}+ \sum_{i=1}^{h-1} 2^i c^{h-i}= 2 c^{h-1} + \sum_{i=1}^{h} 2^i c^{h-i}.
\end{eqnarray*}

By the factoring rule $c^{h+1}-2^{h+1} = (c-2)(c^h + 2c^{h-1}+\dots+2^{h-1}c + 2^h)$ we get  

\begin{eqnarray*}
\sum_{i=1}^{h} 2^i c^{h-i} &=&\frac{c^{h+1}-2^{h+1}}{c-2} - c^h = \frac{2c^{h}}{c-2}-\frac{2^{h+1}}{c-2}.
\end{eqnarray*}

Substituting that into the upper bound for $n_h$ we get
\begin{eqnarray*}
n_h &\leq& 2c^{h-1} + \frac{2c^{h}}{c-2}-\frac{2^{h+1}}{c-2} <2c^{h-1} + \frac{2c^{h}}{c-2}  =\frac{4c^{h}-4c^{h-1}}{c-2} ,
\end{eqnarray*}

where the second inequality holds as long as $c>2$.

Thus, we want $c$ to satisfy $\frac{4c^{h}-4c^{h-1}}{c-2}\leq c^h$; dividing by  $c^{h-1}$ and simplifying, the latter becomes $c^2-6c+4\geq 0$. The associated second degree equation has two solutions $c=3\pm \sqrt{5}$. Hence, $n_h\leq c^{h}$ holds true for $c\geq 3+\sqrt{5}$. This concludes the proof of the lemma. 
\end{proof}

%%%%%%%%%
%%%%%%%%%
%%%%%%%%%
%%%%%%%%%

Finally, we get the main result of this section.

\begin{theorem} \label{th:LR}
For infinitely many values of $n$, there exists an $n$-node ordered rooted binary tree that requires width $\Omega(n^{\delta})$ and area $\Omega(n^{1+\delta})$ in any LR-drawing, with $\delta=1/\log_2 (3+\sqrt 5)\geq 0.418$. 
\end{theorem}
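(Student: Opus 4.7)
The plan is to combine Lemma~\ref{le:LR-width} and Lemma~\ref{le:LR-nodes} in a straightforward way: the former gives a lower bound on the width of any LR-drawing of $T_h$ in terms of $h$, while the latter gives an upper bound on the number of nodes of $T_h$ in terms of $h$; inverting the latter and substituting into the former will yield a polynomial lower bound on the width in terms of $n$.

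More concretely, I would let $n_h$ denote the number of nodes of $T_h$ and observe that, by Lemma~\ref{le:LR-nodes}, $n_h\leq (3+\sqrt{5})^h$, so that $h\geq \log_{3+\sqrt{5}} n_h = (\log_2 n_h)/\log_2(3+\sqrt{5})$. Plugging this into the lower bound of Lemma~\ref{le:LR-width} gives that the width of any LR-drawing of $T_h$ is at least
\[
2^h-1 \;\geq\; 2^{(\log_2 n_h)/\log_2(3+\sqrt{5})}-1 \;=\; n_h^{\,\delta}-1,
\]
with $\delta=1/\log_2(3+\sqrt{5})$. A short numerical check shows that $\log_2(3+\sqrt{5})=\log_2(5.2360\ldots)<2.3925$, hence $\delta>0.418$. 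Taking $n=n_h$ for $h=1,2,3,\dots$ produces the infinite family of values of $n$ for which such a tree exists, establishing the $\Omega(n^{\delta})$ width bound.

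Finally, for the area bound I would recall the observation (already used in Corollary~\ref{cor:area-lr}) that any LR-drawing of an $n$-node tree has height exactly $n$; multiplying the width lower bound $\Omega(n^\delta)$ by this height $n$ yields area $\Omega(n^{1+\delta})$, completing the proof of the theorem.

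No step is a real obstacle here: the combinatorial work is already done in Lemmata~\ref{le:LR-width} and~\ref{le:LR-nodes}, so the proof essentially amounts to an inversion of the exponential size bound and a numerical estimate of $\delta$; the only minor care required is to state the infinitely-many-$n$ clause properly, since $n$ is restricted to the values $n_h$.
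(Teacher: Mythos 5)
Your proposal is correct and follows essentially the same route as the paper's own proof: both invert the bound $n_h\leq(3+\sqrt5)^h$ of Lemma~\ref{le:LR-nodes} to get $h\geq\log_{3+\sqrt5}n_h$, substitute into the $2^h-1$ width bound of Lemma~\ref{le:LR-width}, and multiply by the height $n$ for the area bound. No gaps.
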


\begin{proof}
By Lemma~\ref{le:LR-width} the width of any LR-drawing of $T_h$ is $w_h\geq 2^h-1$. Also, by Lemma~\ref{le:LR-nodes} tree $T_h$ has $n_h\leq (3+\sqrt 5)^h$ nodes, which taking the logarithms becomes $h\geq \log_{(3+\sqrt 5)} n_h$. Substituting this formula into the lower bound for the width, we get $w_h\geq 2^{\log_{(3+\sqrt 5)} n_h}-1$. Changing the base of the logarithm provides the statement about the width. Since any LR-drawing has height exactly $n$, the statement about the area follows.
\end{proof}

%%%%%%%%%%%%%%%%%%%%%%%%
%%%%%%%%%%%%%%%%%%%%%%%%
%%%%%%%%%%%%%%%%%%%%%%%%
%%%%%%%%%%%%%%%%%%%%%%%%
%%%%%%%%%%%%%%%%%%%%%%%%
%%%%%%%%%%%%%%%%%%%%%%%%
%%%%%%%%%%%%%%%%%%%%%%%%
%%%%%%%%%%%%%%%%%%%%%%%%
%%%%%%%%%%%%%%%%%%%%%%%%
%%%%%%%%%%%%%%%%%%%%%%%%
%%%%%%%%%%%%%%%%%%%%%%%%
%%%%%%%%%%%%%%%%%%%%%%%%
%%%%%%%%%%%%%%%%%%%%%%%%

\subsection{Experimental Evaluation} \label{se:experiments}

It is tempting to evaluate $w^*_n$ by computing, for every $n$-node ordered rooted binary tree $T$, the minimum width $w^*_T$ of any LR-drawing of $T$ and by then taking the maximum among all such values. Although Theorem~\ref{thm:compute-matrix} ensures that $w^*_T$ can be computed efficiently, this evaluation is not practically possible, because of the large number of $n$-node ordered rooted binary trees, which is the $n$-th Catalan number ${{2n}\choose{n}}\frac{1}{n+1} \approx 4^n$; see, e.g.,~\cite{m-grbt-99}. 

We overcame this problem as follows. We say that a tree $T'$ {\em dominates} a tree $T$ if: (i) $n_{T'}\leq n_T$; (ii) $k_{T'}\geq k_{T}$; and (iii) for $i=0,\dots,k_{T}-1$, it holds ${\cal S}_{T'}(i) \geq {\cal S}_{T}(i)$. In order to perform an experimental evaluation of $w^*_n$, we construct a set ${\cal T}_n$ of ordered rooted binary trees with at most $n$ nodes such that every ordered rooted binary tree with at most $n$ nodes is dominated by a tree in ${\cal T}_n$. 

First, the dominance relationship ensures that, if an $n$-node ordered rooted binary tree exists requiring a certain width in any LR-drawing, then a tree in ${\cal T}_n$ also requires (at least) the same width in any LR-drawing (in a sense, the trees in ${\cal T}_n$ are the ``worst case'' trees for the width of an LR-drawing).     

Second, the size of ${\cal T}_n$ can be kept ``small'' by ensuring that no tree in ${\cal T}_n$ dominates another tree in ${\cal T}_n$. We could construct ${\cal T}_n$ for $n$ up to $455$, with ${\cal T}_{455}$ containing more than two million trees. 

Third, ${\cal T}_n$ can be constructed so that, for every $T\in {\cal T}_n$, the left and right subtrees of $r_T$ are also in ${\cal T}_n$. This is proved by induction on $|T|$. The base case $|T|=1$ is trivial. Further, if a tree $T$ in ${\cal T}_n$ has the left subtree $L$ of $r_T$ that is not in ${\cal T}_n$, then $L$ can be replaced with a tree in ${\cal T}_n$ that dominates $L$; this tree exists since $|L|<|T|$. This results in a tree $T'$ that dominates $T$. A similar replacement of the right subtree of $r_{T'}$ results in a tree $T''$ that dominates $T$ and such that the left and right subtrees of $r_{T''}$ are both in ${\cal T}_n$; then we replace $T$ with $T''$ in ${\cal T}_n$. Replacing all the trees with $|T|$ nodes in ${\cal T}_n$ completes the induction. Consequently, ${\cal T}_n$ can be constructed starting from ${\cal T}_{n-1}$ by considering a number of $n$-node trees whose size is quadratic in $|{\cal T}_{n-1}|$. Every time a tree $T$ is considered, its dominance relationship with every tree currently in ${\cal T}_n$ is tested. If a tree in ${\cal T}_n$ dominates $T$, then $T$ is discarded; otherwise, $T$ enters ${\cal T}_n$ and every tree in ${\cal T}_n$ that is dominated by $T$ is discarded. Note that the dominance relationship between two trees $T$ and $T'$ can be tested in time proportional to the size of ${\cal S}_{T}$ and ${\cal S}_{T'}$.

% ; and
% \item at least one of the above $k_T +2$ inequalities holds true strictly (that is, assuming all the above inequalities hold true, then $T'$ and $T$ do not have the same number of nodes or they have different representation matrices).
% \end{itemize}

% An ORB-tree $T$ is {\em unbowed} if there exists no tree that dominates $T$. We have the following lemmata, that illustrate the important role played by unbowed trees in the determination of the minimum-width ideal drawings of  ORB-trees. 

% \begin{lemma} \label{le:unbowed-is-optimal}
% Let $T$ be an ORB-tree that requires at least $W$ width in any ideal drawing and that has $n$ nodes. There exists an unbowed ORB-tree $T'$ that requires at least $W$ width in any ideal drawing and that has at most $n$ nodes.
% \end{lemma}
 
% \begin{lemma} \label{le:unbowed-is-unbowed}
% Let $T$ be an ORB-tree. There exists an ORB-tree $T'$ with the following properties:

% \begin{itemize}
% \item $T'$ dominates $T$; and
% \item every subtree of $T'$ (including the whole tree $T'$) is unbowed. 
% \end{itemize}
% \end{lemma}

% \begin{proof}
% \end{proof}

% An unbowed tree $T$ with $n$ nodes is composed of an unbowed left subtree $T$ with $k$ nodes and an unbowed right subtree $T$ with $n-k-1$ nodes, with $n$ and $k$ odd. 

%\newlength{pippo}
%\setlength{pippo}{0.04\textwidth}

\newcolumntype{L}[1]{>{\raggedright\let\newline\\\arraybackslash\hspace{0pt}}m{#1}}
\newcolumntype{C}[1]{>{\centering\let\newline\\\arraybackslash\hspace{0pt}}m{#1}}
\newcolumntype{R}[1]{>{\raggedleft\let\newline\\\arraybackslash\hspace{0pt}}m{#1}}

By means of this approach, we were able to compute the value of $w^*_n$ for $n$ up to $455$. Table~\ref{ta:evaluation} shows the minimum integer $n$ such that there exists an $n$-node ordered rooted binary tree requiring a certain width $w$; for example, all the trees with up to $455$ nodes have LR-drawings with width at most $22$, and all the trees with up to $426$ nodes have LR-drawings with width at most $21$. Our experiments were performed with a monothread Java implementation on a machine with two $4$-core $3.16$GHz Intel(R) Xeon(R) CPU X$5460$ processors, with $48$GB of RAM, running Ubuntu $14.04.2$ LTS. The computation of the trees with $455$ nodes in ${\cal T}_{455}$ took more than one month.

\begin{table}[htb]
\centering
\begin{tabular} {|C{0.035\textwidth} |C{0.035\textwidth}|C{0.035\textwidth}|C{0.035\textwidth}|C{0.035\textwidth}|C{0.035\textwidth}|C{0.035\textwidth}|C{0.035\textwidth}|C{0.035\textwidth}|C{0.035\textwidth}|C{0.035\textwidth}|C{0.035\textwidth}|C{0.035\textwidth}|C{0.035\textwidth}|C{0.035\textwidth}|C{0.035\textwidth}|C{0.035\textwidth}|C{0.035\textwidth}|C{0.035\textwidth}|C{0.035\textwidth}|C{0.035\textwidth}|C{0.035\textwidth}|C{0.035\textwidth}|} %23
\hline
{ $w$} & 1  & 2  & 3  & 4  & 5  & 6  & 7  & 8  & 9  & 10  & 11  & 12  & 13  & 14  & 15 & 16 & 17  & 18  & 19  & 20  & 21  & 22 \\
\hline
{ $n$} & 1 & 3 & 7  & 11 & 19  & 27  & 35  & 47  & 61  & 77  & 95  & 111  & 135  & 159  & 185  & 215  & 243 & 275 & 311 & 343 & 383 & 427\\
\hline
\end{tabular}
\vspace{1mm}
\caption{The table shows, for every integer $w$ between $1$ and $22$, the minimum number $n$ of nodes of a tree requiring $w$ width in any LR-drawing.}
\label{ta:evaluation}
\end{table}

We used the Mathematica software~\cite{mat} in order to find a function of the form $w = a \cdot n^b + c$ that better fits the values of Table~\ref{ta:evaluation}, according to the {\em least squares} optimization method (see, e.g.,~\cite{sti-gls-81}). Recall that by Theorem~\ref{th:LR} and by Chan results~\cite{c-nlabdbt-02}, $w^*_n$ is asymptotically between $\Omega(n^{0.418})$ and $O(n^{0.48})$. We obtained $w = 1.54002 \cdot n^{0.443216} -0.549577$ as an optimal function; see Fig.~\ref{fig:interpolation}. This seems to indicate that the best known upper and lower bounds are not tight. 

\begin{figure}[htb]
	\begin{center}
		\includegraphics[width=.3\textwidth]{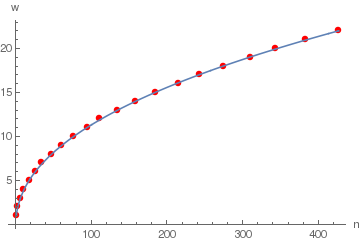} 
		\caption{Function $w = 1.54002 \cdot n^{0.443216} -0.549577$ (blue line) and data from Table~\ref{ta:evaluation} (red dots).}
		\label{fig:interpolation}
	\end{center}
\end{figure}

  \begin{figure}[htb]
    \centering
    \subfloat[]{
		\includegraphics[width=.35\textwidth]{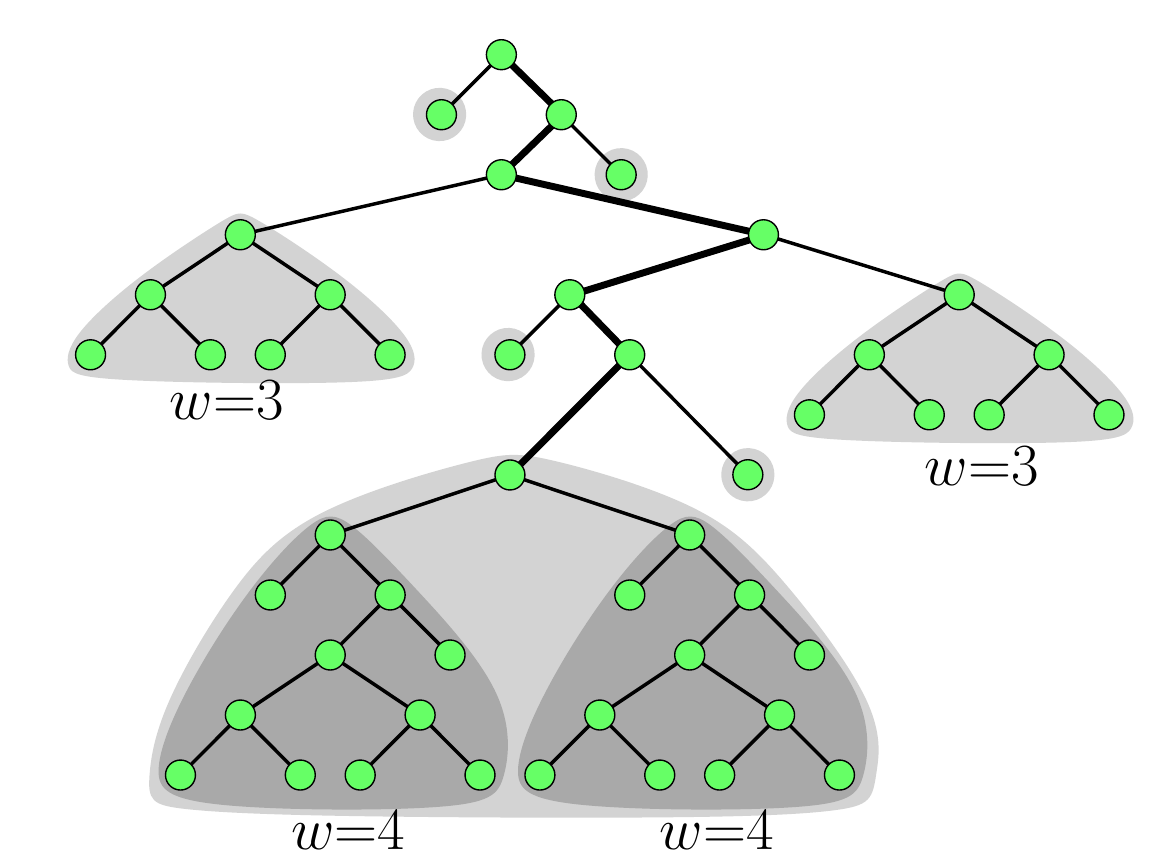}}
    \subfloat[]{
		\includegraphics[width=.35\textwidth]{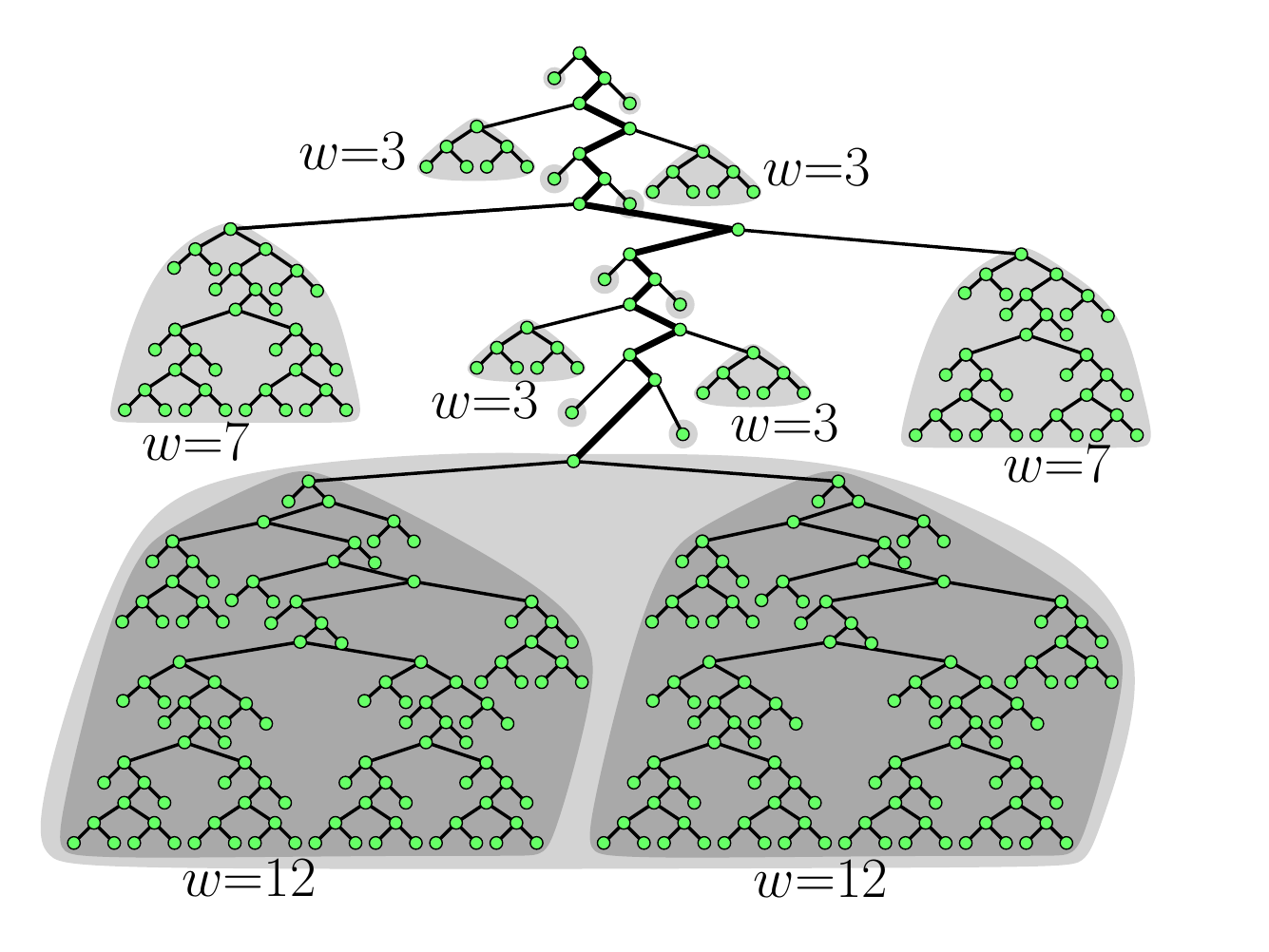}}
    \caption{(a) A tree with $n=47$ nodes requiring width $8$ in any LR-drawing. (b) A tree with $n=343$ nodes requiring width $20$ in any LR-drawing.}
    \label{fig:lower-trees}
  \end{figure}

As a final remark, we note that the structure of the trees corresponding to the pairs $(n,w)$ in Table~\ref{ta:evaluation} (see Fig.~\ref{fig:lower-trees}) is similar to the structure of the trees that provide the lower bound of Theorem~\ref{th:LR}, which might indicate that the lower bound is close to be tight: In particular, the left (and right) subtrees of the thick path in Fig.~\ref{fig:lower-trees}(b) require width $1,3,1,7,1,3,1$ from top to bottom,  as in the lower bound tree $T_4$ from Theorem~\ref{th:LR}; also, the subtrees of the last node of the thick path are isomorphic, as in $T_4$ (although these subtrees require width $7$ in $T_4$, while they require width $12$ in Fig.~\ref{fig:lower-trees}(b)).

% \begin{figure}[htb]
% 	\begin{center}
% 		\includegraphics[width=.35\textwidth]{Figures/size2unbowed.png} 
% 		\label{fig:graphic}
% 		\caption{A graph showing the number of unbowed trees that have at most $n$ nodes. A base-$10$ logarithmic scale is used for the $y$-axis.}
% 	\end{center}
% \end{figure}

%%%%%%%%%%%%%%%%%%%%%%%%%
%%%%%%%%%%%%%%%%%%%%%%%%%
%%%%%%%%%%%%%%%%%%%%%%%%%
%%%%%%%%%%%%%%%%%%%%%%%%%
%%%%%%%%%%%%%%%%%%%%%%%%%
%%%%%%%%%%%%%%%%%%%%%%%%% OUTERPLANAR GRAPHS
%%%%%%%%%%%%%%%%%%%%%%%%%
%%%%%%%%%%%%%%%%%%%%%%%%%
%%%%%%%%%%%%%%%%%%%%%%%%%
%%%%%%%%%%%%%%%%%%%%%%%%%
%%%%%%%%%%%%%%%%%%%%%%%%%
%%%%%%%%%%%%%%%%%%%%%%%%%

\section{Straight-Line Drawings of Outerplanar Graphs} \label{se:outerplanar}

In this section we study outerplanar straight-line drawings of outerplanar graphs. 

\subsection{From Outerplanar Drawings to Star-Shaped Drawings} \label{se:outerplanar-star-shaped}

Let $G$ be a maximal outerplanar graph, that is, a graph to which no edge can be added without violating its outerplanarity. We assume that $G$ is associated with any (not necessarily straight-line) outerplanar drawing. This allows us to talk about the faces of $G$, rather than about the faces of a drawing of $G$. We denote by $f^*$ the outer face of $G$. The {\em dual tree} $T$ of $G$ has a node for each face $f\neq f^*$ of $G$ (we denote by $f$ both the face of $G$ and the corresponding node of $T$); further, $T$ has an edge $(f_1,f_2)$ if the faces $f_1$ and $f_2$ of $G$ share an edge $e$ along their boundaries; we say that $e$ and $(f_1,f_2)$ are {\em dual} to each other. 

  \begin{figure}[htb]
    \centering
    \hfill
	\subfloat[]{
		\includegraphics[page=1,width=.4\textwidth]{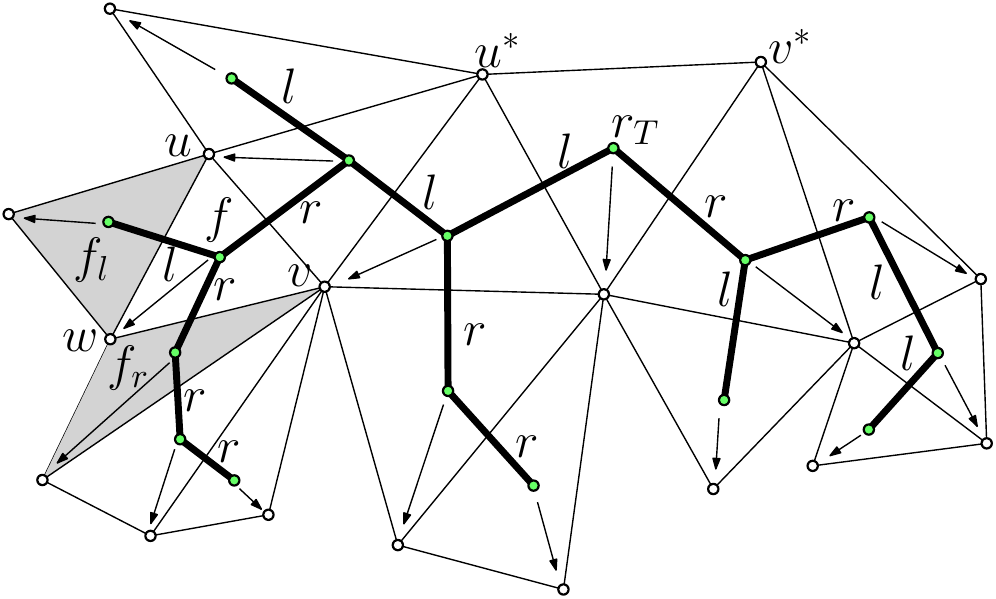}}
    \hfill
	\subfloat[]{
		\includegraphics[page=2,width=.4\textwidth]{Figures/outerplanar.pdf}}
    \hfill \
	\caption{(a) A maximal outerplanar graph $G$ (shown with white circles and thin line segments) and its dual tree $T$ (shown with green circles and thick line segments). The labels $l$ and $r$ on the edges of $T$ show whether a node is the left or the right child of its parent, respectively. The gray faces $f_l$ and $f_r$ are the left and the right child of the face $f$. The arrows show a bijective mapping $\gamma$ from the nodes of $T$ to the vertices of $G'$ such that an edge $(s,t)$ belongs to $T$ if and only if the edge $(\gamma(s),\gamma(t))$ belongs to $G'$. (b) A star-shaped drawing $\Gamma_T$ of $T$ (shown with green circles and thick line segments). The gray regions show the polygons $P_l(s)$ and $P_r(t)$ for two nodes $s$ and $t$ of $T$. Adding the thin edges and the white vertices at $p^*_u$ and $p^*_v$ turns $\Gamma_T$ into an outerplanar straight-line drawing $\Gamma_G$ of $G$.}
    \label{fig:outerplanar-starshaped}
  \end{figure}
  
We now turn $T$ into an ordered rooted binary tree. Refer to Fig.~\ref{fig:outerplanar-starshaped}(a). First, pick any edge $(u^*,v^*)$ incident to $f^*$, where $v^*$ is encountered right after $u^*$ when walking in clockwise direction along the boundary of $f^*$; root $T$ at the node corresponding to the internal face of $G$ incident to $(u^*,v^*)$. Second, since $G$ is maximal, all its internal faces are delimited by cycles with $3$ vertices, hence $T$ is binary. Third, an outerplanar drawing of $G$ naturally defines whether a child of a node of $T$ is a left or right child. Namely, consider any non-leaf node $f$ of $T$. If $f\neq f^*$, then let $g$ be the parent of $f$ and let $(u,v)$ be the edge of $G$ dual to $(f,g)$. If $f=f^*$, then let $u=u^*$ and $v=v^*$. In both cases, let $w\neq u,v$ be the third vertex of $G$ incident to $f$; assume, w.l.o.g. that $u$, $v$, and $w$ appear in this clockwise order along the boundary of $f$. Let $(f,f_l)$ and $(f,f_r)$ be the edges of $T$ dual to $(u,w)$ and $(v,w)$, respectively. Then $f_l$ and $f_r$ are the left and right child of $f$, respectively; note that one of these children might not exist (if $(u,w)$ or $(v,w)$ is incident to $f^*$). Henceforth, we regard $T$ as an ordered rooted binary tree.

We introduce some definitions. The {\em leftmost} ({\em rightmost}) {\em path} of $T$ is the maximal path $s_0,\dots,s_m$ such that $s_0=r_T$ and $s_{i}$ is the left (resp.\ right) child of $s_{i-1}$, for $i=1,\dots,m$. For a node $s$ of $T$, the {\em left-right} ({\em right-left}) {\em path} of $s$ is the maximal path $s_0,\dots,s_m$ such that $s_0=s$, $s_1$ is the left (resp.\ right) child of $s_0$, and $s_{i}$ is the right (resp.\ left) child of $s_{i-1}$, for $i=2,\dots,m$. For a node $s$ of $T$, let $C_l(s)$ (resp.\ $C_r(s)$) denote the cycle composed of the left-right (resp.\ right-left) path $s_0,\dots,s_m$ of $s$ plus edge $(s_0,s_m)$ -- this cycle degenerates into a vertex or an edge if $m=0$ or $m=1$, respectively. Finally, a drawing of $T$ is {\em star-shaped} if it satisfies the following properties (refer to Fig.~\ref{fig:outerplanar-starshaped}(b)):

\begin{enumerate}
\item The drawing is planar, straight-line, and {\em order-preserving} (that is, for every degree-$3$ node $s$ of $T$, the edge between $s$ and its parent, the edge between $s$ and its left child, and the edge between $s$ and its right child appear in this counter-clockwise order around $s$). 
\item For each node $s$ of $T$, draw the edge of $C_l(s)$ not in $T$ (if such an edge exists) as a straight-line segment and let $P_l(s)$ be the polygon representing $C_l(s)$. Then $P_l(s)$ is simple (that is, not self-intersecting) and every straight-line segment between $s$ and a non-adjacent vertex of $P_l(s)$ lies inside $P_l(s)$. A similar condition is required for the polygon $P_r(s)$ representing $C_r(s)$.
\item For any node $s$ of $T$, the polygons $P_l(s)$ and $P_r(s)$ lie one outside the other, except at $s$; also, for any two distinct nodes $s$ and $t$ of $T$, the polygons $P_l(s)$ and $P_r(s)$ lie outside polygons $P_l(t)$ and $P_r(t)$, and vice versa, except at common vertices and edges along their boundaries. 
\item There exist two points $p^*_{u}$ and $p^*_{v}$ such that the straight-line segments connecting $p^*_{u}$ with the nodes of the leftmost path of $T$, connecting $p^*_{v}$ with the nodes of the rightmost path of $T$, and connecting $p^*_{u}$ with $p^*_{v}$ do not intersect each other and, for any node $s$ of $T$, they lie outside polygons $P_l(s)$ and $P_r(s)$, except at common vertices.
\end{enumerate}

We now describe the key ideas developed in~\cite{bf-sadog-09} in order to relate outerplanar straight-line drawings of outerplanar graphs to star-shaped drawings of their dual trees. Let $G$ be a maximal outerplanar graph and $T$ be its dual tree; also, let $G'$ be the graph obtained from $G$ by removing vertices $u^*$ and $v^*$ and their incident edges. Then $T$ is a subgraph of $G'$; in fact, there exists a bijective mapping $\gamma$ from the nodes of $T$ to the vertices of $G'$ such that an edge $(s,t)$ belongs to $T$ if and only if the edge $(\gamma(s),\gamma(t))$ belongs to $G'$ (see Fig.~\ref{fig:outerplanar-starshaped}(a)). Further, the graph obtained by adding to $T$, for every node $s$ in $T$, edges connecting $s$ with all the (not already adjacent) nodes on the left-right and on the right-left path of $s$ is $G'$. Properties 1--3 of a star-shaped drawing ensure that, in order to obtain an outerplanar straight-line drawing of $G'$, one can start from a star-shaped drawing of $T$ and just draw the edges of $G'$ not in $T$ as straight-line segments. Finally, an outerplanar straight-line drawing of $G$ is obtained by mapping $u^*$ and $v^*$ to $p^*_{u}$ and $p^*_{v}$ (defined as in Property 4 of a star-shaped drawing), respectively, and by drawing their incident edges as straight-line segments (see Fig.~\ref{fig:outerplanar-starshaped}(b)). 

If one starts from a star-shaped drawing $\Gamma_T$ of $T$ in a certain area $A$, an outerplanar straight-line drawing $\Gamma_G$ of $G$ can be constructed as described above; then the area of $\Gamma_G$ might be larger than $A$, since points $p^*_{u}$ and $p^*_{v}$ might lie outside the bounding box of $\Gamma_T$. However, $\Gamma_G$ is equal to the area of the smallest axis-parallel rectangle\footnote{By the {\em width} and the {\em height} of a rectangle we mean the number of grid columns and rows intersecting it, respectively. By the {\em area} of a rectangle we mean its width times its height.} containing $p^*_{u}$, $p^*_{v}$, and $\Gamma_T$. We formalize this in the following.

\begin{lemma} \label{le:star-shaped-correspondence} (Di Battista and Frati~\cite{bf-sadog-09}) 
If $T$ admits a star-shaped drawing $\Gamma_T$, then $G$ admits an outerplanar straight-line drawing $\Gamma_G$ whose area is equal to the area of the smallest axis-parallel rectangle containing $p^*_{u}$, $p^*_{v}$, and $\Gamma_T$.
\end{lemma}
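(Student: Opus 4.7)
The plan is to formalize the two-stage construction sketched in the paragraph preceding the lemma, explicitly deriving $\Gamma_G$ from $\Gamma_T$ and then reading off its area. First, I would build an intermediate planar straight-line drawing $\Gamma_{G'}$ of the auxiliary graph $G' = G - \{u^*,v^*\}$ by placing each vertex $\gamma(s)$ of $G'$ at the point occupied by the node $s$ in $\Gamma_T$ and by drawing every edge of $G'$ as a straight-line segment. The edges of $G'$ partition into edges of $T$ (already drawn in $\Gamma_T$) and edges not in $T$; by the characterization of $G'$ recalled in the excerpt, the latter are exactly the chords joining each node $s$ to the non-adjacent nodes on its left-right path and right-left path, hence the chords of the cycles $C_l(s)$ and $C_r(s)$.

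To certify planarity and outerplanarity of $\Gamma_{G'}$, I would invoke the four properties of a star-shaped drawing in order. Property~2 ensures that each polygon $P_l(s)$ (and $P_r(s)$) is simple and that every segment from $s$ to a non-adjacent vertex of $P_l(s)$ lies inside $P_l(s)$, so all chords of $C_l(s)$ can be drawn inside $P_l(s)$ without crossing its boundary. Property~3 ensures that distinct polygons $P_l(s), P_r(s), P_l(t), P_r(t)$ have pairwise disjoint interiors, so chords placed inside different polygons cannot cross one another. Together with Property~1 this yields a planar straight-line drawing of $G'$ whose outer face is bounded by the union of outward-facing edges of these polygons; in particular, every vertex of $G'$ lies on the outer face.

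Second, I would extend $\Gamma_{G'}$ to a drawing $\Gamma_G$ of $G$ by placing $u^*$ at $p^*_u$ and $v^*$ at $p^*_v$ and drawing all edges of $G$ incident to $u^*$ or $v^*$ as straight-line segments. Because $T$ is rooted at the internal face incident to $(u^*,v^*)$, the neighbors of $u^*$ in $G$ other than $v^*$ are exactly the images under $\gamma$ of the nodes on the leftmost path of $T$, and symmetrically for $v^*$ and the rightmost path. Property~4 asserts precisely that the segments from $p^*_u$ to the leftmost path, from $p^*_v$ to the rightmost path, and between $p^*_u$ and $p^*_v$ do not mutually intersect and lie outside every polygon $P_l(s), P_r(s)$; hence they cross no edge of $\Gamma_{G'}$. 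Planarity is thus preserved, and $u^*, v^*$ are on the outer face of $\Gamma_G$, yielding an outerplanar straight-line drawing of $G$.

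For the area claim, the vertex set of $\Gamma_G$ consists of the points occupied by the nodes of $\Gamma_T$ together with $p^*_u$ and $p^*_v$; since the grid area of a straight-line drawing is the area of the smallest axis-parallel rectangle containing all its vertices, the area of $\Gamma_G$ equals the area of the smallest axis-parallel rectangle containing $\Gamma_T$, $p^*_u$, and $p^*_v$. The main obstacle I expect is a careful verification in the first stage that a chord of $C_l(s)$ incident to $s$ really stays inside $P_l(s)$ when $P_l(s)$ is an arbitrary (non-convex) simple polygon, and that $P_l(s)$ and $P_r(s)$ occupy disjoint angular sectors around their common apex $s$; both are in fact encoded into Properties~2 and~3, so the proof reduces to invoking these properties in the right order rather than proving new geometric facts.
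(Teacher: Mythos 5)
Your proposal is correct and follows essentially the same route as the paper: the paper does not prove this lemma itself (it is cited from Di Battista and Frati), but the paragraph immediately preceding it sketches exactly the two-stage construction you formalize --- drawing the chords of $G'$ inside the polygons $P_l(s)$, $P_r(s)$ using Properties~1--3, then attaching $u^*$ and $v^*$ at $p^*_u$ and $p^*_v$ using Property~4, with the area read off from the vertex set. Your writeup is a faithful elaboration of that sketch, including the correct identification of the neighbors of $u^*$ (resp.\ $v^*$) with the images of the leftmost (resp.\ rightmost) path of $T$.
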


In the next sections we will show algorithms for constructing star-shaped drawings $\Gamma_T$ of ordered rooted binary trees $T$ in which the smallest axis-parallel rectangle containing $\Gamma_T$, $p^*_{u}$, and $p^*_{v}$ has asymptotically the same area as $\Gamma_T$. 

\subsection{Star-Shaped Drawings with $O(\omega)$ Width} \label{se:weak-star-shaped}

In this section we show that, if an ordered rooted binary tree admits an LR-drawing with width $\omega$, then it admits a star-shaped drawing with width $O(\omega)$. In fact, we will prove the existence of two star-shaped drawings with that width, each satisfying some additional geometric properties. Because of the similarity of our constructions with the ones in~\cite{bf-sadog-09}, we will not prove formally that the constructed drawings are star-shaped, and we will only provide the main intuition for that. Further, the illustrations of our constructions will show the points $p^*_u$ and $p^*_v$ (represented by white disks) and the straight-line segments (represented by gray lines) to be added to the star-shaped drawings according to Properties~2 and~4 from Section~\ref{se:outerplanar-star-shaped}. Given a drawing $\Gamma$ of a tree, we often say that a vertex $u$ {\em sees} another vertex $v$ if the straight-line segment between $u$ and $v$ does not cross $\Gamma$.

Consider a star-shaped drawing $\Gamma$ of an ordered rooted binary tree $T$. Denote by $B_l(\Gamma)$, $B_t(\Gamma)$, $B_r(\Gamma)$, and $B_b(\Gamma)$ the left, top, right, and bottom side of $B(\Gamma)$, respectively. 

  \begin{figure}[htb]
  	\centering
  	\hfill
  	\subfloat[]{
  		\includegraphics[scale=0.9]{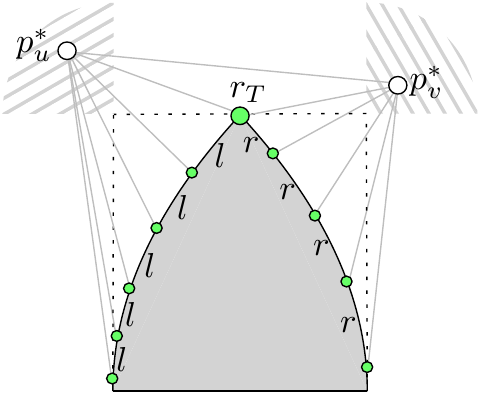}}
  	\hfill
  	\subfloat[]{
  		\includegraphics[scale=0.9]{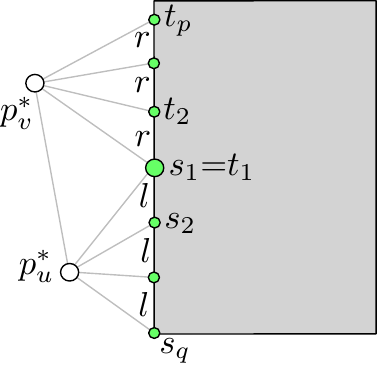}}
  	\hfill \
  	\caption{(a) A schematization of the shape of a bell-like star-shaped drawing. (b) A schematization of the shape of a flat star-shaped drawing.}
  	\label{fig:shapes}
  \end{figure}

We say that $\Gamma$ is {\em bell-like} (see Fig.~\ref{fig:shapes}(a)) if: (i) $r_T$ lies on $B_t(\Gamma)$; and (ii) any point $p^*_u$ above $B_t(\Gamma)$ and to the left of $B_l(\Gamma)$ and any point $p^*_v$ above $B_t(\Gamma)$ and to the right of $B_r(\Gamma)$ satisfy Property 4 of a star-shaped drawing. 

We say that $\Gamma$ is {\em flat} (see Fig.~\ref{fig:shapes}(b)) if: (i) the leftmost path $(s_1=r_T,\dots,s_q)$ and the rightmost path $(t_1=r_T,\dots,t_p)$ of $T$ lie on $B_l(\Gamma)$; and (ii) $y(s_{i-1})>y(s_i)$, for $i=2,\dots,q$, and $y(t_{i-1})<y(t_i)$, for $i=2,\dots,p$. 

We now present the main lemma of this section.

\begin{lemma}\label{le:weak-star-shaped}
Consider an $n$-node ordered rooted binary tree $T$ and suppose that $T$ admits an LR-drawing with width $\omega$. Then $T$ admits a bell-like star-shaped drawing with width at most $4\omega-2$ and height at most $n$, and a flat star-shaped drawing with width at most $4\omega$ and height at most $n$. 
\end{lemma}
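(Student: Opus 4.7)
The plan is to prove both bounds simultaneously by strong induction on $n$, mimicking the recursive structure of the LR-drawing of $T$. The base case $n=1$ is trivial, since the single-node drawing is vacuously both bell-like and flat. For the inductive step, let $\Gamma$ be an LR-drawing of $T$ of width $\omega$ and consider the maximal path $P=(v_1=r_T,v_2,\dots,v_k)$ whose nodes all share the spine column of $\Gamma$: whenever $\Gamma$ applies the left rule at $v_i$ take $v_{i+1}$ to be the right child of $v_i$, and vice versa. For each $i<k$, let $\tau_i$ be the off-spine subtree rooted at the other child of $v_i$; this subtree lies to the left of the spine when the right rule is used at $v_i$ (in which case we call $v_i$ a \emph{left-spine} node) and to the right otherwise. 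Extend $P$ all the way to a leaf, so that no stray subtree hangs off its bottom. Writing $\omega_L$ and $\omega_R$ for the maximum width of the sub-LR-drawing that $\Gamma$ induces on a left-spine subtree and on a right-spine subtree respectively, the geometry of $\Gamma$ gives $\omega_L+\omega_R+1\leq\omega$, and in particular $\omega_L,\omega_R\leq\omega-1$.

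For the bell-like star-shaped drawing $\Gamma_T^{\text{bell}}$, place the spine $P$ as a vertical segment, with $r_T$ on $B_t(\Gamma_T^{\text{bell}})$. Each off-spine $\tau_i$ is attached at $v_i$ on the appropriate side of the spine, using its flat star-shaped form from the inductive hypothesis (width at most $4\,w^*_{\tau_i}\leq 4\omega_L$ or $4\omega_R$). The flat form is reflected so that its left-boundary---which contains both its leftmost and rightmost paths---abuts the spine column; this is exactly what makes $v_i$ see the root of $\tau_i$ and every node on its left-right or right-left path, yielding the visibility condition for the polygons $P_l(v_i)$ and $P_r(v_i)$. The apex points $p^*_u,p^*_v$ are then placed sufficiently high above $r_T$, to the left and to the right respectively, so that they see the entire leftmost and rightmost paths of $T$. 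Summing widths on either side of the spine gives total width at most $1+4\omega_L+4\omega_R\leq 1+4(\omega-1)=4\omega-3$, within the claimed $4\omega-2$ bound; the height is additive across the spine and bounded by $n$. The flat drawing $\Gamma_T^{\text{flat}}$ is built symmetrically: the spine is folded onto $B_l(\Gamma_T^{\text{flat}})$ so that the leftmost and rightmost paths of $T$ lie on the left boundary with the prescribed $y$-monotonicity (leftmost path going down, rightmost path going up from $r_T$), and the off-spine subtrees are attached to the right using their bell-like inductive forms, each of width at most $4\,w^*_{\tau_i}-2$, opening rightward from the spine; a similar accounting yields width at most $4\omega$.

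The hard part will be verifying that the assembled object is genuinely star-shaped, particularly the visibility conditions defining $P_l(s)$ and $P_r(s)$ for every node $s$ and the existence of the apex points $p^*_u,p^*_v$ satisfying Property~4 from Section~\ref{se:outerplanar-star-shaped}. The left-right and right-left paths emanating from a spine node $v_i$ may cross from the spine into an attached subtree, so the visibility argument must carry information across the seams where inductively-drawn subdrawings are spliced in; this is precisely where the dual availability of both a bell-like and a flat form, chosen to match the orientation required at each splice, becomes essential---the bell-like form provides unobstructed access from above, while the flat form provides a $y$-monotone boundary that can be pressed flush against the spine. As noted in the introduction to the section, these verifications parallel the corresponding geometric arguments of \cite{bf-sadog-09} and will only be sketched; the novelty is that our combination depends on \emph{no} property of the input LR-drawing besides its width $\omega$, so that any future improvement on the width of LR-drawings immediately transfers to star-shaped drawings.
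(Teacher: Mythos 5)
Your high-level plan matches the paper's: a mutual recursion between a bell-like and a flat construction, driven by the spine of the given LR-drawing, with the width split $\omega_L+\omega_R+1\leq\omega$ doing the accounting. However, two concrete steps in your bell-like construction fail. First, you place the spine $P$ ``as a vertical segment,'' i.e.\ on a single grid column. When the LR-drawing applies the same rule at several consecutive spine nodes, the left-right path of a spine node $v_i$ contains several consecutive spine nodes $v_{i+1},v_{i+2},\dots$; on a single column these are collinear with $v_i$, so the segment from $v_i$ to a non-adjacent node of its left-right path passes through the intermediate nodes, the polygon $P_l(v_i)$ degenerates, and Property~2 of a star-shaped drawing fails (in the resulting outerplanar drawing two edges would overlap). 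This is exactly why the paper draws the spine on \emph{two} adjacent columns, setting $x(v_i)=2$ or $x(v_i)=1$ according to whether $v_{i+1}$ is the left child of $v_i$: the offending path nodes then sit one unit to the side of $v_i$ and form the legs of a trapezoid. Your width bound $1+4\omega_L+4\omega_R=4\omega-3$, one better than the lemma's $4\omega-2$, is a symptom of this missing column.

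Second, in the bell-like construction you attach \emph{every} off-spine subtree in its flat form, reflected so that its boundary path abuts the spine. But the leftmost path of $T$ runs along the spine only until the first node $v_j$ at which the left rule is applied, and then descends into the left subtree $L_j$; if $L_j$ is drawn flat with its leftmost path on the side facing the spine, the body of $L_j$'s drawing lies strictly between that path and any point $p^*_u$ that is above $B_t$ and to the left of $B_l$ of the whole drawing, so Property~4 (the bell-like condition) cannot be satisfied no matter how high you place $p^*_u$. The paper resolves this by giving precisely $L_j$ and the symmetric subtree $R_h$ \emph{bell-like} recursive drawings and all other spine subtrees flat ones. A symmetric confusion affects your flat construction: there the roles must be reversed (flat forms for the subtrees of the leftmost and rightmost paths, whose rightmost/leftmost paths must lie flush against the boundary column to be seen by the path nodes; bell-like forms only for the two subtrees $C,D$ of the junction node $v_{j+1}$ where the spine departs from the leftmost path), and $C,D$ are not off-spine subtrees at all --- their LR-width can be as large as $\omega$, not $\omega_L$ or $\omega_R$ --- which is why the paper places $v_{j+1}$ on a dedicated column at $x=4\omega$. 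Your ``similar accounting'' elides this case, and it is where the bound $4\omega$ (rather than $4\omega-2$) is actually consumed.
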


In the remainder of the section we prove Lemma~\ref{le:weak-star-shaped} by exhibiting two algorithms, called {\em bell-like algorithm} and {\em flat algorithm}, that construct bell-like and flat star-shaped drawings of trees, respectively. Both algorithms use induction on $\omega$; each of them is defined in terms of the other one. The base case of both algorithms is $\omega=1$. This implies that $T$ is a root-to-leaf path $(v_1=r_T,\dots,v_n)$, as in Fig.~\ref{fig:starshaped-weak-base}(a).

  \begin{figure}[htb]
    \centering
    \hfill
	\subfloat[]{
		\includegraphics[page=1,width=.13\textwidth]{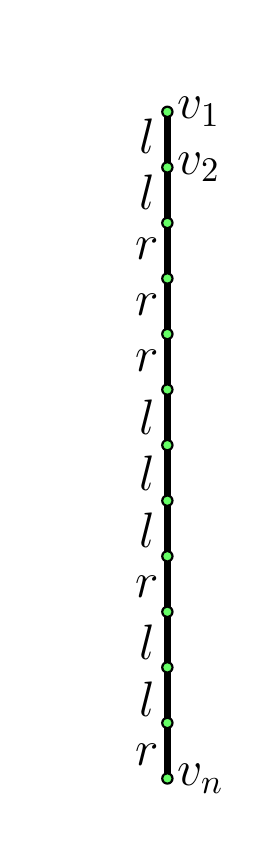}}
    \hfill
	\subfloat[]{
		\includegraphics[page=2,width=.13\textwidth]{Figures/basecases_outerplanar.pdf}}
    \hfill
	\subfloat[]{
		\includegraphics[page=1,width=.18\textwidth]{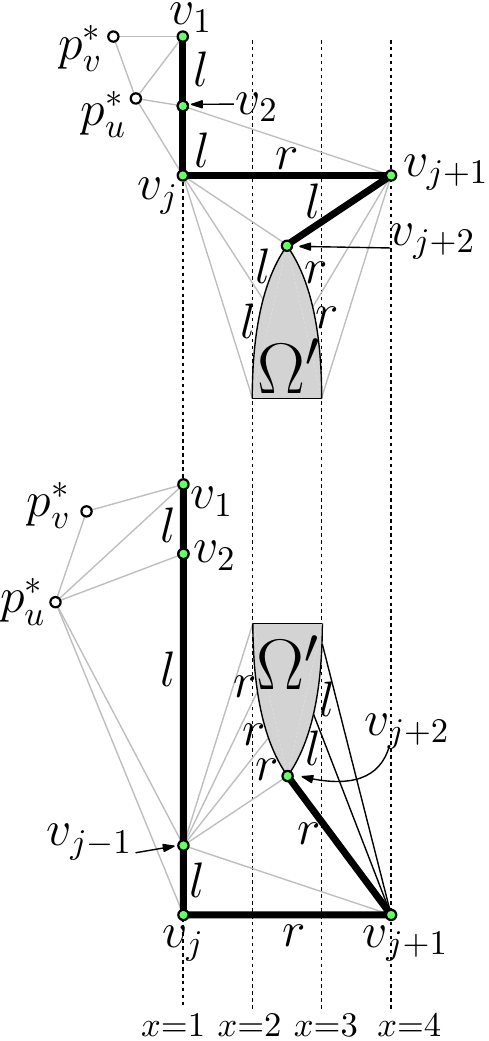}}
    \hfill \
	\caption{(a) An LR-drawing with width $1$ of a tree $T$. (b) A bell-like star-shaped drawing with width $2$ of $T$. Any points $p^*_u$ and $p^*_v$ in the shaded regions see all the nodes of the leftmost and rightmost path of $T$, respectively. (c) A flat star-shaped drawing with width $4$ of $T$, if $v_{j+2}$ is the left (see top) or right (see bottom) child of $v_{j+1}$.}
    \label{fig:starshaped-weak-base}
  \end{figure}
  
The bell-like algorithm constructs a bell-like star-shaped drawing $\Omega$ of $T$ as follows (refer to Fig.~\ref{fig:starshaped-weak-base}(b)). For $i=1,\dots,m$, set $y(v_i)=-i$. Also, set $x(v_i)=2$ for every node $v_i$ such that $i<n$ and such that $v_{i+1}$ is the left child of $v_i$, and set $x(v_i)=1$ for every other node $v_i$. Then $\Omega$ has width at most $2=4\omega-2$ and height $n$. Further, $\Omega$ is readily seen to be a bell-like star-shaped drawing. In particular, the left-right path of each node $v_i$ is either a single node, or is a single edge, or is represented by the legs and the base with smaller length of an isosceles trapezoid (which possibly degenerates to a triangle); thus $v_i$ sees its left-right path. Similarly $v_i$ sees its right-left path, and hence $\Omega$ satisfies Property 2 of a star-shaped drawing. Moreover, the leftmost path of $T$ is either a single node (if $v_2$ is the right child of $v_1$) or is a polygonal line that is strictly decreasing in the $y$-direction and non-increasing in the $x$-direction from $r_T$ to its last node. A similar argument for the rightmost path, together with the fact that $r_T$ lies on $\Omega_t$, implies that $\Omega$ satisfies the bell-like property.

%composed of a straight-line segment from $v_i\equiv (-i,0)$ to $v_{i+1}\equiv (-i-1,1)$, of a straight-line segment from $v_{i+1}\equiv (-i-1,1)$ to $v_{i+x-2}\equiv (-i-x+2,1)$ (this degenerates to a point if $x=3$), and of a straight-line segment from $v_{i+x-2}\equiv (-i-x+2,1)$ to $v_{i+x-1}\equiv (-i-x+1,0)$. Hence, $P_r(v_i)$ is  Then 

The flat algorithm constructs a flat star-shaped drawing $\Pi$ of $T$ as follows (refer to Fig.~\ref{fig:starshaped-weak-base}(c)). Assume that $v_2$ is the left child of $v_1$; the other case is symmetric. Let $(v_1,\dots,v_j)$ be the leftmost path of $T$, where $j\geq 2$. If $j=n$, then $\Pi$ is constructed by setting $x(v_i)=1$ and $y(v_i)=-i$, for $i=1,\dots,n$ (then $\Pi$ has width $1<4\omega$ and height $n$). Otherwise, $v_{j+1}$ is the right child of $v_j$. Use the bell-like algorithm to construct a bell-like star-shaped drawing $\Omega'$ with width at most $2$ of the subtree of $T$ rooted at $v_{j+2}$ (note that this subtree has an LR-drawing with width $1$ since $T$ does). We distinguish two cases.

\begin{itemize}
\item If $v_{j+2}$ is the left child of $v_{j+1}$ (as in Fig.~\ref{fig:starshaped-weak-base}(c) top), then set $x(v_i)=1$ and $y(v_i)=-i$, for $i=1,\dots,j$, $x(v_{j+1})=4$, and $y(v_{j+1})=-j$. Place $\Omega'$ so that $B_t(\Omega')$ is on the line $y=-j-1$, and so that $B_l(\Omega')$ is on the line $x=2$. Since $v_j$ is above $B_t(\Omega')$ and to the left of $B_l(\Omega')$, it sees all the nodes of its right-left path, given that $\Omega'$ satisfies the bell-like property; since $v_{j+1}$ is above $B_t(\Omega')$ and to the right of $B_r(\Omega')$, it sees all the nodes of its left-right path; hence $\Pi$ satisfies Property 2 of a star-shaped drawing.
\item If $v_{j+2}$ is the right child of $v_{j+1}$ (as in Fig.~\ref{fig:starshaped-weak-base}(c) bottom), then set $x(v_j)=1$, $y(v_j)=0$, $x(v_{j-1})=1$, and $y(v_i)=1$; rotate $\Omega'$ by $180^{\circ}$ and place it so that $B_b(\Omega')$ is on the line $y=2$, and so that $B_l(\Omega')$ is on the line $x=2$; finally, place vertices $v_{1},\dots,v_{j-2}$, if any, on the line $x=1$, so that $v_{j-2}$ is one unit above $B_t(\Omega')$, and so that $y(v_i)=y(v_{i+1})+1$, for $i=1,\dots,j-3$. Since $v_{j-1}$ is below $B_b(\Omega')$ and to the left of $B_l(\Omega')$, it sees all the nodes of its left-right path, given that $\Omega'$ is rotated by $180^{\circ}$ and satisfies the bell-like property; since $v_{j+1}$ is below $B_b(\Omega')$ and to the right of $B_r(\Omega')$, it sees all the nodes of its right-left path; hence $\Pi$ satisfies Property 2 of a star-shaped drawing.
\end{itemize}

In both cases the leftmost path of $T$ lies on $B_l(\Pi)$, with $r_T=v_1$ as the vertex with largest $y$-coordinate; hence $\Pi$ satisfies the flat property. This concludes the description of the base case.

We now discuss the inductive case, in which $\omega>1$. Refer to Fig.~\ref{fig:starshaped-weak-inductive}(a). Let $\Gamma$ be an LR-drawing of $T$ with width $\omega$; let $\omega_l$ and $\omega_r$ be the left and right width of $\Gamma$, respectively; we are going to use $\omega_l+\omega_r+1=\omega$, which holds by Property~\ref{pr:width} of an LR-drawing; in particular, $\omega_l,\omega_r<\omega$. Define a path $P=(v_1,\dots,v_m)$ as follows. First, let $v_1=r_T$; for $i=1,\dots,m-1$, node $v_{i+1}$ is the left or right child of $v_i$, depending on whether $\Gamma$ uses the right or the left rule at $v_i$, respectively; finally, $v_m$ is either a leaf, or a node with no left child at which $\Gamma$ uses the right rule, or a node with no right child at which $\Gamma$ uses the left rule. Note that $P$ lies on a single vertical line in $\Gamma$. Denote by $l_i$ or $r_i$ the child not in $P$ of $v_i$, depending on whether that node is a left or right child of $v_i$, respectively; denote by $L_i$ (by $R_i$) the subtree of $T$ rooted at $l_i$ (resp.\ $r_i$). Note that $L_i$ ($R_i$) admits an LR-drawing with width at most $\omega_l$ (resp.\ $\omega_r$), hence by induction it also admits a bell-like star-shaped drawing with width at most $4\omega_l-2$ (resp.\ $4\omega_r-2$), and a flat star-shaped drawing with width at most $4\omega_l$ (resp.\ $4\omega_r$).

  \begin{figure}[htb]
    \centering
    \hfill
	\subfloat[]{
		\includegraphics[page=2,height=.5\textwidth]{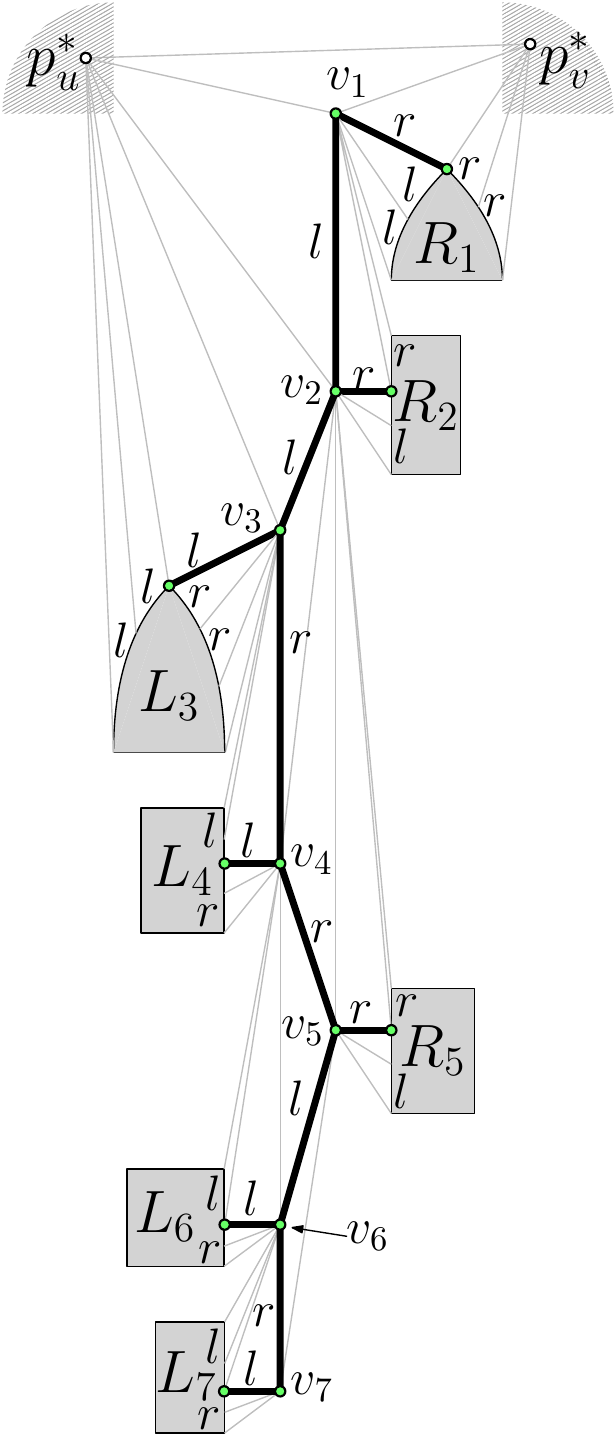}}
    \hfill
	\subfloat[]{
		\includegraphics[page=1,height=.5\textwidth]{Figures/bells-rect.pdf}}
    \hfill
	\subfloat[]{
		\includegraphics[page=3,height=.5\textwidth]{Figures/bells-rect.pdf}}
    \hfill \
	\caption{(a) An LR-drawing $\Gamma$ of $T$. (b) A bell-like star-shaped drawing $\Omega$ of $T$. In this example $j=3$ and $h=1$. (c) A flat star-shaped drawing $\Pi$ of $T$. In this example $j=3$, $p=4$, and $q=6$.}
    \label{fig:starshaped-weak-inductive}
  \end{figure}
  
The bell-like algorithm constructs a bell-like star-shaped drawing $\Omega$ of $T$ as follows. Refer to Fig.~\ref{fig:starshaped-weak-inductive}(b).  Let $j\geq 1$ ($h\geq 1$) be the smallest index such that $\Gamma$ uses the left (resp.\ right) rule at $v_j$. Index $j$ ($h$) might be undefined if $\Gamma$ uses the right (resp.\ left) rule at every node of $P$. Inductively construct a bell-like star-shaped drawing $\Omega_j$ of $L_j$ (if this subtree exists) and a bell-like star-shaped drawing $\Omega_h$ of $R_h$ (if this subtree exists); inductively construct a flat star-shaped drawing $\Pi_i$ of every other subtree $L_i$ or $R_i$ of $P$. Similarly to the base case, set $x(v_i)=2$ or $x(v_i)=1$, depending on whether the left child of $v_i$ is $v_{i+1}$ or not, respectively. Next, we define the placement of $\Omega_j$, of $\Omega_h$, and of each $\Pi_i$ with respect to  $v_j$, $v_h$, and $v_i$, respectively. Drawing $\Omega_j$ ($\Omega_h$) is placed so that $B_r(\Omega_j)$ ($B_l(\Omega_h)$) lies on the line $x=0$ (resp.\ $x=3$) and so that $B_t(\Omega_j)$ ($B_t(\Omega_h)$) is one unit below $v_j$ (resp.\ $v_h$). For every right subtree $R_i\neq R_h$ of $P$, drawing $\Pi_i$ is placed so that $B_l(\Pi_i)$ lies on the line $x=3$ and so that $y(r_i)=y(v_i)$; further, for every left subtree $L_i\neq L_j$ of $P$, drawing $\Pi_i$ is first rotated by $180^\circ$, and then it is placed so that $B_r(\Pi_i)$ lies on the line $x=0$ and so that $y(l_i)=y(v_i)$. Finally, for $i=1,\dots,m-1$, set $y(v_i)$ so that the bottom side of the smallest axis-parallel rectangle containing $v_i$ and the drawing of its subtree $L_i$ or $R_i$ is one unit above the top side of the smallest axis-parallel rectangle containing $v_{i+1}$ and the drawing of its subtree $L_{i+1}$ or $R_{i+1}$. This completes the construction of $\Omega$. The height of $\Omega$ is at most $n$, since every grid row intersecting $\Omega$ contains a node of $P$ or intersects a subtree of $P$. Further, the width of $\Omega$ is equal to the maximum width of the drawing of a subtree $L_i$, which is at most $4\omega_l$ by induction, plus the maximum width of the drawing of a subtree $R_i$, which is at most $4\omega_r$ by induction, plus two, since the nodes of $P$ lie on two grid columns. Hence the width of $\Omega$ is at most $4\omega_l+4\omega_r+2=4\omega-2$. The leftmost path of $T$ is composed of the path $(v_1,\dots,v_j,l_j)$ and of the leftmost path of $L_j$. Since $(v_1,\dots,v_j,l_j)$ is represented in $\Omega$ by a polygonal line that is strictly decreasing in the $y$-direction and non-increasing in the $x$-direction from $v_1$ to $l_j$, and since every point to the left of $B_l(\Omega_j)$ and above $B_t(\Omega_j)$ sees all the nodes of the leftmost path of $L_j$, by induction, we get that every point to the left of $B_l(\Omega)$ and above $B_t(\Omega)$ sees all the nodes of the leftmost path of $T$. A similar argument for the rightmost path, together with the fact that $r_T$ lies on $B_t(\Omega)$, implies that $\Omega$ satisfies the bell-like property. Concerning Property 2 of a star-shaped drawing, we note that $v_j$ sees all the nodes of its left-right path since it is above $B_t(\Omega_j)$ and to the right of $B_r(\Omega_j)$, and since $\Omega_j$ satisfies the bell-like property. Also, if $v_{i+1}$ is the left child of $v_i$ and $v_{i+2}$ is the right child of $v_{i+1}$, as with $i=2$ in Fig.~\ref{fig:starshaped-weak-inductive}(b), then the representation of the left-right path of $v_i$ in $\Omega$ consists of the legs and of the base with smaller length of a trapezoid, of a horizontal segment between the lines $x=2$ and $x=3$, and of a vertical segment on the line $x=3$; hence $v_i$ sees all the nodes of its left-right path. 

The flat algorithm constructs a flat star-shaped drawing $\Pi$ of $T$ as follows. Refer to Fig.~\ref{fig:starshaped-weak-inductive}(c). Assume that $v_2$ is the left child of $v_1$; the other case is symmetric. 

First, we construct a drawing $\Pi_R$ of the right subtree $R_1$ of $r_T$. Let $(t_1=r_T,\dots,t_p)$ be the rightmost path of $T$. For $i=2,\dots,p$, let $T_l(i)$ be the left subtree of $t_i$. Since $v_2$ is the left child of $v_1$, drawing $\Gamma$ uses the right rule at $v_1$, hence $R_1$ admits an LR-drawing with width at most $\omega_r$. Tree $T_l(i)$ also admits an LR-drawing with width at most $\omega_r$, given that it is a subtree of $R_1$. By induction $T_l(i)$ admits a flat star-shaped drawing $\Pi_l(i)$ with width at most $4\omega_r\leq 4\omega-4$. Set $x(t_i)=1$ for $i=2,\dots,p$. Next, we define the placement of each $\Pi_l(i)$ with respect to $t_i$. Drawing $\Pi_l(i)$ is placed so that $B_l(\Pi_l(i))$ is on the line $x=2$ and so that the root of $T_l(i)$ is on the same horizontal line as $t_i$.  Finally,  set $y(t_i)$ so that, for $i=3,\dots,p$, the bottom side of the smallest axis-parallel rectangle containing $t_i$ and $\Pi_l(i)$ is one unit above the top side of the smallest axis-parallel rectangle containing $t_{i-1}$ and $\Pi_l(i-1)$. This completes the construction of $\Pi_R$.  

Second, we construct a drawing $\Pi_L$ of the left subtree $L_1$ of $r_T$. Let $(s_1=r_T,\dots,s_q)$ be the leftmost path of $T$ and, for $i=2,\dots,q$, let $T_r(i)$ be the right subtree of $s_i$. Further, let $j\geq 2$ be the largest integer for which $(s_1,\dots,s_j)$ belongs to $P$; that is, $s_i=v_i$ holds true for $i=1,\dots,j$. Although $v_j$ might be the last node of $P$, we assume that $v_{j+1}$ exists; the construction for the case in which $v_{j+1}$ does not exist is much simpler. By the maximality of $j$, we have that $v_{j+1}$ is the right child of $v_j$. Let $C$ and $D$ be the left and right subtrees of $v_{j+1}$, respectively (possibly one or both of these subtrees are empty). Each of $C$ and $D$ admits an LR-drawing with width $\omega$, given that $\Gamma$ has width $\omega$. Construct bell-like star-shaped drawings $\Omega_C$ of $C$ and $\Omega_D$ of $D$ with width at most $4\omega-2$. Note that, for $i=2,\dots,j-1$, drawing $\Gamma$ uses the right rule at $v_i$, hence the LR-drawing of $T_r(i)$ in $\Gamma$ has width at most $\omega_r\leq \omega-1$. Further, since $\Gamma$ uses the left rule at $v_j$, the LR-drawing of $L_j$ in $\Gamma$ has width at most $\omega_l\leq \omega-1$; since tree $T_r(i)$ is a subtree of $L_j$, for $i=j+2,\dots,q$, it also admits an LR-drawing with width at most $\omega_l$. Hence, for $i=2,\dots,q$ with $i\neq j,j+1$, tree $T_r(i)$ admits a flat star-shaped drawing $\Pi_r(i)$ with width at most $4\omega-4$. We now place all these drawings together. 

\begin{itemize}
\item For $i=2,\dots,j-2$, set $x(s_i)=1$ and place $\Pi_r(i)$ so that $B_l(\Pi_r(i))$ is on the line $x=2$ and so that the root of $T_r(i)$ is on the same horizontal line as $s_i$; for $i=2,\dots,j-3$, set $y(s_i)$ so that the bottom side of the smallest axis-parallel rectangle containing $s_i$ and $\Pi_r(i)$ is one unit above the top side of the smallest axis-parallel rectangle containing $s_{i+1}$ and $\Pi_r(i+1)$. This part of the construction is vacuous if $j\leq 3$ as in Fig.~\ref{fig:starshaped-weak-inductive}(c). 
\item Place $\Pi_r(j-1)$ so that $B_l(\Pi_r(j-1))$ is on the line $x=2$ and, if $j\geq 4$, so that the bottom side of the smallest axis-parallel rectangle containing $s_{j-2}$ and $\Pi_r(j-2)$ is one unit above $B_t(\Pi_r(j-1))$. 
\item Rotate $\Omega_D$ by $180^{\circ}$ and place it so that $B_l(\Omega_D)$ is on the line $x=2$ and $B_t(\Omega_D)$ is one unit below the smallest axis-parallel rectangle containing $s_{j-2}$, $\Pi_r(j-2)$, and $\Pi_r(j-1)$. 
\item Set $x(v_{j-1})=1$ and place $v_{j-1}$ one unit below the bottom side of the smallest axis-parallel rectangle containing $s_{j-2}$, $\Pi_r(j-2)$, $\Pi_r(j-1)$, and $\Omega_D$; further, set $x(v_{j})=1$, $y(v_{j})=y(v_{j-1})-1$, $x(v_{j+1})=4\omega$, and $y(v_{j+1})=y(v_{j})$.
\item Place $\Omega_C$ so that $B_l(\Omega_C)$ is on the line $x=2$ and $B_t(\Omega_C)$ is one unit below $v_j$.
\item Finally, for $i=j+1,\dots,q$, set $x(s_i)=1$ and place $\Pi_r(i)$ so that $B_l(\Pi_r(i))$ is on the line $x=2$ with the root of $T_r(i)$ on the same horizontal line as $s_i$; also, set $y(s_i)$ so that the bottom side of the smallest axis-parallel rectangle containing $s_{i-1}$ and $\Pi_r(i-1)$ (or containing $v_{j}$ and $\Omega_C$ if $i=j+1$) is one unit above the top side of the smallest axis-parallel rectangle containing $s_{i}$ and $\Pi_r(i)$. 
\end{itemize}

This completes the construction of $\Pi_L$. If $j=2$, then $r_T$ has been drawn in $\Pi_L$; hence, we obtain a drawing $\Pi$ of $T$ by placing $\Pi_R$ and $\Pi_L$ so that $B_b(\Pi_R)$ is one unit above $B_t(\Pi_L)$. If $j\geq 3$, then $r_T$ has not been drawn in $\Pi_L$; hence, we obtain $\Pi$ by placing $r_T$, $\Pi_R$, and $\Pi_L$ so that $x(r_T)=1$ and so that $r_T$ is one unit below $B_b(\Pi_R)$ and one unit above $B_t(\Pi_L)$. 

The only grid columns intersecting $\Pi$ are the lines $x=i$ with $i=1,\dots,4\omega$. Indeed, the nodes of the leftmost and rightmost path of $T$ lie on the line $x=1$, while $v_{j+1}$ lies on the line $x=4\omega$. Drawings $\Pi_l(i)$ and $\Pi_r(i)$ have the left sides of their bounding boxes on the line $x=2$ and have width at most $4\omega-4$; finally, drawings $\Omega_C$ and $\Omega_D$ have the left sides of their bounding boxes on the line $x=2$ and have width at most $4\omega-2$. It follows that the width of $\Pi$ is $4\omega$.

The flat property is clearly satisfied by $\Pi$. That $\Pi$ is a star-shaped drawing can be proved by exploiting the same arguments as in the proof that $\Omega$ is a star-shaped drawing. In particular, $v_{j-1}$ sees all the nodes of its left-right path since it is placed below $B_b(\Omega_D)$ and to the left of $B_l(\Omega_D)$, since $\Omega_D$ is rotated by $180^{\circ}$, and since $\Omega_D$ satisfies the bell-like property. This concludes the proof of Lemma~\ref{le:weak-star-shaped}.

Since points $p^*_u$ and $p^*_v$ can be chosen in any bell-like or flat star-shaped drawing $\Gamma$ so that the smallest axis-parallel rectangle containing $p^*_u$, $p^*_v$, and $\Gamma$ has asymptotically the same area as $\Gamma$, it follows by Lemmata~\ref{le:star-shaped-correspondence} and~\ref{le:weak-star-shaped} that, if an ordered rooted binary tree $T$ admits an LR-drawing with width $\omega$, then the outerplanar graph $T$ is the dual tree of admits an outerplanar straight-line drawing with width $O(\omega)$ and area $O(n\cdot \omega)$. 

%%%%%%%%%%%%%%%%%%%%%%%%%%%%%%%%%%%
%%%%%%%%%%%%%%%%%%%%%%%%%%%%%%%%%%%
%%%%%%%%%%%%%%%%%%%%%%%%%%%%%%%%%%%
%%%%%%%%%%%%%%%%%%%%%%%%%%%%%%%%%%%
%%%%%%%%%%%%%%%%%%%%%%%%%%%%%%%%%%%
%%%%%%%%%%%%%%%%%%%%%%%%%%%%%%%%%%%
%%%%%%%%%%%%%%%%%%%%%%%%%%%%%%%%%%%
%%%%%%%%%%%%%%%%%%%%%%%%%%%%%%%%%%%
%%%%%%%%%%%%%%%%%%%%%%%%%%%%%%%%%%%

\subsection{Star-Shaped Drawings with $O\left(2^{\sqrt{2 \log_2 n}} \sqrt{\log n}\right)$ Width} \label{se:strong-star-shaped}

In this section we show that every $n$-node ordered rooted binary tree $T$ admits a star-shaped drawing with height $O(n)$ and width $O\left(2^{\sqrt{2 \log_2 n}} \sqrt{\log n}\right)$. Similarly to the previous section, we show two different algorithms to construct star-shaped drawings of $T$. The first one, which is called {\em strong bell-like algorithm}, constructs a bell-like star-shaped drawing of $T$. The second one, which is called {\em strong flat algorithm}, constructs a flat star-shaped drawing of $T$. Throughout the section, we denote by $f(n)$ the maximum width of a drawing of an $n$-node ordered rooted binary tree constructed by means of any of these algorithms. Both algorithms are parametric, with respect to a parameter $A<n$ to be fixed later. Further, both algorithms work by induction on $n$ and exploit a structural decomposition of $T$ due to Chan et al.~\cite{c-anlabdbt-99,c-nlabdbt-02,cgkt-oaar-02}, for which we include a proof, for the sake of completeness. See Fig.~\ref{fig:decomposition-lemma}.

\begin{lemma} \label{le:decomposition} (Chan et al.~\cite{c-anlabdbt-99,c-nlabdbt-02,cgkt-oaar-02})
There exists a path $P=(v_1,\dots,v_k)$ in $T$ such that: (i) $v_1=r_T$; (ii) the subtree of $T$ rooted at $v_k$ has at least $n-A$ nodes; and (iii) each subtree of $v_k$ has less than $n-A$ nodes.
\end{lemma}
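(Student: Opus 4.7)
The plan is to prove Lemma~\ref{le:decomposition} by a simple greedy descent from the root, choosing at each step a child whose subtree is still ``large enough.''

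I would define the path $P$ iteratively. Set $v_1 = r_T$; since $|T(v_1)| = n \geq n-A$, condition (ii) holds initially. Suppose $v_i$ has been chosen with $|T(v_i)| \geq n-A$. If some child $c$ of $v_i$ satisfies $|T(c)| \geq n-A$, pick any such child and set $v_{i+1} := c$; otherwise, stop and let $k := i$.

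Next I would verify the three conditions. Condition (i) holds by construction. For condition (ii), observe that the invariant $|T(v_i)| \geq n-A$ is maintained along the descent, since we only move to a child whose subtree still has at least $n-A$ nodes; in particular $|T(v_k)| \geq n-A$. Condition (iii) follows directly from the stopping rule: the descent only halted because no child of $v_k$ had a subtree of size $\geq n-A$, so every subtree of $v_k$ has strictly fewer than $n-A$ nodes (this is vacuously true if $v_k$ is a leaf). Termination is immediate because each step strictly increases the depth in the finite tree $T$.

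No real obstacle is expected; the only mild subtlety is that when a node has two children whose subtrees both contain at least $n-A$ nodes, the choice of descending child is not unique — but any selection works, so we just fix one arbitrarily. This can happen only when $A \geq (n+1)/2$, but even then it does not affect correctness. Since each $v_i$ is a proper ancestor of $v_{i+1}$ in $T$, the sequence $(v_1,\dots,v_k)$ is indeed a path starting at $r_T$, completing the proof.
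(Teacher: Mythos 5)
Your proof is correct and is essentially identical to the paper's: the paper also builds $P$ by a greedy descent from $r_T$, moving to a child whose subtree has at least $n-A$ nodes for as long as one exists, and stopping otherwise. Your extra remarks on the invariant, termination, and the (harmless) non-uniqueness of the descending child are fine but add nothing beyond what the paper's one-line argument already implies.
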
 

\begin{proof}
Let $v_1=r_T$. Suppose that $P$ has been constructed up to a node $v_j$, for some $j\geq 1$, such that the subtree of $T$ rooted at $v_j$ has at least $n-A$ nodes. If a child of $v_j$ is the root of a subtree of $T$ with at least $n-A$ nodes, then let $v_{j+1}$ be that child. Otherwise, $k=j$ terminates the definition of $P$.
\end{proof}

\begin{figure}[htb]
	\centering
	\includegraphics[page=1,width=.3\textwidth]{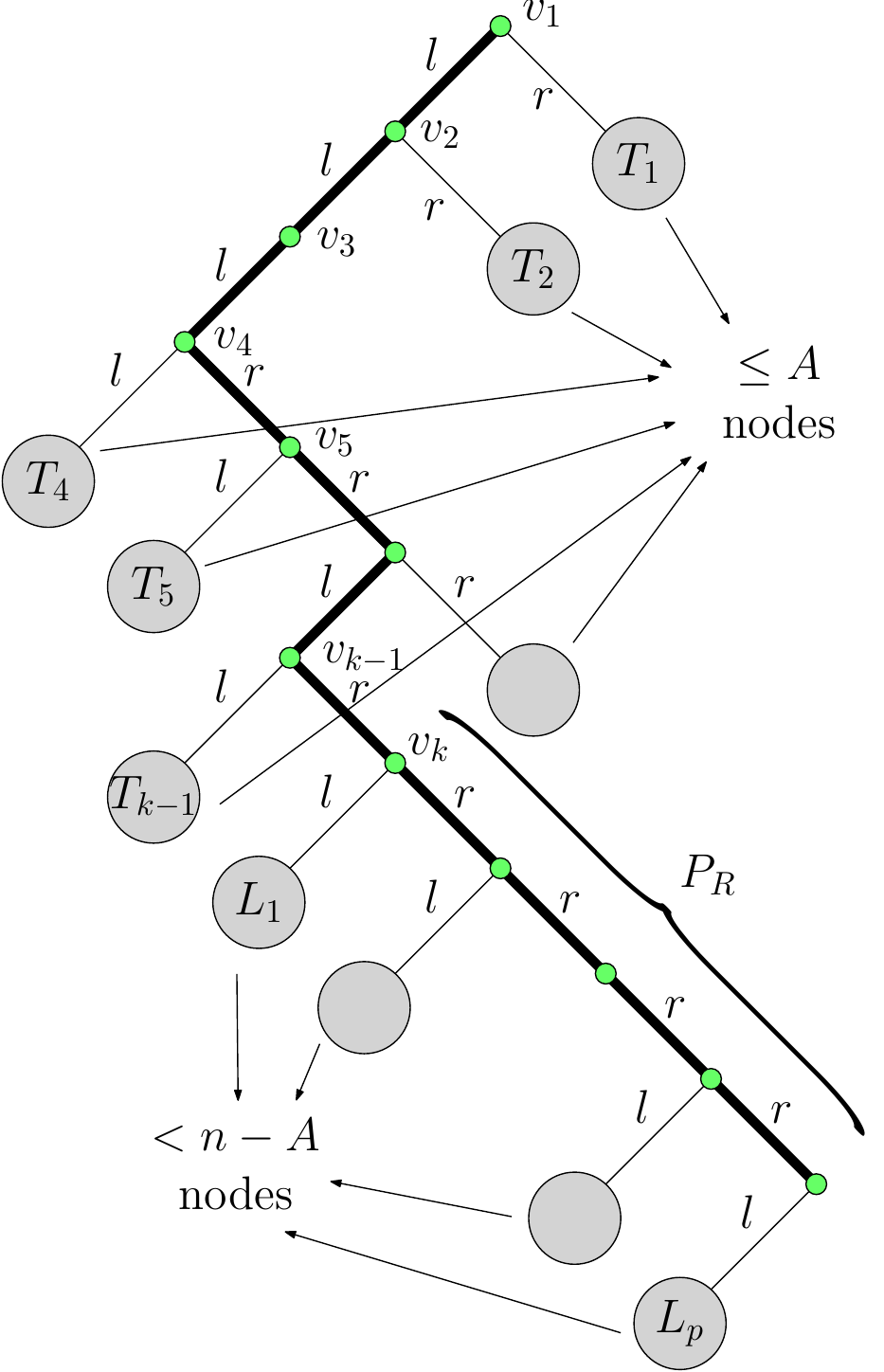}
	\caption{An illustration for the structural decomposition of $T$ exploited in Section~\ref{se:strong-star-shaped}. The tree in this example has $3$ switches, the first of which is triple $(v_3,v_4,v_5)$.}
	\label{fig:decomposition-lemma}
\end{figure}

We will say that the path $P$ is the {\em spine} of $T$. For the sake of the simplicity of the algorithm's description, we will assume that $k>1$ (the case in which $k=1$ is easy to handle) and that $v_k$ is the right child of $v_{k-1}$ (the case in which $v_k$ is the left child of $v_{k-1}$ is symmetric). For $i=1,\dots,k-1$, denote by $T_i$ the subtree of $T$ rooted at the child of $v_i$ not in $P$. Let $P_R$ be the rightmost path of the subtree of $T$ rooted at $v_k$ and let $L_1,\dots,L_p$ be the subtrees of $P_R$. Notice that each tree $T_i$ has at most $A$ nodes (by condition (ii) of Lemma~\ref{le:decomposition}) and each tree $L_i$ has less than $n-A$ nodes (by condition (iii) of Lemma~\ref{le:decomposition}). Let a {\em switch} of the spine $P$ be a triple $(v_i,v_{i+1},v_{i+2})$ with $i\leq k-2$ such that: (i) $v_{i+1}$ is the left child of $v_i$ and $v_{i+2}$ is the right child of $v_{i+1}$; or (ii) $v_{i+1}$ is the right child of $v_i$ and $v_{i+2}$ is the left child of $v_{i+1}$. Let $s$ be number of switches of $P$. For $i=1,\dots,s$, let $\pi(i)$ be such that $(v_{\pi(i)},v_{\pi(i)+1},v_{\pi(i)+2})$ is the $i$-th switch of $P$. Note that $\pi(i+1)\geq \pi(i)+1$, for $i=1,\dots,s-1$. 

The strong flat algorithm uses different constructions for the case in which $s\leq 7$ and the case in which $s\geq 8$. Further, the strong bell-like algorithm uses different constructions for the case in which $s\leq 4$ and the case in which $s\geq 5$. We start by describing the construction which is used by the strong flat algorithm if $s\leq 7$. 

{\bf Strong flat algorithm with $\bf s\leq 7$.} This is the easiest case of the recursive algorithm. The spine $P$, together with the leftmost and rightmost paths of $T$ and of certain subtrees of $T$, is going to be drawn on a set of at most $s+1$ grid columns. In fact, the first vertices of the spine (up to $v_{\pi(1)+1}$) are drawn on the line $x=0$; then the drawing moves at most one grid column to the right at every switch of the spine. The resulting drawing of the spine $P$ has a ``zig-zag'' shape, where each part of this zig-zag is a subpath of $P$ drawn on a single grid column from top to bottom or vice versa. We now formally describe this construction; the description uses induction on $s$.

In the base case we have ${\bf s=0}$; refer to Fig.~\ref{fig:strong-flat-s5}(a). Since $s=0$, it follows that $P$ has no switches, hence $v_{i+1}$ is the right child of $v_i$, for $i=1,\dots,k-1$, given that $v_k$ is the right child of $v_{k-1}$ by hypothesis. Let $P_0$ be the leftmost path of the left subtree of $T$. Recursively construct a flat star-shaped drawing of the trees $T_2,\dots,T_{k-1}$, of the trees $L_1,\dots,L_p$, and of the subtrees of $P_0$. 

For $i=2,\dots,k-1$, augment the recursively constructed drawing of $T_i$ by placing the parent of $r_{T_i}$ one unit to the left of $r_{T_i}$; similarly augment the recursively constructed drawings of the trees $L_1,\dots,L_p$, and of the subtrees of $P_0$. Further, construct a drawing (consisting of a single point) of every node that has not been drawn yet (these are the nodes of the leftmost path of $T$ with no right child and the nodes of the rightmost path of $T$ with no left child). We now place all these drawings together. 

First, set the $x$-coordinate of every node in the leftmost and rightmost path of $T$ to be $0$. Since each tree that has been individually drawn contains a node in the leftmost or rightmost path of $T$ (due to the above described augmentation of each recursively constructed drawing), this assignment determines the $x$-coordinate of every node of $T$.  

Second, we assign a $y$-coordinate to every node of $T$. This is done so that every grid row contains a node or intersects a subtree. Rather than providing explicit $y$-coordinates, we establish a total order $\sigma$ for a set that contains one node for each individually drawn tree; then a $y$-coordinate assignment is obtained by forcing, for any two nodes $u_j$ and $u_{j+1}$ that are consecutive in $\sigma$, the top side of the bounding box of the drawing comprising $u_j$ to be one unit below the bottom side of the bounding box of the drawing comprising $u_{j+1}$. Order $\sigma$ consists of the nodes of the leftmost path of $T$ in {\em reverse order} (that is, from the unique leaf to $r_T$) followed by the nodes of the rightmost path of the right subtree of $T$ in {\em straight order} (that is, from the root to the unique  leaf). This completes the construction of a drawing $\Gamma$ of $T$.

  \begin{figure}[!tb]
    \centering
    \hfill
	\subfloat[]{
	\includegraphics[scale=0.9]{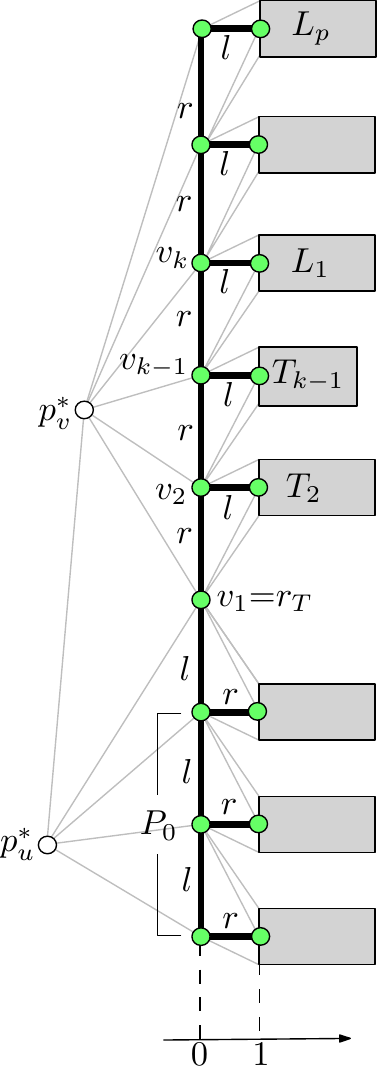}}
    \hfill
	\subfloat[]{
	\includegraphics[scale=0.9]{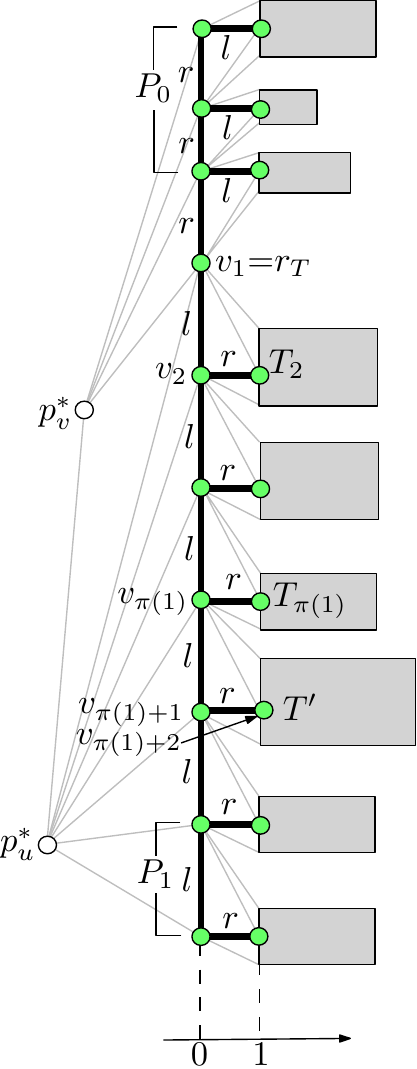}}
    \hfill
	\subfloat[]{
	\includegraphics[scale=0.9]{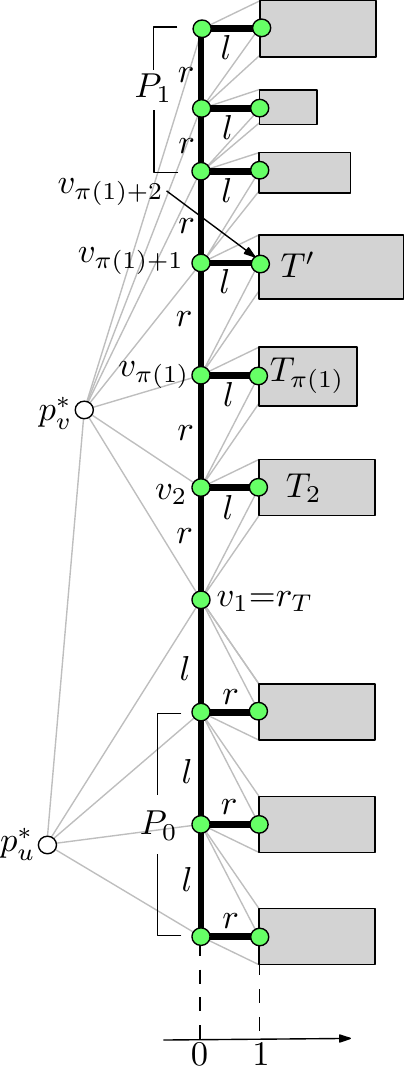}}
    \hfill \
	\caption{Illustration for the strong flat algorithm when $s\leq 7$. (a) The case $s=0$. (b) The case $1\leq s\leq 7$ with $s$ odd. (c) The case $1\leq s\leq 7$ with $s$ even.}
    \label{fig:strong-flat-s5}
  \end{figure}
  
In the inductive case we have ${\bf 1\leq s\leq 7}$. By hypothesis, we have that $v_k$ is the right child of $v_{k-1}$; hence, if $s$ is odd then $v_{i+1}$ is the left child of $v_i$, for $i=1,\dots,\pi(1)$, otherwise $v_{i+1}$ is the right child of $v_i$, for $i=1,\dots,\pi(1)$. We formally describe the construction for the case in which $s$ is odd, which is illustrated in Fig.~\ref{fig:strong-flat-s5}(b). The construction for the other case is symmetric (see Fig.~\ref{fig:strong-flat-s5}(c)). 

Let $P_0$ be the rightmost path of the right subtree of $T$, let $P_1$ be the leftmost path of the left subtree of $v_{\pi(1)+1}$, and let $T'$ be the subtree of $T$ rooted at $v_{\pi(1)+2}$. Recursively construct a flat star-shaped drawing of trees $T_2,\dots,T_{\pi(1)}$, of the subtrees of $P_0$, and of the subtrees of $P_1$. Further, notice that the subpath of $P$ contained in $T'$ has either $s-1$ or $s-2$ switches (indeed, it has $s-2$ switches if $v_{\pi(2)}=v_{\pi(1)+1}$ and it has $s-1$ switches otherwise). Then the drawing of $T'$ can be constructed inductively. We stress the fact that the spine is not recomputed for $T'$ according to Lemma~\ref{le:decomposition}, but rather the construction of the drawing of $T'$ is completed by using the subpath of $P$ between $v_{\pi(1)+2}$ and $v_k$ as the spine for $T'$. 

For $i=2,\dots,\pi(1)$, augment the recursively constructed drawing of $T_i$ by placing the parent of $r_{T_i}$ one unit to the left of $r_{T_i}$; similarly augment the drawings of $T'$ and of the subtrees of $P_0$ and $P_1$. Further, construct a drawing (consisting of a single point) of every node that has not been drawn yet. We now place all these drawings together.

First, set the $x$-coordinate of every node in the leftmost and rightmost paths of $T$ to be $0$. This determines the $x$-coordinate of every node of $T$. Second, we establish a total order $\sigma$ for a set that contains one node for each individually drawn tree; then a $y$-coordinate assignment is obtained by forcing, for any two nodes $u_j$ and $u_{j+1}$ that are consecutive in $\sigma$, the top side of the bounding box of the drawing comprising $u_j$ to be one unit below the bottom side of the bounding box of the drawing comprising $u_{j+1}$. Order $\sigma$ consists of the nodes of $P_1$ in reverse order, then of the nodes $v_{\pi(1)+1},v_{\pi(1)},v_{\pi(1)-1},\dots,v_1$, and then of the nodes of $P_0$ in straight order. This completes the construction of a drawing $\Gamma$ of $T$. We get the following.

\begin{lemma} \label{le:strong-flat-05}
Suppose that $s\leq 7$. Then the strong flat algorithm constructs a flat star-shaped drawing whose height is at most $n$ and whose width is at most $8+\max\{f(A),f(n-A)\}$.
\end{lemma}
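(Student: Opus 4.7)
The plan is to prove the lemma by induction on $s$, while using structural induction on $n$ for the recursive calls. I would strengthen the width bound inductively to $(s+1) + \max\{f(A), f(n-A)\}$, so that plugging in $s \leq 7$ directly yields the stated bound $8 + \max\{f(A), f(n-A)\}$. Throughout, I would use the mild monotonicity $f(a) \leq f(b)$ for $a \leq b$.

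For the height bound, which is uniform across all subcases, I would observe that the nodes of $T$ partition among the drawn pieces $D_1, \dots, D_m$ indexed by $\sigma$: each representative $u_j$ is either a single-point drawing or an augmented subtree drawing consisting of $u_j$ as augmented parent together with a child-rooted subtree. Placing the augmented parent on the same row as its child's position adds no grid row, so by the inductive bound on smaller recursive calls each piece satisfies $h_j \leq |D_j|$. Since the placement rule forces the bounding boxes of $D_j$ and $D_{j+1}$ to be vertically adjacent with disjoint row ranges, the total height equals $\sum_j h_j \leq \sum_j |D_j| = n$.

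For the width, the base case $s = 0$ has every $v_{i+1}$ being the right child of $v_i$, so the leftmost and rightmost paths of $T$ both lie on the column $x = 0$, and every augmented recursive drawing — a $T_i$, a subtree of $P_0$, or an $L_j$ — is placed with its leftmost column at $x = 1$. Because $|T_i| \leq A$, the subtrees of $P_0$ are contained in $T_1$ which also has at most $A$ nodes, and $|L_j| < n - A$, each such drawing has width at most $\max\{f(A), f(n-A)\}$; adding the column $x = 0$ gives total width at most $1 + \max\{f(A), f(n-A)\}$. In the inductive case $1 \leq s \leq 7$, the subpath of $P$ inside $T'$ has at most $s - 1$ switches, so by induction the flat drawing of $T'$ has width at most $s + \max\{f(A), f(|T'| - A)\} \leq s + \max\{f(A), f(n-A)\}$, using $|T'| < n$ and monotonicity of $f$. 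The remaining augmented recursive drawings ($T_i$ for $i = 2, \dots, \pi(1)$ and the subtrees of $P_0$ and $P_1$) sit in trees of at most $A$ nodes, hence have width at most $f(A)$. Since every augmented drawing starts at column $x = 1$ and only $x = 0$ is used by the spine prefix and by the leftmost and rightmost paths of $T$, the total width is at most $1 + \max\{f(A),\, s + \max\{f(A), f(n-A)\}\} = (s+1) + \max\{f(A), f(n-A)\}$, with $s \geq 1$ ensuring that the second argument of the inner max dominates.

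The main obstacle is verifying the geometric correctness — that the output is indeed a flat star-shaped drawing, i.e., planar, straight-line, satisfies the order-preserving cyclic rotation of edges at every degree-$3$ vertex, the visibility conditions of Properties~2 and~3, the existence of appropriate $p^*_u$ and $p^*_v$ for Property~4, and the monotone-paths condition of the flat property. In the spirit of the section I would argue this informally: the flat property is immediate from the placement of the leftmost and rightmost paths of $T$ on the column $x = 0$ in the order dictated by $\sigma$; planarity and the non-crossing of polygons follow because the drawn pieces $D_j$ occupy pairwise disjoint horizontal strips and all augmented subtree drawings live in the halfplane $x \geq 1$; and visibility at each spine node is ensured because the straight segment from the node to a non-adjacent vertex of its left-right or right-left path enters the adjacent flat recursive drawing at its leftmost column, where the inductively guaranteed flat or bell-like star-shaped properties of smaller drawings keep it confined to the intended polygon.
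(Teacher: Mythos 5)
Your proposal is correct and follows essentially the same route as the paper: the paper also proves the strengthened bound $s+1+\max\{f(A),f(n-A)\}$ by induction on $s$ (base case $s=0$ giving width $1+\max\{f(A),f(n-A)\}$, inductive case adding one column since $T'$ has at most $s-1$ switches and lives in the halfplane $x\geq 1$), uses the same row-counting argument for the height, and treats the star-shaped/flat properties with the same level of informality, observing that non-adjacent nodes of the left-right and right-left paths of a spine node lie on the line $x=1$.
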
 

\begin{proof}
It is readily seen that $\Gamma$ is star-shaped and flat. In particular, consider any node $u$ in the leftmost or rightmost path of $T$. By construction, $u$ is on the line $x=0$. Further, all the nodes that are not adjacent to $u$ and that are in the left-right path or in the right-left path of $u$ lie on the line $x=1$ (indeed, all such nodes are in the leftmost or rightmost paths of some subtrees of $T$ for which flat star-shaped drawings have been recursively constructed and embedded with the left sides of their bounding boxes on the line $x=1$); hence $u$ sees all such nodes. That any node that is not in the leftmost or rightmost path of $T$ sees all the non-adjacent nodes in its left-right path and in its right-left path comes from induction. Drawing $\Gamma$ has height at most $n$ since any horizontal grid line intersecting $\Gamma$ passes through a node in the leftmost or rightmost path of $T$ or intersects a recursively constructed drawing. Further, it can be proved by induction on $s$ that the width of $\Gamma$ is at most $s+1+\max\{f(A),f(n-A)\}$. Indeed, if $s=0$ then all the subtrees that are drawn by a recursive application of the strong flat algorithm have either at most $A$ nodes or at most $n-A$ nodes and have the left side of their bounding boxes on the line $x=1$; this suffices to prove the statement, since no node has an $x$-coordinate that is smaller than $0$. If $s>0$, then the statement follows inductively, given that the spine of $T'$ has at most $s-1$ switches and no node of $T'$ has an $x$-coordinate that is smaller than $1$.  
\end{proof}

We now describe the strong bell-like algorithm for the case in which $s\leq 4$.

{\bf Strong bell-like algorithm with $\bf s\leq 4$.} In this case the leftmost or the rightmost path of $T$, depending on whether $s$ is odd or even, respectively, is going to be drawn on a single grid column; in particular, this grid column is the leftmost or the rightmost grid column intersecting the drawing, depending on whether $s$ is odd or even, respectively. Similarly to the strong flat algorithm, the spine $P$, together with the leftmost and rightmost paths of $T$ and of certain subtrees of $T$, is going to be drawn on a set of $s+1$ grid columns; also, $P$ is going to have a ``zig-zag'' shape. We now formally describe this construction; the description uses induction on $s$.

In the base case we have ${\bf s=0}$; refer to Fig.~\ref{fig:strong-bell-s5}(a). Then $v_{i+1}$ is the right child of $v_i$, for $i=1,\dots,k-1$, given that $v_k$ is the right child of $v_{k-1}$ by hypothesis. Recursively construct a bell-like drawing $\Gamma_1$ of $T_1$; also, by means of the strong flat algorithm, construct a flat star-shaped drawing of the trees $T_2,\dots,T_{k-1}$ and of the trees $L_1,\dots,L_p$. Rotate each of the constructed flat star-shaped drawings by $180^{\circ}$. 

For $i=2,\dots,k-1$, augment the drawing of $T_i$ by placing the parent of $r_{T_i}$ one unit to the right of $r_{T_i}$; similarly augment the drawings of the trees $L_1,\dots,L_p$. Augment $\Gamma_1$ by placing $v_1$ one unit above $B_t(\Gamma_1)$ and one unit to the right of $B_r(\Gamma_1)$. Further, construct a drawing (consisting of a single point) of every node that has not been drawn yet (these are the nodes of the rightmost path of $T$ with no left child). We now place all these drawings together. 

First, set the $x$-coordinate of every node in the rightmost path of $T$ to be $0$. This determines the $x$-coordinate of every node of $T$. Second, we establish a total order $\sigma$ for a set that contains one node for each individually drawn tree; then a $y$-coordinate assignment is obtained by forcing, for any two nodes $u_j$ and $u_{j+1}$ that are consecutive in $\sigma$, the top side of the bounding box of the drawing comprising $u_j$ to be one unit below the bottom side of the bounding box of the drawing comprising $u_{j+1}$. Order $\sigma$ consists of the nodes of the rightmost path of $T$ in reverse order. This completes the construction of a drawing $\Gamma$ of $T$. 

\begin{figure}[!tb]
    \centering
    \hfill
	\subfloat[]{
	\includegraphics[scale=0.9]{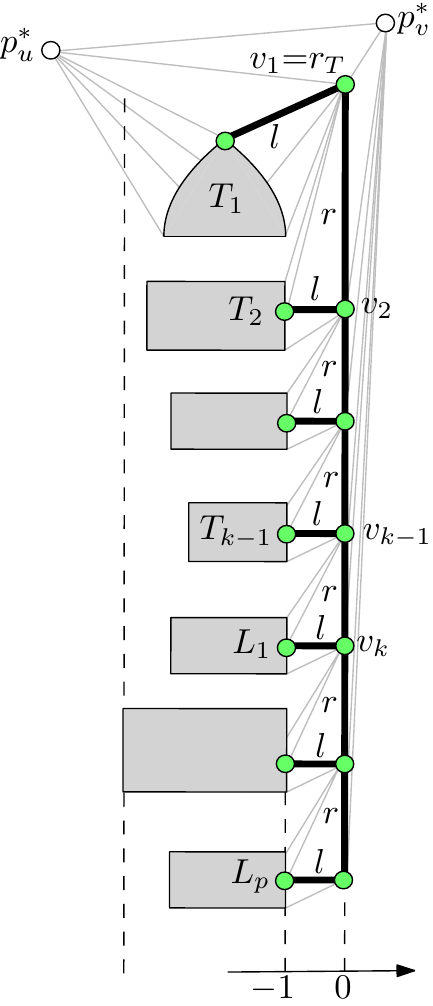}}
    \hfill
	\subfloat[]{
	\includegraphics[scale=0.9]{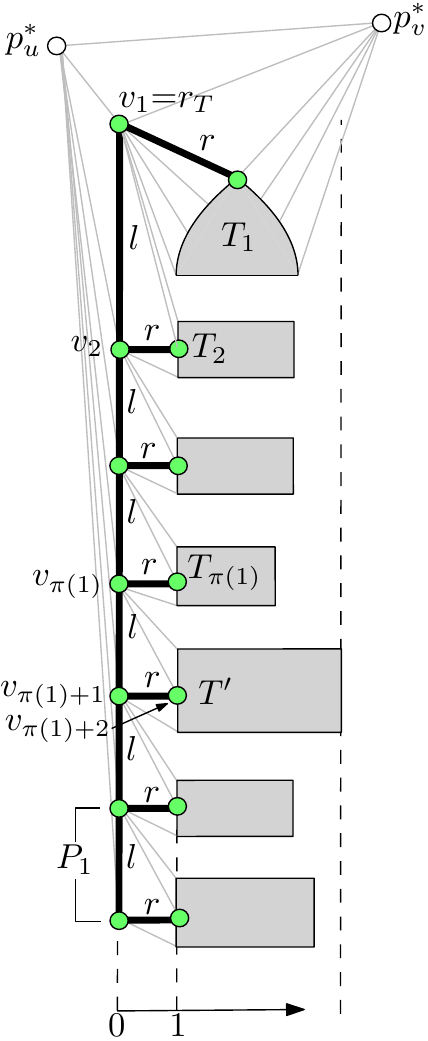}}
    \hfill
	\subfloat[]{
	\includegraphics[scale=0.9]{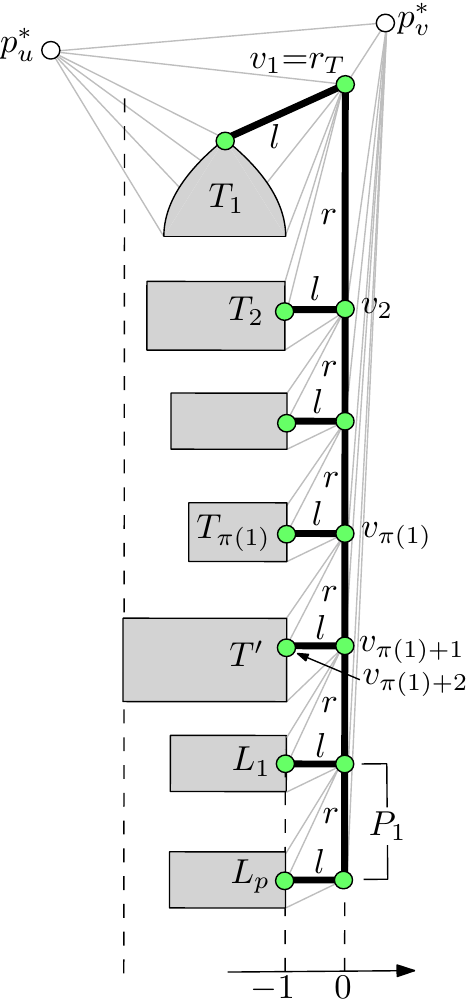}}
    \hfill \
	\caption{Illustration for the strong bell-like algorithm when $s\leq 4$. (a) The case $s=0$. (b) The case $1\leq s\leq 4$ with $s$ odd. (c) The case $1\leq s\leq 4$ with $s$ even.}
    \label{fig:strong-bell-s5}
  \end{figure}
  
In the inductive case we have ${\bf 1\leq s\leq 4}$. By hypothesis, we have that $v_k$ is the right child of $v_{k-1}$; hence, if $s$ is odd (even) then $v_{i+1}$ is the left (resp.\ right) child of $v_i$, for $i=1,\dots,\pi(1)$. We first describe the construction for the case in which $s$ is odd, which is illustrated in Fig.~\ref{fig:strong-bell-s5}(b). 

Let $P_1$ be the leftmost path of the left subtree of $v_{\pi(1)+1}$ and let $T'$ be the subtree of $T$ rooted at $v_{\pi(1)+2}$. Recursively construct a bell-like drawing $\Gamma_1$ of $T_1$; also, by means of the strong flat algorithm, construct a flat star-shaped drawing of the trees $T_2,\dots,T_{\pi(1)}$ and of the subtrees of $P_1$. Further, notice that the part of $P$ contained in $T'$ has either $s-1$ or $s-2$ switches (indeed, it has $s-2$ switches if $v_{\pi(2)}=v_{\pi(1)+1}$ and it has $s-1$ switches otherwise). Then a flat star-shaped drawing of $T'$ is constructed by means of the strong flat algorithm; we stress the fact that the spine is not recomputed for $T'$ according to Lemma~\ref{le:decomposition}, but rather the construction of the drawing of $T'$ is completed by using the subpath of $P$ between $v_{\pi(1)+2}$ and $v_k$ as the spine for $T'$. 

For $i=2,\dots,\pi(1)$, augment the drawing of $T_i$ by placing the parent of $r_{T_i}$ one unit to the left of $r_{T_i}$; similarly augment the drawings of $T'$ and of the subtrees of $P_1$. Augment $\Gamma_1$ by placing $v_1$ one unit above $B_t(\Gamma_1)$ and one unit to the left of $B_l(\Gamma_1)$. Further, construct a drawing (consisting of a single point) of every node that has not been drawn yet (these are the nodes of the leftmost path of $T$ with no right child). These drawings are placed together as in the case in which $s=0$. In particular, set the $x$-coordinate of every node in the leftmost path of $T$ to be $0$, thus determining the $x$-coordinate of every node of $T$. Further, the $y$-coordinate assignment is such that the top side of the bounding box of the drawing comprising a node of the leftmost path of $T$ is one unit below the bottom side of the bounding box of the drawing comprising the parent of that node. This completes the construction of a drawing $\Gamma$ of $T$. 

The case in which $s$ is even, which is illustrated in Fig.~\ref{fig:strong-bell-s5}(c), is symmetric to the previous one and very similar to the case $s=0$. In particular, the rightmost path of $T$ is drawn on the rightmost grid column intersecting the drawing. Further, each recursively constructed flat star-shaped drawing of a subtree of the rightmost path of $T$ has to be rotated by $180^{\circ}$ and placed so that its root is one unit to the left of its parent. We get the following.

\begin{lemma} \label{le:strong-bell-like-05}
Suppose that $s\leq 4$. Then the strong bell-like algorithm constructs a bell-like star-shaped drawing whose height is at most $n$ and whose width is at most $5+\max\{f(A),f(n-A)\}$. 
\end{lemma}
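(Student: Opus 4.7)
I would prove Lemma~\ref{le:strong-bell-like-05} along the template established for Lemma~\ref{le:strong-flat-05}, by induction on the number $s\in\{0,1,2,3,4\}$ of switches of the spine, checking in turn that the constructed drawing $\Gamma$ is star-shaped, satisfies the bell-like property, has height at most $n$, and has width at most $5+\max\{f(A),f(n-A)\}$. Properties 1--3 of a star-shaped drawing follow exactly as in Lemmata~\ref{le:weak-star-shaped} and~\ref{le:strong-flat-05}: planarity, straight-line-ness, and order-preservation are immediate from the construction; interior nodes inherit sight-lines inductively from their recursive drawings; and for a spine node $v_i$, its left-right or right-left path descends into one of the recursive subdrawings (either the bell-like $\Gamma_1$, or a rotated flat drawing of some $T_j$, or of a subtree of $P_1$, or of $T'$), and placing $v_i$ one unit outside the bounding box of that subdrawing guarantees, via the inductive bell-like or flat property, that $v_i$ sees every vertex of its left-right or right-left path.

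For the bell-like property, the root $r_T=v_1$ is the topmost element of the vertical order $\sigma$ and hence lies on $B_t(\Gamma)$. To verify that auxiliary points $p^*_u$ and $p^*_v$ placed above $B_t(\Gamma)$ and outside $B_l(\Gamma)$, $B_r(\Gamma)$ see the full leftmost and rightmost paths of $T$, I would note that the relevant path is concentrated on a single outermost grid column through $v_{\pi(1)+1}$ and then continues through a subtree that is drawn bell-like by induction (either $\Gamma_1$ directly, or a bell-like recursion embedded inside $T'$); such an auxiliary point is simultaneously above and outside that inner bell-like drawing, so by induction it sees the rest of the path. The height bound is then immediate from the $y$-coordinate rule, which packs the recursive drawings into disjoint horizontal slabs and so yields a total height at most the sum of the node counts of the recursive subtrees, i.e.\ at most $n$.

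The width bound is the most delicate part and is established by induction on $s$. In the base case $s=0$, the drawing $\Gamma_1$ and the rotated flat drawings of $T_2,\dots,T_{k-1}$ and $L_1,\dots,L_p$ each have width at most $\max\{f(A),f(n-A)\}$, since $|T_i|\leq A$ and $|L_j|<n-A$ by Lemma~\ref{le:decomposition}, and they are all placed with one side flush against the single grid column carrying the rightmost path, giving total width at most $1+\max\{f(A),f(n-A)\}$. For $1\leq s\leq 4$, the inductive step occupies at most one additional grid column to accommodate the shifted subpath of $P$ up to $v_{\pi(1)+1}$, and then invokes the strong flat algorithm on $T'$ whose spine has at most $s-1$ switches; a parallel induction on $s$ inside Lemma~\ref{le:strong-flat-05} bounds the width of the flat drawing of $T'$ by $s+\max\{f(A),f(n-A)\}$, yielding the claimed overall bound $5+\max\{f(A),f(n-A)\}$. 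The main obstacle I anticipate is the column bookkeeping: one has to verify that each switch costs exactly one additional grid column, in spite of the $180^\circ$ rotations applied to some subdrawings and the coexistence of bell-like and flat recursions that must share the same ``active'' column without overlapping.
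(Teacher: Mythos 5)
Your proposal follows essentially the same route as the paper's proof: the star-shaped and bell-like properties are inherited from the recursive drawings via the one-unit offsets (with Property~4 supplied by one outer path of $T$ lying entirely on the outermost grid column and the other descending into the inductively bell-like drawing $\Gamma_1$ of $T_1$), and the width bound $s+1+\max\{f(A),f(n-A)\}\leq 5+\max\{f(A),f(n-A)\}$ is obtained by charging one grid column per switch exactly as in Lemma~\ref{le:strong-flat-05}. The only imprecision is your mention of ``a bell-like recursion embedded inside $T'$'': $T'$ is drawn by the strong flat algorithm, and the second outer path (the rightmost one when $s$ is odd) is handled entirely by the bell-like drawing of $T_1$, not by anything inside $T'$.
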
 

\begin{proof}
Assume that $s$ is odd; the case in which $s$ is even is symmetric.
 
It is readily seen that $\Gamma$ is star-shaped. In particular, it can be proved similarly to the proof of Lemma~\ref{le:strong-flat-05} that every node different from $r_T$ sees all the nodes in its left-right path and in its right-left path that are not adjacent to it, and that $r_T$ sees all the nodes in its left-right path that are not adjacent to it. Further, $r_T$ sees all the nodes in its right-left path that are not adjacent to it, since all such nodes are in the leftmost path of $T_1$, since the drawing $\Gamma_1$ of $T_1$ is bell-like, and since $r_T$ is one unit above $B_t(\Gamma_1)$ and one unit to the left of $B_l(\Gamma_1)$. 

Drawing $\Gamma$ is also bell-like. Indeed: (i) $r_T$ lies on $B_t(\Gamma)$ by construction; (ii) the nodes of the leftmost path of $T$ lie on the line $x=0$ in decreasing order of $y$-coordinates from $r_T$ to the unique leaf, and no other node of $T$ has an $x$-coordinate smaller than $1$; (iii) the drawing $\Gamma_1$ of $T_1$ is bell-like, $r_T$ is one unit above and at least one unit to the left of $r_{T_1}$, and every node of $T$ different from $r_T$ and not in $T_1$ is below $B_b(\Gamma_1)$. These statements imply that any point $p^*_u$ above $B_t(\Gamma)$ and to the left of $B_l(\Gamma)$ and any point $p^*_v$ above $B_t(\Gamma)$ and to the right of $B_r(\Gamma)$ satisfy Property 4 of a star-shaped drawing. 

Drawing $\Gamma$ has height at most $n$ since any horizontal grid line intersecting $\Gamma$ passes through a node on the leftmost path of $T$ or intersects a recursively constructed drawing. Concerning the width of $\Gamma$, note that the only subtree $T_1$ that is drawn by a recursive application of the strong bell-like algorithm has at most $A$ nodes (or at most $n-A$ nodes if $k$ were equal to $1$) and has the left side of its bounding box on the line $x=1$, while no node of $T$ has an $x$-coordinate that is smaller than $0$. The argument for the subtrees that are recursively drawn by means of the strong flat algorithm is analogous to the one in the proof of Lemma~\ref{le:strong-flat-05}.
\end{proof}

In general, it might hold that $s=\Omega(A)$; hence, if the strong flat algorithm and the strong bell-like algorithm used the constructions described above for every value of $s$, then recurring over the trees $L_1,\dots,L_p$ one would get a drawing with $\Omega(n)$ width. For this reason, the strong flat algorithm and the strong bell-like algorithm exploit different geometric constructions when $s\geq 8$ and when $s\geq 5$, respectively. We now describe the strong bell-like algorithm in the case in which $s\geq 5$. 

{\bf Strong bell-like algorithm with $\bf s\geq 5$.} The general idea of the upcoming construction is the following. We would like to construct a bell-like star-shaped drawing $\Gamma$ whose width is given by either (i) a constant plus the width of a recursively constructed drawing of a tree with at most $n-A$ nodes, or (ii) a constant plus the widths of the recursively constructed drawings of two trees, each with at most $A$ nodes. Part of the construction we are going to show is very similar to the construction of the (non-strong) bell-like algorithm from Section~\ref{se:weak-star-shaped}: Starting from $r_T$, we draw the spine $P$ of $T$ on two adjacent grid columns, with the left subtrees of $P$ to the left of $P$ and with the right subtrees of $P$ to the right of $P$ (note that the width of this part of $\Gamma$ is a constant plus the widths of the recursively constructed drawings of two trees, each with at most $A$ nodes). Before reaching $v_k$, however, the construction changes significantly. In particular, the drawing of $P$ touches $B_r(\Gamma)$ and then continues on the grid column one unit to the left of $B_r(\Gamma)$. The remainder of $P$, including $v_k$ and together with the rightmost path $P_R$ of the subtree of $T$ rooted at $v_k$, is drawn entirely on that grid column, with its subtrees to the left of it (note that the width of this part of $\Gamma$ is a constant plus the width of a recursively constructed drawing of a tree with at most $n-A$ nodes). 

In order to guarantee that $\Gamma$ is a bell-like star-shaped drawing, it is vital that the drawings of $T_1$ and $T_{\pi(1)+1}$ are bell-like. This requirement can be easily met if the parents $v_1$ and $v_{\pi(1)+1}$ of the roots of these subtrees occur in the first part of $P$, which is drawn on two adjacent grid columns. On the other hand, if $v_1$ and $v_{\pi(1)+1}$ occurred in the second part of $P$, then the requirement on $T_1$ and $T_{\pi(1)+1}$ would conflict with the geometric constraints our construction needs to satisfy in order to place the final part of $P$, together with $P_R$, on the grid column one unit to the left of $B_r(\Gamma)$. This is the reason why we need the spine to have some number of switches (in fact at least $5$ switches).   

  \begin{figure}[!tb]
  	\centering
  	\hfill
  	\subfloat[]{
	\includegraphics[scale=0.9]{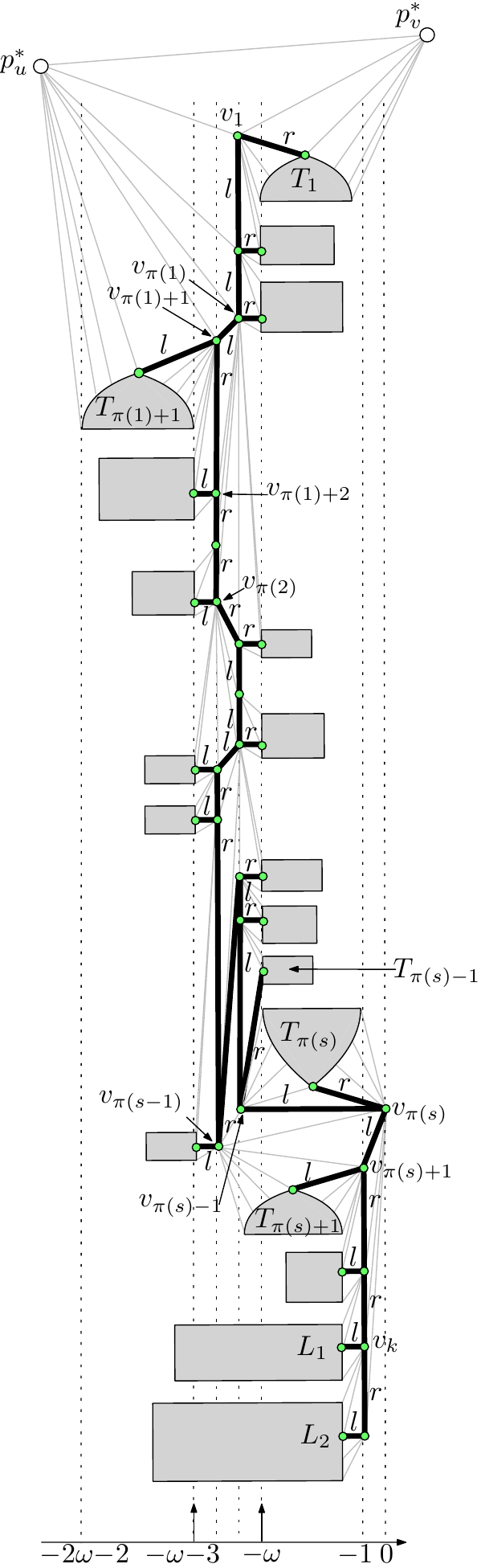}}
  	\hfill
  	\subfloat[]{
	\includegraphics[scale=0.9]{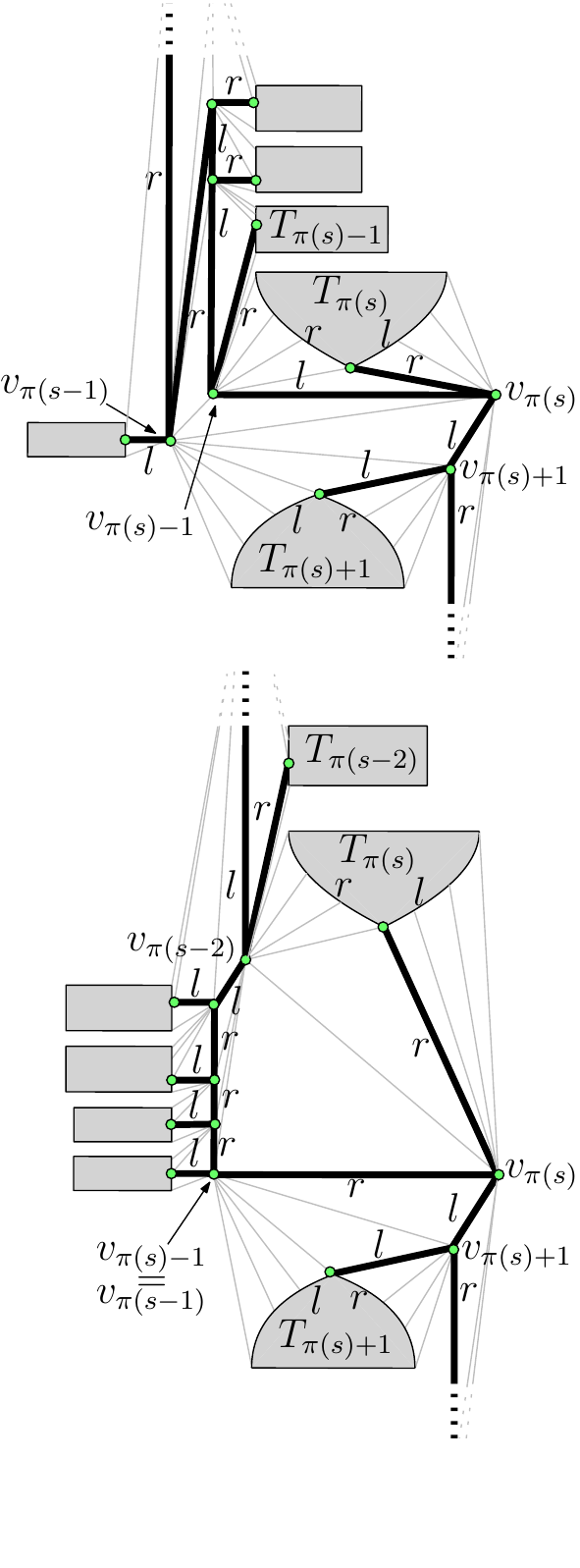}}
  	\hfill
  	\subfloat[]{
	\includegraphics[scale=0.9]{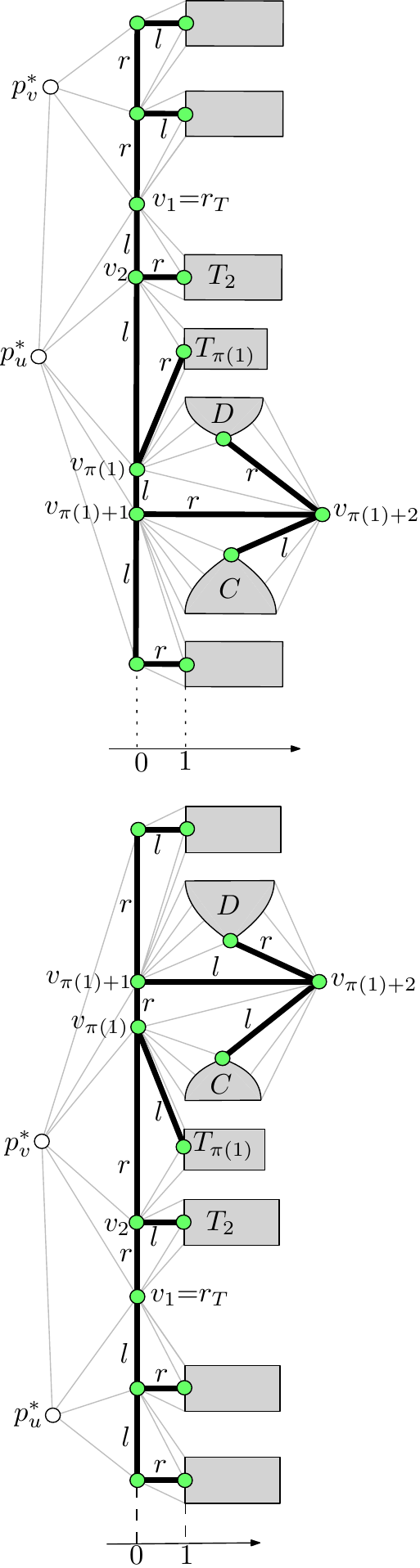}}
  	\hfill 
  	\caption{(a) Illustration for the strong bell-like algorithm when $s\geq 5$. (b) A closer look at the cases in which $\pi(s-1)<\pi(s)-1$ (top) or $\pi(s-1)=\pi(s)-1$ (bottom). (c) Illustration for the strong flat algorithm when $s\geq 8$, in the case in which $v_2$ is the left child of $v_1$ (top) or the right child of $v_1$ (bottom).}
  	\label{fig:strong-s6}
  \end{figure}

We now detail our construction. Refer to Fig.~\ref{fig:strong-s6}(a). First, we draw some subtrees recursively. We use the strong bell-like algorithm to construct a bell-like star-shaped drawing of $T_1$, of $T_{\pi(1)+1}$, of $T_{\pi(s)}$, and of $T_{\pi(s)+1}$. Further, we use the strong flat algorithm to construct a flat star-shaped drawing of every subtree $T_j$ of $P$ such that $2\leq j\leq k-1$ with $j\notin \{\pi(1)+1,\pi(s),\pi(s)+1\}$. Let $\omega$ denote the maximum width among the constructed drawings of the trees $T_j$, with $1\leq j\leq k-1$; notice that any such a subtree has at most $A$ nodes. Finally, we use the strong flat algorithm to construct flat star-shaped drawings of the trees $L_1,\dots,L_p$, which have at most $n-A$ nodes. 

We now describe an $x$-coordinate assignment for the nodes of $T$; for the part of $P$ up to $v_{\pi(s)-1}$ (that is, up to one node before the last switch of $P$), this assignment is done similarly to the (non-strong) bell-like algorithm from Section~\ref{se:weak-star-shaped} (for technical reasons, however, the nodes of $T$ are here assigned non-positive $x$-coordinates). For $i=1,\dots,\pi(s)-1$, node $v_i$ is placed on the line $x=-\omega-2$ or $x=-\omega-1$, depending on whether $v_{i+1}$ is the right or the left child of $v_i$, respectively. Further, for $i=1,\dots,\pi(s)-1$, the recursively constructed drawing of $T_i$ is assigned $x$-coordinates such that the left side of its bounding box is on the line $x=-\omega$ if $T_i$ is the right subtree of $v_i$, or it is first rotated by $180^{\circ}$ and then assigned $x$-coordinates so that the right side of its bounding box is on the line $x=-\omega-3$ if $T_i$ is the left subtree of $v_i$. Note that the part of $T$ to which $x$-coordinates have been assigned so far lies in the closed vertical strip $-2\omega-2\leq x\leq -1$, given that the width of the drawing of $T_i$ is at most $\omega$, for $i=1,\dots,\pi(s)-1$. Set $x(v_{\pi(s)})=0$; also set the $x$-coordinate of every node $v_i$, with $i=\pi(s)+1,\dots,k-1$, and of every node in $P_R$ to be $-1$. Rotate the drawing of $T_{\pi(s)}$ by $180^{\circ}$ and assign $x$-coordinates to it so that the left side of its bounding box is on the line $x=-\omega$. Finally, assign $x$-coordinates to the drawings of $T_{\pi(s)+1},\dots,T_{k-1},L_1,\dots,L_p$ so that the right sides of their bounding boxes are on the line $x=-2$.

We now describe a $y$-coordinate assignment for the nodes of $T$. Part of this assignment varies depending on whether $\pi(s-1)<\pi(s)-1$ (see Figs.~\ref{fig:strong-s6}(a) and~\ref{fig:strong-s6}(b) top) or $\pi(s-1)=\pi(s)-1$ (see Fig.~\ref{fig:strong-s6}(b) bottom). First, we define the $y$-coordinates of certain nodes with respect to the ones of their subtrees. We let node $v_1$ (node $v_{\pi(1)+1}$, node $v_{\pi(s)+1}$) have $y$-coordinate equal to $1$ plus the $y$-coordinate of the root of $T_1$ (resp.\ of $T_{\pi(1)+1}$, resp.\ of $T_{\pi(s)+1}$). Further, for $j=1,\dots,p$, we let the root of $L_j$ have the same $y$-coordinate as its parent. Also:

\begin{itemize}
	\item If $\pi(s-1)<\pi(s)-1$, then we let the root of $T_j$ have the same $y$-coordinate as its parent for $j=2,\dots,k-1$ with $j\notin\{\pi(1)+1,\pi(s)-1,\pi(s),\pi(s)+1\}$. 
	\item If $\pi(s-1)=\pi(s)-1$, then we let the root of $T_j$ have the same $y$-coordinate as its parent for $j=2,\dots,k-1$ with $j\notin\{\pi(1)+1,\pi(s-2),\pi(s),\pi(s)+1\}$.
\end{itemize}

We construct a drawing (consisting of a single point) of every node that has not yet been drawn, including $v_{\pi(s)}$, including $v_{\pi(s)-1}$ (if $\pi(s-1)<\pi(s)-1$), and including $v_{\pi(s-2)}$ (if $\pi(s-1)=\pi(s)-1$). Note that the $y$-coordinates of $T_{\pi(s)}$ have not been defined relatively to the one of $v_{\pi(s)}$; analogously, if $\pi(s-1)<\pi(s)-1$ (if $\pi(s-1)=\pi(s)-1$), then the $y$-coordinates of $T_{\pi(s)-1}$ (resp.\ of $T_{\pi(s-2)}$) have not been defined relatively to the one of $v_{\pi(s)-1}$ (resp.\ of $v_{\pi(s-2)}$).

We now place all these drawings together. Namely, we define a total order $\sigma$ of the nodes and subtrees of $T$ that have been individually drawn; then we can recover a $y$-coordinate assignment from $\sigma$ by interpreting it as a {\em top-to-bottom} order of the subtrees (note that, in the previously described constructions, the order $\sigma$ represented a {\em bottom-to-top} order of the subtrees), so that the bottom side of the bounding box of a subtree is one unit above the top side of the bounding box of the next subtree in $\sigma$. The order $\sigma$ starts with the nodes $v_1,v_2,\dots,v_{\pi(s-2)-1}$. 

\begin{itemize}
	\item If $\pi(s-1)<\pi(s)-1$, then the order $\sigma$ continues with $v_{\pi(s-2)},\dots,v_{\pi(s-1)-1}$, with $v_{\pi(s-1)+1},\dots,v_{\pi(s)-2}$, with $T_{\pi(s)-1}$, with $T_{\pi(s)}$, with $v_{\pi(s)-1}$ and $v_{\pi(s)}$ (which have the same $y$-coordinate), and with $v_{\pi(s-1)}$. 
	\item If $\pi(s-1)=\pi(s)-1$, then the order $\sigma$ continues with $T_{\pi(s-2)}$, with $T_{\pi(s)}$, with $v_{\pi(s-2)},\dots,v_{\pi(s-1)-1}$, and with $v_{\pi(s-1)}$ and $v_{\pi(s)}$ (which have the same $y$-coordinate). 
\end{itemize}

The order $\sigma$ terminates with the nodes $v_{\pi(s)+1},\dots,v_{k-1}$ and with the nodes of $P_R$ in straight order. This concludes the construction of the drawing $\Gamma$. We have the following.

\begin{lemma} \label{le:strong-bell-like-6}
	Suppose that $s\geq 5$. Then the strong bell-like algorithm constructs a bell-like star-shaped drawing whose height is at most $n$ and whose width is at most $3 + \max\{2f(A),f(n-A)\}$. 
\end{lemma}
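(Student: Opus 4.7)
The plan is to verify, by induction on $n$, the four properties of the drawing $\Gamma$ produced by the construction: the height bound, the width bound, the star-shaped property, and the bell-like property. By the inductive hypothesis, each of the recursively drawn subtrees $T_1$, $T_{\pi(1)+1}$, $T_{\pi(s)}$, $T_{\pi(s)+1}$ is a bell-like star-shaped drawing of width at most $f(A)$; by Lemma~\ref{le:strong-flat-05} (and the corresponding strong-flat construction for $s\geq 8$, proved inductively together with the present lemma), each of the other $T_j$'s is a flat star-shaped drawing of width at most $f(A)$, and each $L_j$ is a flat star-shaped drawing of width at most $f(n-A)$. The overall argument follows the template of Lemma~\ref{le:strong-bell-like-05}, with the main novelty being the handling of the ``bend region'' around $v_{\pi(s)-1},v_{\pi(s)},v_{\pi(s)+1}$ where the spine leaves the two-column zig-zag and continues along the single rightmost grid column.

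The height bound is immediate. By construction of the top-to-bottom order $\sigma$, every horizontal grid line crossing $\Gamma$ either contains a spine node, a node of $P_R$, or intersects the bounding box of exactly one recursively drawn subtree; the heights of these disjoint components sum to at most $n$ by induction. For the width, let $\omega=\max_i \mathrm{width}(T_i)\leq f(A)$. A direct inspection of the $x$-coordinate assignment shows that the spine prefix $v_1,\dots,v_{\pi(s)-1}$ together with its subtrees $T_1,\dots,T_{\pi(s)-1}$ occupies the strip $-2\omega-2\leq x\leq -1$, exactly as in the weak bell-like algorithm of Section~\ref{se:weak-star-shaped}; the node $v_{\pi(s)}$ is the unique node on $x=0$; the rotated drawing of $T_{\pi(s)}$ occupies $-\omega\leq x\leq -1$; the nodes $v_{\pi(s)+1},\dots,v_{k-1}$ together with $P_R$ occupy $x=-1$; the drawings of $T_{\pi(s)+1},\dots,T_{k-1}$ occupy $-\omega-1\leq x\leq -2$; and the drawings of $L_1,\dots,L_p$ occupy $-f(n-A)-1\leq x\leq -2$. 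Hence the rightmost column used is $x=0$ and the leftmost is $\min\{-2\omega-2,\,-f(n-A)-1\}$, giving total width at most $\max\{2\omega+3,\,f(n-A)+2\}\leq 3+\max\{2f(A),f(n-A)\}$.

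To verify that $\Gamma$ is star-shaped, I would check Property~2 node by node. For $v_i$ with $i\leq\pi(s)-2$ the visibility argument is exactly that of the weak bell-like algorithm: either $v_i$ lies above and on the appropriate side of a flat subtree drawing, whose leftmost or rightmost path is on $B_l$ or $B_r$ and strictly monotone in $y$, or it lies above and on the appropriate side of one of the bell-like drawings of $T_1$ or $T_{\pi(1)+1}$. For $v_i$ with $i\geq\pi(s)+2$ and for nodes of $P_R$, the visibility argument of Lemma~\ref{le:strong-flat-05} applies directly, since their non-spine subtrees are flat drawings placed to their left with the right sides of their bounding boxes one unit to the left of the spine column. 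The novel cases are $v_{\pi(s)-1}$, $v_{\pi(s)}$, and $v_{\pi(s)+1}$: here visibility uses the bell-like property of $T_{\pi(s)-1}$ (or of $T_{\pi(s-2)}$ in the sub-case $\pi(s-1)=\pi(s)-1$), of the $180^\circ$-rotated $T_{\pi(s)}$, and of $T_{\pi(s)+1}$. Each of these subtrees is placed so that the relevant spine node lies strictly above-and-to-one-side of its bounding box, so visibility follows from the definition of bell-like. The bell-like property of $\Gamma$ itself is then immediate: $r_T=v_1$ is the first element of $\sigma$ and hence lies on $B_t(\Gamma)$, and the leftmost and rightmost paths of $T$ descend monotonically in $y$ along $B_l(\Gamma)$ and $B_r(\Gamma)$, so any $p_u^*$ above-and-left of the drawing and any $p_v^*$ above-and-right of it satisfy Property~4.

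The main technical obstacle is the star-shaped verification in the bend region, and in particular a careful case analysis for the two sub-cases $\pi(s-1)<\pi(s)-1$ and $\pi(s-1)=\pi(s)-1$ of the ordering $\sigma$, each of which stacks the subtrees of the bend region differently above the pair $v_{\pi(s)-1},v_{\pi(s)}$. The hypothesis $s\geq 5$ enters here: it guarantees that the bell-like subtrees $T_1$ and $T_{\pi(1)+1}$, which must sit strictly above the bend region to preserve the bell-like shape of $\Gamma$, are separated from the bend region by enough switches of the spine that their two-column placement does not interfere with the placement of the rotated $T_{\pi(s)}$, of $T_{\pi(s-2)}$ (or $T_{\pi(s)-1}$), and of $T_{\pi(s)+1}$. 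Verifying planarity and visibility then reduces to the same kind of local inspection carried out in Lemma~\ref{le:strong-bell-like-05}, but repeated for each of the two sub-cases of $\sigma$.
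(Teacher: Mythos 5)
Your proposal follows essentially the same route as the paper's proof: the same height argument, the identical column-by-column width accounting yielding $\max\{2\omega+3,\,f(n-A)+2\}\leq 3+\max\{2f(A),f(n-A)\}$, a local visibility inspection concentrated on the bend region split into the two sub-cases of $\sigma$, and the same explanation of why $s\geq 5$ is needed (to keep $T_1$ and $T_{\pi(1)+1}$ clear of the bend, i.e.\ $\pi(1)+1<\pi(s-2)$). One small correction: $T_{\pi(s)-1}$ and $T_{\pi(s-2)}$ are drawn by the strong \emph{flat} algorithm, not the bell-like one --- in the paper only the rotated $T_{\pi(s)}$ and $T_{\pi(s)+1}$ supply bell-likeness in the bend region, and the critical visibility checks are performed at $v_{\pi(s-1)}$ and $v_{\pi(s-2)}$ rather than at $v_{\pi(s)-1}$.
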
 

\begin{proof}
It is readily seen that $\Gamma$ is star-shaped and bell-like. Most interestingly:

\begin{itemize}
\item If $\pi(s-1)<\pi(s)-1$, then $v_{\pi(s-1)}$ sees all the nodes of its right-left path that are not adjacent to it. Indeed, the subpath $(v_{\pi(s-1)+1},\dots,v_{\pi(s)-1})$ of the right-left path of $v_{\pi(s-1)}$ is represented by a straight-line segment on the vertical line $x=-\omega-1$, which is one unit to the right of $v_{\pi(s-1)}$, so that $v_{\pi(s)-1}$ is the point of this segment with the smallest $y$-coordinate and is above $v_{\pi(s-1)}$; hence, this segment does not block the visibility between $v_{\pi(s-1)}$ and $v_{\pi(s)}$, which has the same $y$-coordinate as $v_{\pi(s)-1}$ and is to the right of it, and between $v_{\pi(s-1)}$ and $v_{\pi(s)+1}$, which is below $v_{\pi(s-1)}$. Finally, $v_{\pi(s-1)}$ sees all the nodes of the leftmost path of $T_{\pi(s)+1}$, given that the drawing of $T_{\pi(s)+1}$ is bell-like and that $v_{\pi(s-1)}$ lies to the left and above the left side and the top side of the bounding box of the drawing of $T_{\pi(s)+1}$, respectively (note that $x(v_{\pi(s-1)})=-\omega-2$, while $T_{\pi(s)+1}$ has $x$-coordinates in the range $-\omega-1\leq x\leq -2$).

\item If $\pi(s-1)=\pi(s)-1$, then $v_{\pi(s-2)}$ sees all the nodes of its left-right path that are not adjacent to it. Indeed, the subpath $(v_{\pi(s-2)+1},\dots,v_{\pi(s-1)})$ of the left-right path of $v_{\pi(s-2)}$ is represented by a straight-line segment on the vertical line $x=-\omega-2$, which is one unit to the left of $v_{\pi(s-2)}$; hence, this segment does not block the visibility between $v_{\pi(s-2)}$ and $v_{\pi(s)}$, which is to the right of $v_{\pi(s-2)}$ and below it. Finally, $v_{\pi(s-2)}$ sees all the nodes of the rightmost path of $T_{\pi(s)}$, given that the drawing of $T_{\pi(s)}$ is bell-like and is rotated by $180^{\circ}$, and that $v_{\pi(s-2)}$ lies to the left and below the left side and the bottom side of the bounding box of the drawing of $T_{\pi(s)}$, respectively. 
\end{itemize}

We remark that, if $\pi(s-1)=\pi(s)-1$, then the algorithm constructs a flat star-shaped drawing of $T_{\pi(s-2)}$ and places this drawing so that the bottom side of its bounding box is above $v_{\pi(s-2)}$, in order to ``make space'' for the drawing of $T_{\pi(s)}$. On the other hand, in order to ensure the bell-like property for $\Gamma$, the construction employs a bell-like drawing of $T_{\pi(1)+1}$. Hence, we need $\pi(1)+1$ to be smaller than $\pi(s-2)$. However, we have $\pi(1)+1\leq \pi(2)$ and $\pi(2)<\pi(3)$, hence $\pi(1)+1<\pi(s-2)$ holds true if $s\geq 5$, which is the case by hypothesis.  

The height of $\Gamma$ is at most $n$, since every grid row intersecting $\Gamma$ contains a node of $P$ or intersects a subtree of $P$. Concerning the width, note that $\Gamma$ intersects no grid line $x=i$ with $i>0$. Consider the smallest $i$ such that the line $\ell$ with equation $x=i$ intersects $B(\Gamma)$. 

\begin{itemize}
\item Suppose that $\ell$ intersects a tree among $T_{1},\dots,T_{\pi(s)}$. Each of these trees lies either between the lines $x=-\omega$ and $x=-1$, or between the lines $x=-2\omega-2$ and $x=-\omega-3$; hence $i\geq -2\omega-2$ and the width of $\Gamma$ is at most $3+2\omega \leq 3 + 2f(A)$, where $\omega\leq f(A)$ holds true since every tree among $T_{1},\dots,T_{\pi(s)}$ has at most $A$ nodes and by the definition of the function $f(n)$. 
\item Next, suppose that $\ell$ intersects a tree among $T_{\pi(s)+1},\dots,T_{k-1}$. The drawing of each of these trees has the right side of its bounding box on the line $x=-2$; also, each of these trees has at most $A$ nodes, hence it has width at most $f(A)$. It follows that the width of $\Gamma$ is at most $2+f(A)$. 
\item Finally, suppose that $\ell$ intersects a tree among $L_1,\dots,L_p$. The drawing of each of these trees has the right side of its bounding box on the line $x=-2$; also, each of these trees has at most $n-A$ nodes, hence it has width at most $f(n-A)$. It follows that the width of $\Gamma$ is at most $2+f(n-A)$. 
\end{itemize}

This concludes the proof of the lemma.
\end{proof}

It remains to describe the strong flat algorithm for the case in which $s\geq 8$.

{\bf Strong flat algorithm with $\bf s\geq 8$.} The geometric construction for this case is {\em the same} as the one for the inductive case of the (non-strong) flat algorithm from Section~\ref{se:weak-star-shaped}, however the drawing algorithms which are recursively invoked by the two constructions differ; refer to Fig.~\ref{fig:strong-s6}(c). 

First, every subtree of the leftmost and rightmost paths of $T$ different from $T_{\pi(1)+1}$ is recursively drawn by means of the strong flat algorithm. Denote by $C$ and $D$ the left and right subtrees of $v_{\pi(1)+2}$. Bell-like star-shaped drawings of $C$ and $D$ are recursively constructed by means of the strong bell-like algorithm, however there is one difference in the recursive construction of these drawings. Note that the spine $P$ of $T$ ``enters'' exactly one between $C$ and $D$ (recall that $P$ contains the nodes $v_{\pi(1)},v_{\pi(1)+1},v_{\pi(1)+2},v_{\pi(1)+3}$, hence $v_{\pi(1)+3}$ is the root of $C$ or $D$); let $X$ be the one between $C$ and $D$ whose root is $v_{\pi(1)+3}$ and $Y$ be the one between $C$ and $D$ whose root is different from $v_{\pi(1)+3}$. Then the strong bell-like algorithm is applied recursively for $Y$, while $X$ is drawn by means of the construction of the strong bell-like algorithm with $s\geq 5$, by using the subpath of $P$ between $v_{\pi(1)+3}$ and $v_k$ as the spine for it (that is, the spine is not recomputed for $X$ according to Lemma~\ref{le:decomposition}, but the path $(v_{\pi(1)+3},v_{\pi(1)+4},\dots,v_k)$ is used as spine instead). Notice that, since $\pi(1)<\pi(2)<\pi(3)<\pi(4)$, we have that $\pi(1)+3\leq \pi(4)$, hence the spine $(v_{\pi(1)+3},v_{\pi(1)+4},\dots,v_k)$ contains at least $5$ switches. 

The remainder of the construction is the same as for the inductive case of the (non-strong) flat algorithm from Section~\ref{se:weak-star-shaped}. Indeed, the nodes of the leftmost and rightmost paths of $T$ are assigned $x$-coordinate equal to $0$; further, all the recursively drawn subtrees are embedded in the plane so that the left sides of their bounding boxes lie on the line $x=1$ (the drawing of $D$ is rotated by $180^{\circ}$ before embedding it). Node $v_{\pi(1)+2}$ is assigned $x$-coordinate equal to $1$ plus the maximum $x$-coordinate assigned to any other node in the drawing. Every node different from $v_1$ and $v_{\pi(1)}$ is assigned the same $y$-coordinate as its right or left child, depending on whether it belongs to the leftmost or rightmost path of $T$, respectively. Distinct subtrees are arranged vertically so that, from bottom to top, the nodes of the leftmost path of $T$ appear first -- in reverse order -- and then the nodes of the rightmost path of $T$ appear next -- in straight order. Depending on whether $v_2$ is the left child (see Fig.~\ref{fig:strong-s6}(c) top) or the right child (see Fig.~\ref{fig:strong-s6}(c) bottom) of $v_1$, we respectively have that: 

\begin{itemize}
\item The bottom side of the bounding box of $T_{\pi(1)}$ is one unit above the top side of the bounding box of $D$; the bottom side of the bounding box of $D$ is one unit above $v_{\pi(1)}$; $v_{\pi(1)}$ is one unit above $v_{\pi(1)+1}$ and $v_{\pi(1)+2}$, which have the same $y$-coordinate; $v_{\pi(1)+1}$ and $v_{\pi(1)+2}$ are one unit above the top side of the bounding box of $C$; and the bottom side of the bounding box of $C$ is one unit above the top side of the bounding box of the left child of $v_{\pi(1)+1}$ and of its right subtree.
\item The top side of the bounding box of $T_{\pi(1)}$ is one unit below the bottom side of the bounding box of $C$; the top side of the bounding box of $C$ is one unit below $v_{\pi(1)}$; $v_{\pi(1)}$ is one unit below $v_{\pi(1)+1}$ and $v_{\pi(1)+2}$, which have the same $y$-coordinate; $v_{\pi(1)+1}$ and $v_{\pi(1)+2}$ are one unit below the bottom side of the bounding box of $D$; and the top side of the bounding box of $D$ is one unit below the bottom side of the bounding box of the right child of $v_{\pi(1)+1}$ and of its left subtree.
\end{itemize}

We have the following.

\begin{lemma} \label{le:strong-flat-6}
Suppose that $s\geq 8$. Then the strong flat algorithm constructs a bell-like star-shaped drawing whose height is at most $n$ and whose width is at most $5 + \max\{2f(A),f(n-A)\}$. 
\end{lemma}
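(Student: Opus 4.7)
The plan is to follow the same three-step template used in Lemmas~\ref{le:strong-flat-05}, \ref{le:strong-bell-like-05}, and \ref{le:strong-bell-like-6}: verify that $\Gamma$ is a flat star-shaped drawing, bound its height by $n$, and bound its width by $5 + \max\{2f(A), f(n-A)\}$. Since the author has explicitly stated that the geometric construction here is identical to the inductive case of the non-strong flat algorithm of Section~\ref{se:weak-star-shaped}, I would import verbatim the structural and visibility arguments from the proof of Lemma~\ref{le:weak-star-shaped}, and only need to replace the LR-width-based subtree bounds by size-based bounds using the definition of $f$ and the lemmata already proved earlier in Section~\ref{se:strong-star-shaped}.

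First, the flat and star-shaped properties will follow from routine visibility arguments: the leftmost and rightmost paths of $T$ lie on the line $x=0$ with monotone $y$-coordinates around $r_T$; every node on these paths sees the non-adjacent nodes of its left-right and right-left paths because those nodes lie inside recursively drawn subtree drawings with their left sides on $x=1$; visibility inside each recursive subtree follows by induction; and $v_{\pi(1)+1}$ sees its right-left path through the bell-like drawings of $C$ and $D$ together with $v_{\pi(1)+2}$, exactly as for the analogous node $v_j$ in Lemma~\ref{le:weak-star-shaped}. The height bound is immediate: every horizontal grid line intersecting $\Gamma$ either passes through a node on the leftmost or rightmost path of $T$ or through a recursively drawn subtree, whose height is inductively at most its number of nodes.

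Next, I would obtain the width bound by classifying each recursively drawn subtree. The tree $Y$ (the subtree among $C$ and $D$ not containing $v_{\pi(1)+3}$) equals $T_{\pi(1)+2}$, an off-spine subtree attached to a non-terminal spine node, and hence has at most $A$ nodes by Lemma~\ref{le:decomposition}; drawing it by the strong bell-like algorithm therefore produces a drawing of width at most $f(A)$ by the definition of $f$. Every other subtree drawn by the strong flat algorithm lies inside some $T_i$ with $i \leq \pi(1)+1$ or inside $T_1$, since the leftmost path of $T$ beyond $v_{\pi(1)+1}$ stays inside $T_{\pi(1)+1}$ and the rightmost path beyond $v_1$ stays inside $T_1$; each such subtree has at most $A$ nodes and thus width at most $f(A)$.

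The main obstacle is bounding the width of the subtree $X$ (the one among $C,D$ containing $v_{\pi(1)+3}$), which is drawn via the strong bell-like construction for $s\geq 5$, using the suffix $(v_{\pi(1)+3},\dots,v_k)$ of $P$ as its spine. To invoke Lemma~\ref{le:strong-bell-like-6} I must check that this suffix has at least $5$ switches; this is precisely where the hypothesis $s \geq 8$ is critical. Since $\pi(1) < \pi(2) < \pi(3) < \pi(4)$ we have $\pi(1)+3 \leq \pi(4)$, so every switch of the original spine with index at least $4$ survives in the suffix, giving at least $s-3 \geq 5$ switches. Lemma~\ref{le:strong-bell-like-6}, combined with the monotonicity of $f$ and the inequality $|X| \leq n$, then bounds the width of $X$'s drawing by $3 + \max\{2f(A), f(n-A)\}$. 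Finally, since the leftmost and rightmost paths of $T$ lie on $x=0$, each recursively drawn subtree has left side of its bounding box on $x=1$, and $v_{\pi(1)+2}$ is placed one unit to the right of every other node, the total width of $\Gamma$ equals $2$ plus the maximum recursive-subtree width, which is at most $2 + (3 + \max\{2f(A), f(n-A)\}) = 5 + \max\{2f(A), f(n-A)\}$, as required.
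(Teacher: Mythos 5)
Your proposal is correct and follows essentially the same route as the paper: the paper's own proof likewise takes the width to be $2$ (for the column $x=0$ and the column of $v_{\pi(1)+2}$) plus the maximum width of a recursively drawn subtree, bounding the latter by $\max\{f(A),\,3+\max\{2f(A),f(n-A)\}\}$ via Lemma~\ref{le:strong-bell-like-6} applied to $X$ (whose inherited spine has at least $s-3\geq 5$ switches because $\pi(1)+3\leq\pi(4)$), and dismisses the flatness, star-shapedness, and height bounds as routine. Your write-up merely spells out details the paper leaves implicit (e.g., that $Y=T_{\pi(1)+2}$ and that all other recursively drawn subtrees sit inside some $T_i$), so there is nothing substantive to add.
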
 

\begin{proof}
It is readily seen that $\Gamma$ is star-shaped and flat, and that its height is at most $n$. The width of the drawing is given by $2$, corresponding to the grid column $x=0$ and to the grid column containing $v_{\pi(1)+2}$, plus the width of a recursively drawn subtree. The latter is the maximum between $f(A)$ (this is the maximum width of any tree different from $X$ that is recursively drawn) and $3+\max\{2f(A),f(n-A)\}$, which is the maximum width of the constructed drawing of $X$, as given by Lemma~\ref{le:strong-bell-like-6}. This concludes the proof of the lemma.
\end{proof}

We are now ready to state the main theorem of this section.

\begin{theorem} \label{th:lr-outerplanar}
Every $n$-vertex outerplanar graph admits an outerplanar straight-line drawing with area $O\left(n\cdot 2^{\sqrt{2 \log_2 n}} \sqrt{\log n}\right)$. 
\end{theorem}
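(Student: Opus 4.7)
The plan is to apply the algorithms of Section~\ref{se:strong-star-shaped} to the dual tree of $G$ and then invoke Lemma~\ref{le:star-shaped-correspondence}. I may assume $G$ is maximal outerplanar (otherwise I triangulate it); let $T$ be its ordered rooted binary dual tree, which has $n-2$ nodes. Applying the strong bell-like algorithm will produce a star-shaped drawing $\Gamma_T$ of $T$ of some width $w$ and height at most $n-2$. Since $\Gamma_T$ is bell-like, the points $p^*_u$ and $p^*_v$ can be placed just above $B_t(\Gamma_T)$ and just outside $B_l(\Gamma_T)$ and $B_r(\Gamma_T)$, enlarging the bounding box by only $O(1)$ in each dimension; by Lemma~\ref{le:star-shaped-correspondence}, $G$ then admits an outerplanar straight-line drawing of area $O(n\cdot w)$. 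It then remains to show $w = O(2^{\sqrt{2\log_2 n}}\sqrt{\log n})$.

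Let $f(n)$ denote the maximum width of a drawing produced by either the strong bell-like or the strong flat algorithm on an $n$-node ordered rooted binary tree; the two algorithms are mutually recursive and must be analyzed together. Combining Lemmata~\ref{le:strong-flat-05}, \ref{le:strong-bell-like-05}, \ref{le:strong-bell-like-6}, and~\ref{le:strong-flat-6} will yield the uniform recurrence
\[
f(n) \;\leq\; 8 + \max\{\,2f(A),\, f(n-A)\,\},
\]
valid for every $A \in \{1,\dots,n-1\}$ and every value of $s$, since $8 \geq \max\{8,5,3,5\}$ and $\max\{f(A),f(n-A)\} \leq \max\{2f(A), f(n-A)\}$. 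The plan is to choose $A = \lceil n/2^{\sqrt{2\log_2 n}}\rceil$ at each level of recursion, and to prove by strong induction on $n$ that $f(n) \leq F^*(n) := C\cdot 2^{\sqrt{2\log_2 n}}\sqrt{\log_2 n}$ for a sufficiently large constant $C$, with base cases absorbed by the trivial bound $f(n) \leq n$ for $n$ up to a constant threshold.

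Set $m = \log_2 n$. For the first term of the max, $\log_2 A = m-\sqrt{2m}$, so Taylor expanding $\sqrt{1-x}\leq 1-x/2$ gives $\sqrt{2\log_2 A}\leq \sqrt{2m}-1$ and $\sqrt{\log_2 A}\leq \sqrt{m}-1/\sqrt{2}$, whence $2F^*(A) \leq F^*(n) - C\cdot 2^{\sqrt{2m}}/\sqrt{2}$. For the second term, using $\log_2(1-x) \leq -x/\ln 2$ gives $\log_2(n-A) \leq m - 2^{-\sqrt{2m}}/\ln 2$, and a similar expansion gives $F^*(n-A)\leq F^*(n) - C/(2\sqrt{2})$. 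Taking $C$ large enough ensures both $2F^*(A)+8\leq F^*(n)$ and $F^*(n-A)+8\leq F^*(n)$, closing the induction. The hard part will be pinning down $A$: if $A$ were chosen much larger, the term $2f(A)$ would exceed $F^*(n)$; if $A$ were chosen much smaller, the progress on the $n-A$ subproblem would be too slow to absorb the additive $8$ over the many levels of recursion. The value $A = n/2^{\sqrt{2\log_2 n}}$ is essentially the unique choice that keeps both terms of the max at $F^*(n)-\Theta(1)$. Combining the resulting width bound $f(n-2) = O(2^{\sqrt{2\log_2 n}}\sqrt{\log n})$ with the $O(n)$ height bound on $\Gamma_T$ yields the claimed area bound for $G$.
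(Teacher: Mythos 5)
Your proposal is correct and follows essentially the same route as the paper: reduce to a star-shaped drawing of the dual tree via Lemma~\ref{le:star-shaped-correspondence}, combine Lemmata~\ref{le:strong-flat-05}--\ref{le:strong-flat-6} into the recurrence $f(n)\leq 8+\max\{2f(A),f(n-A)\}$, and set $A\approx n/2^{\sqrt{2\log_2 n}}$. The only (immaterial) differences are that you invoke the strong bell-like rather than the strong flat algorithm at the top level, and you solve the recurrence by direct induction on the closed form $C\cdot 2^{\sqrt{2\log_2 n}}\sqrt{\log_2 n}$ with Taylor estimates, whereas the paper first unrolls the $f(n-A)$ branch with a fixed $A$ to get $f(n)\leq 2f(A)+8n/A+8$ and then substitutes $m=2^{\sqrt{2\log_2 n}}$ to reach the standard recurrence $g(m)\leq 2g(m/2)+O(m)$.
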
 

\begin{proof}
Let $G$ be an $n$-vertex outerplanar graph and let $T$ be its dual tree. We apply the strong flat algorithm to $T$ (with a parameter $A$ that will be specified shortly), thus obtaining a drawing $\Gamma$. Lemmata~\ref{le:strong-flat-05}--\ref{le:strong-flat-6} ensure that $\Gamma$ is a flat star-shaped drawing with height $O(n)$. Points $p^*_{u}$ and $p^*_{v}$ satisfying Property~4 of a star-shaped drawing can be chosen in $\Gamma$ (in fact in any flat star-shaped drawing) so that the width and the height only increase by a constant number of units. Due to this consideration and to Lemma~\ref{le:star-shaped-correspondence}, in order to conclude the proof of the theorem it only remains to argue that the width of $\Gamma$ is in $O\left(2^{\sqrt{2 \log_2 n}} \sqrt{\log n}\right)$. This proof follows almost verbatim a proof by Chan~\cite{c-anlabdbt-99}. Recall that we denote by $f(n)$ the maximum width of a drawing of an $n$-node ordered rooted binary tree constructed by means of the strong flat or bell-like algorithm.

By Lemmata~\ref{le:strong-flat-05}--\ref{le:strong-flat-6} we have that $f(n)\leq \max\{8+2f(A),8+f(n-A)\}$. Iterating over the second term with the same value of $A$ we get $f(n)\leq \max\{8+2f(A),8+f(n-A)\}\leq \max\{8+2f(A),16+f(n-2A)\}\leq \max\{8+2f(A),24+f(n-3A)\}\leq \dots\leq \max \{8+2f(A),8(\frac{n}{A}-1)+f(A)\}\leq 2f(A)+8\frac{n}{A}+8$. 

We now set $A= \frac{n}{2^{\sqrt{2 \log_2 n}}}$, which gives us the recurrence $$f(n)\leq 2f\left(\frac{n}{2^{\sqrt{2 \log_2 n}}}\right)+ 8\cdot 2^{\sqrt{2 \log_2 n}}+8.$$ 

We remark that the iteration with the same value of $A$ mentioned in the computation of the recursive formula corresponds to using $A= \frac{n}{2^{\sqrt{2 \log_2 n}}}$ whenever we need to recursively draw a tree that has more than $\frac{n}{2^{\sqrt{2 \log_2 n}}}$ nodes. Once the tree size drops to $\frac{n}{2^{\sqrt{2 \log_2 n}}}$ or less, the drawing algorithms are applied recursively by recomputing the parameter $A$ based on the actual number of nodes in the tree that has to be drawn. 

It remains to solve the recurrence equation, which is done again as by Chan~\cite{c-anlabdbt-99}. Namely, set $m=2^{\sqrt{2 \log_2 n}}$, which is equivalent to $n=2^{\frac{(\log_2 m)^2}{2}}$, and set $g(m)=f(n)$. Then 

\begin{eqnarray*}
	g(m)&\leq&2f\left(\frac{2^{\frac{(\log_2 m)^2}{2}}}{m}\right)+8m+8=2f\left(2^{\left(\frac{(\log_2 m)^2}{2}-\log_2 m\right)}\right)+8m+8\\
	&\leq& 2f\left(2^{\left(\frac{(\log_2 m -1)^2}{2}\right)}\right)+8m+8=2g\left(\frac{m}{2}\right)+8m+8.
\end{eqnarray*}

The inequality $g(m)\leq 2g\left(\frac{m}{2}\right)+8m+8$ trivially implies that $g(m)\in O(m\log m)$, and hence that $f(n)\in O(2^{\sqrt{2 \log_2 n}} \sqrt{\log n})$, which concludes the proof of the theorem. 
\end{proof}

We conclude the section by remarking that the function $2^{\sqrt{2 \log_2 n}} \sqrt{\log n}$ is asymptotically smaller than any polynomial function of $n$; that is, for any constant $\varepsilon>0$, it holds true that $2^{\sqrt{2 \log_2 n}} \sqrt{\log n}<n^{\varepsilon}$ for sufficiently large $n$.

\section{Conclusions} \label{se:conclusions}

In the first part of the paper we studied LR-drawings of ordered rooted binary trees. We proved that an LR-drawing with optimal width for an $n$-node ordered rooted binary tree can be constructed in $O(n^{1.48})$ time. It would be interesting to improve the running time to an almost-linear bound; this might however require new insights on the structure of LR-drawings. We also proved that there exist $n$-node ordered rooted binary trees requiring $\Omega(n^{0.418})$ width in any LR-drawing; this bound is close to the upper bound of $O(n^{0.48})$ due to Chan~\cite{c-anlabdbt-99}. It seems unlikely that Chan's bound is tight (he writes ``The exponent $p=0.48$ is certainly not the best possible'') and the experimental evaluation we conducted seems to confirm that; thus the quest for LR-drawings with $o(n^{0.48})$ width is a compelling research direction. 

In the second part of the paper we established a strong connection between LR-drawings of ordered rooted binary trees and outerplanar straight-line drawings of outerplanar graphs. Namely we proved that, if an ordered rooted binary tree $T$ has an LR-drawing with a certain width and area, then the outerplanar graph $G$ whose dual tree is $T$ has an outerplanar straight-line drawing with asymptotically the same width and area. We also proved that $n$-vertex outerplanar graphs admit outerplanar straight-line drawings in almost-linear area; our area upper bound is $O\left(n\cdot 2^{\sqrt{2 \log_2 n}} \sqrt{\log n}\right)$. We believe that an $O(n\log n)$ area bound cannot be achieved by only squeezing the drawing in one coordinate direction while keeping the size of the drawing linear in the other direction; hence, we find very interesting to understand whether every outerplanar graph admits an outerplanar straight-line drawing whose width and height are both sub-linear. We remark that a similar question has a negative answer for general {\em planar graphs}~\cite{Val81} and even for {\em series-parallel graphs}, that are graphs that exclude $K_{4}$ as a minor (and form hence a super-class of outerplanar graphs): There exist $n$-vertex series-parallel graphs that require $\Omega(n)$ size in one coordinate direction in any straight-line planar drawing~\cite{f-lbarspg-10}.

\bibliographystyle{abbrv}
\bibliography{bibliography}

\end{document}